\numberwithin{equation}{section}
\theoremstyle{plain}
\newtheorem{thm}{Theorem}[section]
\newtheorem{proposition}{Proposition}[section]
\newtheorem{lemma}{Lemma}[section]
\newtheorem{remark}{Remark}[section]
\begin{document}


\title{The Limit Imbalanced Logistic Regression by Binary Predictors and its fast Lasso computation}

\author[1]{
Vincent Runge\footnote{E-mail: runge.vincent@gmail.com}}

\affil[1]{LaMME - Laboratoire de Math\'ematiques et Mod\'elisation d'Evry.\newline UEVE - Universit\'e d'Evry-Val-d'Essonne.}
\date{}
\maketitle

\begin{abstract}
In this work, we introduce a modified (rescaled) likelihood for imbalanced logistic regression. This new approach makes easier the use of exponential priors and the computation of lasso regularization path. Precisely, we study a limiting behavior for which class imbalance is artificially increased by replication of the majority class observations. If some strong overlap conditions are satisfied, the maximum likelihood estimate converges towards a finite value close to the initial one (intercept excluded) as shown by simulations with binary predictors. This solution corresponds to the extremum of a strictly concave function that we refer to as "rescaled" likelihood. In this context, the use of exponential priors has a clear interpretation as a shift on the predictor means for the minority class. Thanks to the simple binary structure, some random designs give analytic path estimators for the lasso regularization problem. An effective approximate path algorithm by piecewise logarithmic functions based on matrix inversions is also presented. This work was motivated by its potential application to spontaneous reports databases in a pharmacovigilance context.
\end{abstract}

Keywords: path estimator, pharamacovigilance model, piecewise logarithmic approximate path, limit class imbalance, rescaled likelihood, spontaneous reports database, square exact solution.

MS classification : Primary 62J12, 62F12, 62F15; secondary 34E05, 49M29, 62P10.

\section{Introduction}

If the response $y=1$ is very rare compared with the response $y=0$, we are in presence of a rare event configuration also called class imbalance. This problem recently got computer scientists' attention: they aimed at reducing computational costs by bypassing the class imbalance with resampling methods \cite{Guo} \cite{Oommen} \cite{Elrahman}. With these methods, the variance in estimating model parameters increases. Statisticians are aware of this problem and complex procedures such as local case-control sampling were proposed \cite{Fithian} (a method initiated in epidemiology \cite{Mantel}). 

In a recent work (2007) by Art B. Owen \cite{Owen}, the opposite approach is considered: the class imbalance is infinitely increased in order to reach the theoretical distribution of the majority class observations. Owen proved that under some overlap conditions the model parameters are finite (apart from the intercept) and built a limit system of equations related to exponential tilting, whose solution is the new estimate. The resulting equations include the  distribution of the infinite class expressed through integrals, which are not easy to infer. This may explain that this work was broadly ignored (The author found it when Sections \ref{section2} and \ref{section3} were already completed).  

In our approach, the observations of the majority class are infinitely replicated and the Owen's limit distribution becomes the observed distribution. This situation is a kind of degenerate case between resampling (we repeat observations) and infinitely class imbalance (the observed distribution is chosen as the theoretical one). Unlike Owen's result, our limit normal equations can be interpretated as the first order conditions of a new likelihood.\\

The idea of this work comes from the analysis of highly imbalanced binary spontaneous reports databases. Such databases are gathered by many countries and institutions (FDA, MHRA, WHO,...). Imbalanced logistic regression with binary predictors gives maximum likelihood estimate (MLE) very close to its limit imbalanced counterpart. This result makes possible the study of lasso-type regularization problem and the development of effective algorithms to provide model selection.

So far, only disproportionality methods are routinely used \cite{Madigan} for spontaneous report databases: predictors are analysed one by one, leading to a great number of false positive signals \cite{Harpaz}. Mathematical tools adjusted to binary data for regression are surprisingly barely developed by scientists (only boolean matrices have been studied by some authors \cite{Ki}). This results in an inflation of empiric methods using lasso regularization in recent years (from \cite{Caster} to \cite{Ahmed}). This is a worrying trend because recommendations made by these experts shift towards more complicated experimental methods and time-consuming algorithms, not towards a deeper mathematical understanding. This work is motivated by the need to better analyse this kind of applied problem. 

The paper contains three main sections in which we present the following results:\\

\begin{itemize}
\item 
In Section \ref{section2}, we investigate the properties of the logistic normal equations with binary predictors. Simple existence and uniqueness conditions of Silvapulle's type are found and some exact solutions presented. An invariance property in presence of intercept links this particular solution (called "square solution") to the limit imbalanced problem. We then acquaint ourselves with the issue of variance inflation of the imbalanced problem by computing the Fisher information.\\

\item In Section \ref{section3}, we derive Owen-type equations with a first order term evaluating the convergence rate. For the limit system of equations, the existence and uniqueness of the solution is proved with a new method leading to the minimization of a Kullback-Leibler divergence under linear constraints. A rescaling procedure on the initial likelihood and the previously found divergence justify the introduction of a rescaled likelihood corresponding to our limit imbalanced logistic regression problem. In a Bayesian framework, the Jeffreys penalty does not significantly decrease the variance of the estimator but other more appropriate priors, such that exponential ones, could help to reduce it (chosen according to the situation). The closeness in simulation between limit estimates and classical estimates compels us to go one step further with the study of regularization paths, in particular if the model is known to be sparse.\\

\item In Section \ref{section4}, we look at a lasso regularization problem for the rescaled likelihood, which has a clear interpretation as a shift on the predictor means for the class of interest. We succeed in finding some path estimators in a few particular cases (independence and orthogonal design). In presence of correlation, we present an effective path following algorithm by piecewise logarithmic functions giving precise estimates. We conclude by explaining the need of an analysis of the correlation structure between predictors. This leads to simple algorithmic procedures with small computational costs for which many different prior penalties could be easily tested. Two examples are given using the French spontaneous reports database.\\

\end{itemize}

The expressions "infinitely imbalance" and "limit imbalance" are considered as synonymous, although we recommend the use of the second one in our context due to the simple unique limit we impose and an analogy with hydrodynamic limits (in fluid dynamics) while the first expression is related to the underlying distribution introduced by Owen. \\

We conclude this article by discussing the many opportunities that arise with the introduction of a rescaled likelihood in a Bayesian context and of the path following algorithm by logarithmic functions.

\section{The logistic regression by binary predictors}
\label{section2}
\subsection{Logistic normal equations}

The binary logistic regression (BLR) problem consists in the determination of coefficients \(\hat{\beta}\) maximizing a smooth and concave likelihood function given by the relation
\[L(\beta|I_0,I_1,n^0,n^1) =  \prod_{i = 1}^{q_1}\left(\frac{e^{(I_1\beta)_i}}{1+e^{(I_1\beta)_i}}\right)^{n_i^1}\prod_{i = 1}^{q_0}\left(1+e^{(I_0\beta)_i}\right)^{-n_i^0}\,,\]
where \(\beta = (\beta_i) \in \mathbb{R}^{p+1}\) is indexed from zero with \(\beta_0\) corresponding to the intercept. Binary design matrices \(I_1 \in \mathcal{M}_{q_1 \times (p+1)}(\mathbb{B})\) and \(I_0 \in \mathcal{M}_{q_0 \times (p+1)}(\mathbb{B})\) with \(\mathbb{B} = \{0,1\}\) are of full rank: they aggregate the \(p\) binary predictors. Vectors of weights \(n^0=(n_1^0,...,n_{q_0}^0)^T \in (\mathbb{N}^*)^{q_0}\) and \(n^1=(n_1^1,...,n_{q_1}^1)^T \in (\mathbb{N}^*)^{q_1}\) save repetitions for distinct observations in response classes \(0\) and \(1\) separately. The binary structure favours repetitions in the sequence of observations, which justifies these notations. Moreover \((I_0\beta)_i\) is the i-th component of vector \(I_0\beta \in \mathbb{R}^{q_0}\) (the same for \((I_1\beta)_i\)). 

We introduce other notations thereafter used within this article. The modulus of a vector denotes its \(l^1\) norm, while the overline sign on lower cases stands for \(l^1\) normalization. For example \(|n^1| = \sum_{i=1}^{q_1}n_i^1\) and \(\overline{n}^1_i = n^1_i/|n^1|\) gives the vector \(\overline{n}^1\). $A_i$ is the i-th row of the matrix $A$ and its roman upper case equivalent $\mathtt{A}$ is the matrix $A$ in which the first column filled by ones (associated to the intercept) was removed. We also need \(N^1 = \mathtt{I}_1^Tn^1 \in \mathbb{R}^p\) with $T$ standing for the matrix transpose operator. An important feature in our study is the predictor means vector \(\overline{N}^1\) for class \(1\) obtained by the relation \(\mathtt{I}^T_1 \overline{n}^1 = \overline{N}^1\). For vectors of same size $u,\,v \in \mathbb{R}^q$, $uv$ (resp. $\frac{u}{v}$) is the vector with components $u_k v_k$ (resp. $\frac{u_k}{v_k}$), $k \in \{1,...,q\}$. $\tilde{\beta}$ is the vector $\beta$ without the intercept coefficient $\beta_0$. From Subsection \ref{subsection32}, the notations $I$ and $\mathtt{I}$ for matrices $I_0$ and $\mathtt{I}_0$ respectively are often used (as well as $q$ for integer $q_0$).\\

For ease of calculation, we consider the opposite of the log-likelihood. If \(I_1=I_0=\mathcal{I}\), we have \(q_1=q_0=q\) and we can introduce vectors \(n = n^1 + n^0\) and \(\Delta n = n^1 - n^0\). In this latter case, we write
\[l(\beta)=-\log(L(\beta)) =  |n|\log 2 + \sum_{i = 1}^{q}\left( -\Delta n_i(\frac{1}{2}(\mathcal{I}\beta)_i ) + n_i\log \cosh(\frac{1}{2}(\mathcal{I}\beta)_i)\right)\,,\]
and first order conditions are computed, differentiating $l$ with respect to each \(\beta_j\) coefficient. We obtain
\[0 = \frac{\partial l(\beta)}{\partial \beta_j} = \sum_{i = 1}^{q}\left( -\Delta n_i(\frac{1}{2}\mathcal{I}_{ij} ) + \frac{1}{2} n_i\mathcal{I}_{ij} \tanh(\frac{1}{2}(\mathcal{I}\beta)_i)\right)\,,\quad j \in \{0,...,p\}\,,\]
or in matrix form
\begin{equation}
\mathcal{I}^T \Delta n = \mathcal{I}^T \left(n \tanh(\frac{1}{2} \mathcal{I}\beta )\right)\,.
\label{normal}
\end{equation}
In a general framework with non-identical matrices \(I_0\) and \(I_1\), we likewise derive
\begin{equation}
I_1^Tn^1 - I_0^Tn^0 = I_1^T\left(n^1 \tanh(\frac{1}{2}I_1\beta)\right)+I_0^T\left(n^0 \tanh(\frac{1}{2}I_0\beta)\right)\,.
\label{normalgeneral}
\end{equation}

This system of equations (\ref{normalgeneral}) gathers the so-called logistic normal equations and will be widely used within this article.  

\begin{remark}
These equations are usually presented with a logistic function but we chose another expression to highlight the link with existence and uniqueness conditions.
\end{remark}

\subsection{Existence and uniqueness}

Necessary and sufficient conditions to ensure existence and uniqueness of the MLE are well-known, they were established by Silvapulle in 1981 \cite{Silva}. They consist in satisfying an overlap condition \(C_1\cap C_0 \ne \emptyset\) between the cones 
$$C_1 = \left\{ I_1^T u_1 \,|\, u_1 \in (\mathbb{R}^*_+)^{q_1}\right\}\,\, \hbox{and}\,\,C_0 = \left\{ I_0^T u_0 \,|\, u_0 \in (\mathbb{R}^*_+)^{q_0}\right\}\,.$$

For the BLR problem, a more convenient description is possible:

\begin{thm}
\label{Silvapulle}
The BLR problem admits a unique solution if and only if there exist $n^+ \in (\mathbb{N}^*)^{q_1}$ and $n^* \in (\mathbb{N}^*)^{q_0}$, such that $I_1^T n^+=I_0^T n^*$.
\end{thm}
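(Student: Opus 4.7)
The plan is to reduce the statement to the Silvapulle condition \cite{Silva} recalled just above, namely $C_1\cap C_0\neq\emptyset$, i.e.\ the existence of $u_1\in(\mathbb{R}_+^*)^{q_1}$ and $u_0\in(\mathbb{R}_+^*)^{q_0}$ with $I_1^T u_1=I_0^T u_0$. It therefore suffices to show that this real-valued cone overlap condition is equivalent to the integer-valued one stated in the theorem.

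The direction ``integer implies real'' is immediate: any $(n^+,n^*)\in(\mathbb{N}^*)^{q_1}\times(\mathbb{N}^*)^{q_0}$ is also a pair of strictly positive real vectors, so the identity $I_1^T n^+=I_0^T n^*$ directly witnesses an element of $C_1\cap C_0$ and Silvapulle's theorem applies.

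For the converse, I would exploit the fact that the entries of $I_1$ and $I_0$ lie in $\mathbb{B}=\{0,1\}\subset\mathbb{Z}$. The set
\[V=\{(u_1,u_0)\in\mathbb{R}^{q_1+q_0}\,|\,I_1^T u_1-I_0^T u_0=0\}\]
is the kernel of an integer matrix, hence admits a basis of rational vectors (for instance through Gaussian elimination performed over $\mathbb{Q}$). The subset $V\cap(\mathbb{R}_+^*)^{q_1+q_0}$ is open in $V$ as the trace on $V$ of the open positive orthant of $\mathbb{R}^{q_1+q_0}$; by assumption it is nonempty. Using the density of $\mathbb{Q}$ in $\mathbb{R}$ applied to the coordinates expressed in a rational basis of $V$, this open subset must contain a rational point $(\tilde u_1,\tilde u_0)\in\mathbb{Q}^{q_1}\times\mathbb{Q}^{q_0}$ with strictly positive entries. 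Multiplying $(\tilde u_1,\tilde u_0)$ by the least common multiple of the denominators of all its components then yields the desired pair $(n^+,n^*)\in(\mathbb{N}^*)^{q_1}\times(\mathbb{N}^*)^{q_0}$ satisfying $I_1^T n^+=I_0^T n^*$. The only delicate point is precisely this density step, which relies on the rationality of a basis of $V$ and is the reason the binary (integer) nature of $I_0,I_1$ enters the argument in an essential way.
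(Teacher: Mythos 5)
Your proof is correct. Both it and the paper's proof follow the same overall strategy --- reduce to Silvapulle's cone condition $C_1\cap C_0\neq\emptyset$, then upgrade a strictly positive real solution to a strictly positive integer one via rational density and clearing of denominators --- but the rationalization step is carried out in a genuinely different way. The paper first selects a rational point $q$ in the open intersection $C_1\cap C_0$ and then rationalizes each fiber $\{u>0\,:\,I_i^Tu=q\}$ separately by an explicit construction: reordering rows so that the first $p+1$ are independent, choosing complement matrices $H_i$ whose last $q_i-p-1$ rows form an identity block, perturbing the corresponding free coordinates into $\mathbb{Q}_+^*$, and inverting the nonsingular rational $(p+1)\times(p+1)$ submatrix to force the remaining coordinates to be rational. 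You instead work in one shot with the kernel $V$ of the stacked integer matrix $\left(I_1^T \mid -I_0^T\right)$: being the kernel of a matrix with integer entries, $V$ has a rational basis, hence its rational points are dense in $V$, and the nonempty relatively open set $V\cap(\mathbb{R}_+^*)^{q_1+q_0}$ must contain one. This is shorter, avoids the row reordering and the explicit complement matrices, and sidesteps the implicit assumptions in the paper's construction (e.g.\ that each $q_i\ge p+1$ so that $p+1$ independent rows exist, and that an $H_i$ of the required form can be chosen); what it gives up is only the explicit description of the rational witness. In both arguments the binary nature of $I_0,I_1$ enters solely through the rationality of the underlying linear algebra.
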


Looking at equations (\ref{normalgeneral}), this theorem means that a MLE exists and is unique if one can find a couple $(n^+, n^*)$ of observations of the rows in $I_1$ and in $I_0$ such that $|n^+| = |n^*|$ vanishing all the regression coefficients (intercept included). An easy necessary condition to check is that at least one \(0\) and one \(1\) are present in each column of $I_0$ and $I_1$ (at the exception of the first column of ones corresponding to intercept).

\begin{proof}
If $I_1^T n^+ = I_0^T n^*$, the Silvapulle's condition is immediately verified. Reciprocally, $C_1 \cap C_0$ is an open subset of $\mathbb{R}^{p+1}$ with positive measure because $I_0$ and $I_1$ are full rank matrices. By a density argument, there exist $q \in (\mathbb{Q} \cap ]0,1[)^{p+1}$, $\lambda \in (\mathbb{R}^*_+)^{q_0}$ and $\mu \in (\mathbb{R}^*_+)^{q_1}$ satisfying $I^T_0 \lambda = I^T_1 \mu = q$. We reorder the rows in $I_0$ and $I_1$ such that the first $p+1$ rows are linearly independent. Let $H_0$ in $\mathcal{M}_{q_0 \times (q_0-p-1)}(\mathbb{R})$ and $H_1$ in $\mathcal{M}_{q_1 \times {(q_1-p-1)}}(\mathbb{R})$ be orthogonal matrices to $I_0$  and $I_1$ respectively. Because of the reorganization of the rows in $I_i$ ($i \in \{0,1\}$) we can choose a $H_i$ where its last $q_i-p-1$ rows form an identity matrix $\mathbb{I}_{q_i-p-1}$. For all $\alpha_0 \in \mathbb{R}^{q_0-p-1}$ and $\alpha_1 \in \mathbb{R}^{q_1-p-1}$ we have the relation $I^T_0(\lambda +H_0\alpha_0) = I^T_1 (\mu+H_1\alpha_1) = q$. Again with a density argument, we find $\alpha_0$ such that $\lambda_i + (\alpha_0)_i \in  \mathbb{Q}^*_+$ for all $i \in \{p+2,...,q_0\}$ and satisfying the constraint $\lambda +H_0\alpha_0 \in (\mathbb{R}_+^*)^{q_0}$.
For a matrix $A \in \mathcal{M}_{n \times m}(\mathbb{R})$, a vector $v \in \mathbb{R}^m$ and $J \subset \{1,...,m\}$, let $[Av]_J$ denote the vector $A_Jv_J$, where $A_J$ (resp. $v_J$) corresponds to the submatrix of $A$ (resp. subvector of $v$) obtained by removing from $A$ (resp. from $v$) the columns (resp. rows) that do not correspond to the indices in $J$. With this notation, we have $[I^T_0(\lambda +H_0\alpha_0)]_{\{1,...,p+1\}} = q - [I^T_0(\lambda +H_0\alpha_0)]_{\{p+2,...,q_0\}} \in \mathbb{Q}^{p+1}$. The binary matrix $(I^T_0)_{\{1,...,p+1\}}$ is then nonsingular and using its inverse in $\mathcal{M}_{(p+1)\times (p+1)}(\mathbb{Q})$ we obtain $(\lambda +H_0\alpha_0)_{\{1,...,p+1\}} \in (\mathbb{Q}_+^*)^{p+1}$. Finally $\alpha_0^* = \lambda + H_0 \alpha_0 \in (\mathbb{Q}_+^*)^{q_0}$. The same arguments lead to a set of coefficients $\alpha_1^* = \mu + H_1 \alpha_1 \in (\mathbb{Q}_+^*)^{q_1}$. Multiplying the vector $(\alpha_0^*,\alpha_1^*)$ by the ppcm of all its denominators proves the result.\end{proof}

\subsection{The square case}

The situation with identical square design matrices \(I_0\) and \(I_1\) is worthwhile in itself because it leads to explicit analytic formulae for the MLE and their variance (in the asymptotic case). In particular, we focus on the introduction of imbalance between $n^1$ and $n^0$ to emphasize the simple solution for MLE and the problem of variance inflation.

\begin{thm}

If $I_0 = I_1 = \mathrm{I}$ is a square matrix, we have the following closed form for the maximum likelihood estimator:
\begin{equation}
\label{solution}
\hat{\beta} = \mathrm{I}^{-1} \log \left(\frac{n^1}{n^0} \right)\,.
\end{equation}
\end{thm}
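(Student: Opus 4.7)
The plan is to start from the logistic normal equations in the identical-design case, namely equation (\ref{normal}), and solve them directly, exploiting that $\mathrm{I}$ is now both square and of full rank.

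First I would note that existence and uniqueness of $\hat\beta$ is not an issue here: with $I_0 = I_1 = \mathrm{I}$ square invertible, the Silvapulle condition of Theorem \ref{Silvapulle} is immediately satisfied by taking $n^+ = n^* = (1,\dots,1)^T \in (\mathbb{N}^*)^q$, which gives $I_1^T n^+ = I_0^T n^* = \mathrm{I}^T (1,\dots,1)^T$. So a unique MLE exists and is characterised by (\ref{normal}).

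Next I would simplify (\ref{normal}). Because $\mathrm{I}$ is a square full-rank matrix, $\mathrm{I}^T$ is invertible, and the equation
\[
\mathrm{I}^T \Delta n \;=\; \mathrm{I}^T\!\left(n\,\tanh\!\bigl(\tfrac{1}{2}\mathrm{I}\beta\bigr)\right)
\]
reduces, after left-multiplication by $(\mathrm{I}^T)^{-1}$, to the componentwise system
\[
\tanh\!\bigl(\tfrac{1}{2}(\mathrm{I}\beta)_i\bigr) \;=\; \frac{\Delta n_i}{n_i} \;=\; \frac{n_i^1 - n_i^0}{n_i^1 + n_i^0}\,, \qquad i\in\{1,\dots,q\}\,.
\]

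Then I would invert $\tanh$ using the identity $\tanh(x/2) = (e^{x}-1)/(e^{x}+1)$: setting $e^{(\mathrm{I}\beta)_i} = n_i^1/n_i^0$ satisfies the equation exactly, so $(\mathrm{I}\beta)_i = \log(n_i^1/n_i^0)$ for every $i$. In vector form this reads $\mathrm{I}\beta = \log(n^1/n^0)$, and applying $\mathrm{I}^{-1}$ yields the announced formula $\hat{\beta} = \mathrm{I}^{-1}\log(n^1/n^0)$.

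There is really no hard step here; the only subtlety to mention is that the cancellation of $\mathrm{I}^T$ on both sides of the normal equations — which would be illegitimate in the general (non-square or non-full-rank) setting — is precisely what makes the identification componentwise, and hence an explicit closed form, possible. The invertibility of $\tanh$ on $(-1,1)$ and the fact that $\Delta n_i / n_i \in (-1,1)$ (because $n_i^0, n_i^1 \geq 1$) guarantee that the intermediate inversion is well defined.
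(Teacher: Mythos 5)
Your proof is correct and follows essentially the same route as the paper: left-multiply the normal equations (\ref{normal}) by $(\mathrm{I}^T)^{-1}$ to obtain $\tanh\bigl(\tfrac{1}{2}(\mathrm{I}\beta)_i\bigr) = \Delta n_i/n_i$ componentwise, invert the hyperbolic tangent, and apply $\mathrm{I}^{-1}$ (the paper writes the result as $\mathrm{I}^{-1}\log\bigl(\frac{n+\Delta n}{n-\Delta n}\bigr)$, which equals $\mathrm{I}^{-1}\log\bigl(\frac{n^1}{n^0}\bigr)$). Your added remarks on existence/uniqueness via the Silvapulle condition and on the well-definedness of the $\tanh$ inversion are harmless extras not present in the paper's version.
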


\begin{proof}
the matrix $\mathrm{I}$ verifies the condition $q=p+1$ and is nonsingular with $\mathrm{I}^{-1}$ its inverse (because $\mathrm{I}$ is of full rank). The vector $T$ is defined as \(T = \tanh \left(\frac{1}{2} \mathrm{I}\beta \right) \in \mathbb{R}^{p+1}\) i.e. \(\mathrm{I}\beta = \log\left(\frac{1+T}{1-T}\right)\). Multiplying (\ref{normal}) by \((\mathrm{I}^{-1})^T = (\mathrm{I}^T)^{-1}\), we get $T = \frac{\Delta n}{n}$. Hence, \(\beta = \mathrm{I}^{-1}\log\left(\frac{n+\Delta n}{n-\Delta n}\right)\), which achieves the proof.
\end{proof}

\begin{remark}
If one of the components in the vectors of weights \(n^1\) or \(n^0\) vanishes, some of the regression coefficients become infinite (but not necessarily all of them).\end{remark}

To our knowledge, this is the first general closed form found in the resolution of a logisitic regression. There exist partial results for a unique categorical predictor exposed by Lipovetsky in 2014 \cite{Lip}. An explanation for the lack of such a simple result stands in the poorly studied finite observation structure made possible through binary predictors with repetitions. In Appendix \ref{appA}, some particular solutions to equations (\ref{solution}) are presented. 

\subsubsection{Invariance if intercept}

We establish an invariance property making a link with the imbalanced problem.

\begin{proposition}
In the square case with intercept, multiplying all the components of $n^1$ or $n^0$ by a same integer does not change the value of the MLE apart from the intercept.
\end{proposition}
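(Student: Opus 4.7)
The plan is to combine the explicit MLE formula $\hat\beta = \mathrm{I}^{-1}\log(n^1/n^0)$ from the previous theorem with the key observation that the first column of $\mathrm{I}$ (corresponding to the intercept) is the all-ones vector $\mathbf{1}$.

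First I would replace $n^1$ by $kn^1$ for some positive integer $k$ and compute the new MLE from the closed form:
\[
\hat\beta^{(k)} \;=\; \mathrm{I}^{-1}\log\!\left(\frac{kn^1}{n^0}\right) \;=\; \mathrm{I}^{-1}\log\!\left(\frac{n^1}{n^0}\right) + \log(k)\,\mathrm{I}^{-1}\mathbf{1} \;=\; \hat\beta + \log(k)\,\mathrm{I}^{-1}\mathbf{1}\,.
\]
Second, I would identify $\mathrm{I}^{-1}\mathbf{1}$. Because an intercept is present, the first column of $\mathrm{I}$ consists entirely of ones, i.e. $\mathrm{I}\,e_0 = \mathbf{1}$ with $e_0 = (1,0,\ldots,0)^T$. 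Since $\mathrm{I}$ is invertible, this forces $\mathrm{I}^{-1}\mathbf{1} = e_0$. Hence
\[
\hat\beta^{(k)} - \hat\beta \;=\; \log(k)\,e_0\,,
\]
which affects only the intercept coordinate $\hat\beta_0$ and leaves $\hat{\tilde{\beta}}$ unchanged. The argument for replacing $n^0$ by $kn^0$ is identical, producing a shift of $-\log(k)$ on the intercept only.

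I expect no real obstacle: the only nontrivial ingredient is the identification $\mathrm{I}^{-1}\mathbf{1} = e_0$, which is immediate from the intercept convention. The rest is linearity of the logarithm inside the closed-form solution.
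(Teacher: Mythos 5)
Your proof is correct and follows essentially the same route as the paper: both rest on the identity $\mathrm{I}^{-1}\mathbf{1}=e_0$ (which the paper states componentwise as $\sum_j (\mathrm{I}^{-1})_{ij}=\delta_{0i}$) combined with the closed form $\hat\beta=\mathrm{I}^{-1}\log(n^1/n^0)$. Your version even supplies the one-line justification of that identity ($\mathrm{I}e_0=\mathbf{1}$ since the intercept column is all ones), which the paper merely asserts.
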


\begin{proof}
The inverse of a matrix with an intercept term verifies the relation $$\mathrm{I}^{-1} \begin{pmatrix}
   1 \\
   \vdots \\
   1 \\
   \end{pmatrix} =\begin{pmatrix}
   1 \\
   0\\
   \vdots \\
   0 \\
   \end{pmatrix}\,, $$
which means that we can rewrite equations (\ref{solution}) as
$$\hat{\beta}_i = \log\left(\prod_{j=0}^{p} \left(\frac{n^1_j}{n^0_j}\right)^{a_{ij}}  \right)\,,\, i \in \{0,...,p\}\,,\,\, a_{ij} = (\mathrm{I}^{-1})_{ij}\,\, \hbox{and}\,\,\sum_{j=0}^{p} a_{ij} = \delta_{0i}\,.$$
Substituting $n_j^0$ by  $s \times n_j^0$ (or $n_j^1$ by  $s \times n_j^1$) with $s \in \mathbb{N}^*$ gives the same result for all $\beta_i\,,\, i \in \{1,...,p\}$.
\end{proof}

\subsubsection{Asymptotic variance}

To conclude this section, we study the asymptotic behavior of the estimator for large \(|n^1|\) and \(|n^0|\). Since the MLE (intercept excluded) remains the same with or without a class imbalance (see the invariance property), we have a glimpse of a general property in class imbalance.

\begin{proposition}
\label{varsquare}
In the square case BLR problem, the variance of the maximum likelihood estimator is approximately given by relations
$$V(\hat{\beta}_i) \approx \sum_{j=0}^{p}a_{ij}^2\left(\frac{1}{n_j^1}+\frac{1}{n_j^0}\right)\,,\quad i \in \{0,...,p\}\,,\quad a_{ij} = (\mathrm{I}^{-1})_{ij}\,.$$
\end{proposition}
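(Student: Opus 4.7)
The plan is to exploit the closed form $\hat{\beta} = \mathrm{I}^{-1}\log(n^1/n^0)$ established in the previous theorem, so that componentwise
$$\hat{\beta}_i = \sum_{j=0}^{p} a_{ij}\bigl(\log n^1_j - \log n^0_j\bigr),$$
with $a_{ij}=(\mathrm{I}^{-1})_{ij}$ purely deterministic. The randomness therefore enters only through the cell counts $n^1_j$ and $n^0_j$, and since the $q=p+1$ rows of $\mathrm{I}$ correspond to distinct covariate patterns, the pairs $(n^1_j,n^0_j)$ for different $j$ are drawn from disjoint subpopulations and may be treated as independent. Consequently the variance of $\hat{\beta}_i$ decomposes as $\sum_{j} a_{ij}^2\,V\!\left(\log(n^1_j/n^0_j)\right)$, and the task reduces to computing the variance of $\log(n^1_j/n^0_j)$ cell by cell.

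First I would fix the cell total $n_j := n^1_j + n^0_j$ and write $n^1_j \sim \mathrm{Binomial}(n_j, p_j)$ with $p_j = e^{(\mathrm{I}\beta)_j}/(1+e^{(\mathrm{I}\beta)_j})$, so that $n^0_j = n_j - n^1_j$ is determined by $n^1_j$. Then I apply the delta method to the smooth function $g(x) = \log x - \log(n_j - x)$, which has derivative $g'(x) = 1/x + 1/(n_j-x)$. Evaluating at the mean $E[n^1_j] = n_j p_j$ and using $V(n^1_j)=n_j p_j(1-p_j)$ gives, after the elementary identity $1/(n_j p_j) + 1/(n_j(1-p_j)) = 1/(n_j p_j(1-p_j))$, the asymptotic value
$$V\!\bigl(\log(n^1_j/n^0_j)\bigr) \;\approx\; \frac{1}{n_j p_j(1-p_j)} \;=\; \frac{1}{E[n^1_j]} + \frac{1}{E[n^0_j]}.$$
Replacing expectations by observed counts in the usual plug-in manner produces the per-cell term $1/n^1_j + 1/n^0_j$.

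Summing over $j$ weighted by $a_{ij}^2$ and invoking the independence across cells yields the claimed expression. As a sanity check, I would note that the same formula can alternatively be recovered from the inverse of the Fisher information: the observed information matrix of the BLR log-likelihood at the MLE equals $\mathrm{I}^T D \mathrm{I}$ with $D$ the diagonal matrix of entries $n_j p_j(1-p_j) = n^1_j n^0_j/n_j$, whose inverse $\mathrm{I}^{-1} D^{-1} (\mathrm{I}^{-1})^T$ has diagonal entries $\sum_j a_{ij}^2 (1/n^1_j + 1/n^0_j)$, in agreement with the delta-method derivation.

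The only subtle point, and thus the main obstacle, is the justification of the ``$\approx$'' symbol: the derivation relies on (i) the delta-method linearization of $\log$ being accurate, which requires $n^1_j$ and $n^0_j$ to be large (this is why the statement is asymptotic and fails badly when some cell count is small, consistent with the earlier remark about infinite coefficients), and (ii) treating the different cells as independent, which is automatic once the covariate pattern defines the sampling stratum. These caveats should be stated explicitly, but no further calculation is required.
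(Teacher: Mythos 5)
Your derivation is correct, but your primary route differs from the paper's. The paper's proof is the one-line argument you relegate to a ``sanity check'': compute the observed Fisher information $\mathcal{I}(\hat{\beta}) = \mathrm{I}^T D \mathrm{I}$ with $D$ diagonal with entries $n_j\hat{p}_j(1-\hat{p}_j) = n_j^1 n_j^0/n_j$, and read the claimed formula off the diagonal of $\mathrm{I}^{-1}D^{-1}(\mathrm{I}^{-1})^T$. Your main argument instead exploits the closed form $\hat{\beta}=\mathrm{I}^{-1}\log(n^1/n^0)$ and applies the delta method cell by cell, conditioning on the cell totals $n_j$ so that each $n_j^1$ is binomial and the cells are independent; this is a valid and somewhat more probabilistic derivation that makes explicit where the $1/n_j^1+1/n_j^0$ terms come from. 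One point worth flagging: the paper has a remark immediately after this proposition describing a Taylor-expansion variance computation based on the \emph{multinomial} vector $(n^1,n^0)$ (cell totals random), which yields the slightly different answer $\sum_j a_{ij}^2(1/n_j^1+1/n_j^0-2/|n|)$ and which the author reports as inaccurate in simulations. Your conditioning on the $n_j$ is precisely what removes the $-2/|n|$ correction and makes your delta-method answer coincide with the Fisher-information one, so your result matches the proposition as stated; it would strengthen your write-up to note explicitly that the conditional (product-binomial) sampling model is what distinguishes your computation from that less accurate unconditional multinomial expansion.
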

\begin{proof}
We compute the observed Fisher information $\mathcal{I}(\hat{\beta}) = \mathrm{I}^TD\mathrm{I}$ with $D$ a diagonal matrix with elements $n_i\hat{p}_i(1-\hat{p}_i)$ and $\hat{p}_i = 1/(1+e^{-(I\hat{\beta})_i})$. Its inverse gives the desired result, knowing that $n_i\hat{p}_i = n_i^1$ and  $n_i(1-\hat{p}_i) = n_i^0$.
\end{proof}

\begin{remark}
Another method uses the closed form (\ref{solution}) to perform variance and bias estimations by Taylor expansions with the multinomial random vector $(n^1,n^0)$. We obtain $V(\hat{\beta}_i) \approx \sum_{j=0}^{p}a_{ij}^2(\frac{1}{n_j^1}+\frac{1}{n_j^0}-\frac{2}{|n|})$ and $Bias(\hat{\beta}_i) \approx \sum_{j=0}^{p}\frac{a_{ij}}{2}(\frac{1}{n_j^0}-\frac{1}{n_j^1})\,, i \in \{0,...,p\}\,$. However, simulations give inaccurate results and only the Fisher information method should be retained. 
\end{remark}

We investigate the variation of the variance with respect to the sample size \(|n|\) and the value of the intercept \(\beta_0\) for a simple fixed model $(\beta_1,...,\beta_5) = (-0.5,-0.25,0,0.25,0.5)$. With these two parameters given, we simulate $10^4$ data sets with a different random binary square matrix \(\mathrm{I}\) and different random vectors \(n^1\) and \(n^0\) for each of them (but \(|n^1|+|n^0|\) is fixed). In table \ref{table1}, we compare the estimated standard deviation (\emph{sd.}) with the Fisher standard deviation given in Proposition \ref{varsquare} (\emph{F.sd.}) accompanied by an estimation of the bias (\emph{bias}) for coefficient \(\beta_4 = 0.25\).

\begin{table}[!ht]

\begin{center}
\begin{tabular}{c|ccccccccc}

&$\beta_0$ & \multicolumn{1}{c}{-7} & \multicolumn{1}{c}{-6} & \multicolumn{1}{c}{-5} & \multicolumn{1}{c}{-4} & \multicolumn{1}{c}{-3}& \multicolumn{1}{c}{-2}& \multicolumn{1}{c}{-1}& \multicolumn{1}{c}{0}\\
$|n|$&$|n_0|/|n_1|$ &1052& 385 & 142 & 52 & 19 & 7.1 & 2.7 & 1.0 \\
\hline
&sd.& . & . & . & . & 0.768 & 0.467 &  0.333 & 0.286  \\
$10^3$&F.sd.& . & . & . & . & 0.762 & 0.464 &0.334 & 0.293   \\
&bias& . & . & . & . & 0.019 & 0.0037 &0.0065 & -0.0021 \\
\hline
&sd.& . & . & 0.633 & 0.369 & 0.224 & 0.145 & 0.105 &  0.0931 \\
$10^4$&F.sd.& . & . & 0.618 & 0.361 & 0.221 & 0.145 & 0.104 & 0.0918   \\
&bias& . & . &  0.019 & 9.4e-3 & -2.4e-4 & 1.e-3 & -3.0e-4 & -2.9e-4 \\
\hline
&sd.&0.528 & 0.303 & 0.187 & 0.110 & 0.0690 & 0.0450 & 0.0328& 0.0300   \\
$10^5$&F.sd.& 0.512 & 0.301 & 0.183 & 0.111 & 0.0685 & 0.0451 &0.0327 & 0.0291   \\
&bias& 0.020 & 5.4e-4&3.4e-3 & -3.2e-5& -5.6e-5&4.5e-4&-2.8e-5 & 1.9e-4   \\
\hline
\end{tabular}
\caption{Variance analysis in the square case with intercept for coefficient \(\beta_4 = 0.25\). We used the following quantities : \((sd.)^2 = 10^{-4}\sum_{i=1}^{10^4}(0.25 - (\hat{\beta}_4)_i)^2\), \((F.sd.)^2 = 10^{-4}\sum_{i=1}^{10^4}V(\hat{\beta}_4)_i\) and \(bias = 10^{-4}\sum_{i=1}^{10^4}(0.25 - (\hat{\beta}_4)_i)\).}
\end{center}
\label{table1}
\end{table}

These simulations highlight the accuracy of the "Fisher variance" in all configurations, which is very close to the estimated one. Bias is negligible compared with variance. For a constant number of observations \(|n|\), the variance increases when the disbalance between classes strengthens. This variance inflation is a key issue in class imbalance, we further explain how one can easily add a prior information to a rescaled likelihood to deal with this problem (see Subsection \ref{subsection34}).

\section{Limit imbalanced study}
\label{section3}

\subsection{Owen-type equations}

The limit case consists in infinitely replicating the majority class observations as if the theoretical distribution of this class was the observed one. This is a degenerate case of the Owen's study, that is why we know that the intercept coefficient tends to minus infinity whereas other regression coefficients are finite if a stronger overlap condition is satisfied \cite{Owen}. For the limit equations, an information reduction for the majority class occurs: only the means of the predictors matter, the correlation structure in this class of interest "disappears".

The following proposition presents the logistic normal equations (\ref{normalgeneral}) in a new form with a remainder term arising in case of class imbalance.

\begin{proposition}
\label{linearimb}
For an imbalanced binary logisitic regression with a class size for response $y=0$ '$s$' times greater than the one for response $y=1$, we obtained the system of equations 
$$\frac{\bold{n_1^0}}{n_1^0}  = \overline{N}^1 + \frac{1}{s}\left(\frac{n_2^0}{(n_1^0)^2}\left(\frac{\bold{n_2^0}}{n_2^0}-\overline{N}^1 \right) - \frac{n_1^1}{n_1^0}\left(\frac{\bold{n_1^1}}{n_1^1}-\overline{N}^1 \right)\right)+ o(\frac{1}{s})\,,$$

with $ s = \frac{|n^0|}{|n^1|}   \gg 1$. We used notations:
$$n_1^0 = \sum_{i=1}^{q_0} \overline{n}^0_i e^{(\mathtt{I}_0 \tilde{\beta})_i}\,,\,\, n_1^1 = \sum_{i=1}^{q_1} \overline{n}^1_i e^{(\mathtt{I}_1 \tilde{\beta})_i} \,,\,\, n_2^0 = \sum_{i=1}^{q_0} \overline{n}^0_i e^{2(\mathtt{I}_0 \tilde{\beta})_i} \,,\,\, n_2^1 = \sum_{i=1}^{q_1} \overline{n}^1_i e^{2(\mathtt{I}_1 \tilde{\beta})_i}\,,$$
and for vectors in $\mathbb{R}^p$:
$$\bold{n_1^0} = \mathtt{I}_0^T (\overline{n}^0 e^{\mathtt{I}_0 \tilde{\beta}})\,,\,\, \bold{n_1^1} = \mathtt{I}_1^T (\overline{n}^1 e^{\mathtt{I}_1 \tilde{\beta}})\,,\,\, \bold{n_2^0} = \mathtt{I}_0^T (\overline{n}^0 e^{2\mathtt{I}_0 \tilde{\beta}})\,,\,\, \bold{n_2^1} = \mathtt{I}_1^T (\overline{n}^1 e^{2\mathtt{I}_1 \tilde{\beta}})\,.$$

\end{proposition}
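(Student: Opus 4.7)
The plan is to substitute $t:=e^{\beta_0}$ into the normal equations (\ref{normalgeneral}) and Taylor expand the $\tanh$ factors for small $t$. This is legitimate in the limit imbalanced regime because, as $s\to\infty$, the intercept $\beta_0\to-\infty$ while $\tilde\beta$ and hence the quantities $n_k^j,\bold{n_k^j}$ stay bounded (a consequence of the strong overlap condition, confirmed by the existence result that follows in Subsection \ref{subsection32}). Writing $(I_j\beta)_i=\beta_0+(\mathtt{I}_j\tilde\beta)_i$ so that $e^{(I_j\beta)_i}=t\,e^{(\mathtt{I}_j\tilde\beta)_i}$, and using
\[
\tanh(x/2)\;=\;-1+2e^x-2e^{2x}+O(e^{3x})\quad\hbox{as } e^x\to 0,
\]
I obtain, componentwise,
\[
n^j\tanh\!\bigl(\tfrac12 I_j\beta\bigr)\;=\;-n^j+2t\bigl(n^j e^{\mathtt{I}_j\tilde\beta}\bigr)-2t^2\bigl(n^j e^{2\mathtt{I}_j\tilde\beta}\bigr)+O(t^3),\quad j\in\{0,1\}.
\]

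I would then apply $I_j^T$ to each side, separating the intercept (all-ones) column from the sub-matrix $\mathtt{I}_j^T$. By the definitions given in the proposition, $I_j^T\bigl(n^j e^{k\mathtt{I}_j\tilde\beta}\bigr)=|n^j|\binom{n_k^j}{\bold{n_k^j}}$ for $k=1,2$. Substituting into (\ref{normalgeneral}) cancels the constant $-I_j^T n^j$ pieces on the right against the left-hand side, and after dividing by $|n^1|$ and introducing $s=|n^0|/|n^1|$, one arrives at
\[
\binom{1}{\overline N^1}\;=\;t\binom{n_1^1}{\bold{n_1^1}}+ts\binom{n_1^0}{\bold{n_1^0}}-t^2\binom{n_2^1}{\bold{n_2^1}}-t^2 s\binom{n_2^0}{\bold{n_2^0}}+O(t^3).
\]
The scalar first line reads $1=tsn_1^0+tn_1^1-t^2 n_2^1-t^2 sn_2^0+O(t^3)$, from which I recover the leading asymptotics $t=\frac{1}{sn_1^0}+O(1/s^2)$.

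The key algebraic move is then to multiply this scalar equation by $\overline N^1$ and subtract it from the remaining $p$ vector equations. This eliminates $tsn_1^0\overline N^1$ against the $\overline N^1$-proportional part of $ts\bold{n_1^0}$, leaving
\[
0\;=\;t(\bold{n_1^1}-n_1^1\overline N^1)+ts(\bold{n_1^0}-n_1^0\overline N^1)-t^2(\bold{n_2^1}-n_2^1\overline N^1)-t^2 s(\bold{n_2^0}-n_2^0\overline N^1)+O(t^3).
\]
Isolating the $ts(\bold{n_1^0}-n_1^0\overline N^1)$ term, dividing by $tsn_1^0$, and substituting $t\sim 1/(sn_1^0)$ turns the right-hand side into $-\frac{n_1^1}{sn_1^0}(\bold{n_1^1}/n_1^1-\overline N^1)+\frac{n_2^0}{s(n_1^0)^2}(\bold{n_2^0}/n_2^0-\overline N^1)+o(1/s)$, which is the announced identity after factoring out $1/s$.

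The main obstacle I anticipate is purely bookkeeping: four terms with distinct $(t,s)$-weights must be tracked through the substitution for $t$, and one has to verify that both the higher-order correction in $t$ and the $O(e^{3x})$ tail of the $\tanh$ expansion, which multiplied by $s$ through the majority class is \emph{a priori} only $O(t^3 s)=O(1/s^2)$, really do collapse into $o(1/s)$. This implicitly uses the uniform boundedness of $\tilde\beta$ and of the $n_k^j,\bold{n_k^j}$ as $s\to\infty$, which ultimately follows from the well-posedness of the limit system treated next in the paper.
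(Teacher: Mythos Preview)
Your argument is correct and, in fact, more economical than the paper's. Both start from the second-order expansion $\tanh(x/2)=-1+2e^x-2e^{2x}+o(e^{2x})$ inserted into (\ref{normalgeneral}), arriving at the scalar/vector pair
\[
1=tsn_1^0+tn_1^1-t^2sn_2^0-t^2n_2^1+\cdots,\qquad
\overline N^1=ts\,\bold{n_1^0}+t\,\bold{n_1^1}-t^2s\,\bold{n_2^0}-t^2\,\bold{n_2^1}+\cdots.
\]
From there the paper takes the long road: it treats the scalar line as a quadratic in $t=e^{\beta_0}$, applies the quadratic formula, and then performs several nested Taylor expansions to obtain the second-order expression $t=\frac{1}{sn_1^0}+\frac{1}{(sn_1^0)^2}\bigl(\frac{n_2^0}{n_1^0}-n_1^1\bigr)+o(1/s^2)$, which is substituted back into the vector equations. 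Your subtraction trick---multiplying the scalar line by $\overline N^1$ and subtracting---kills the leading $tsn_1^0$ term directly, so that only the \emph{first-order} approximation $t\sim 1/(sn_1^0)$ is needed to reach the stated identity. This bypasses the quadratic formula and most of the bookkeeping in Appendix~\ref{appB}. The paper's route does have one collateral benefit: it produces the explicit next-order correction to the intercept itself, which your approach does not need but also does not deliver.
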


The technical proof of this result is exposed in Appendix \ref{appB}.\\

As shown by simulations (see table \ref{table2}), the first order and remainder terms are negligible quantities with binary predictors, even if there is no imbalance! This suggests the introduction of the following limit imbalanced equations, obtained with $s = + \infty$ in Proposition \ref{linearimb}. 

\begin{thm}
\label{thmimba}
For infinitely imbalanced binary logisitic regression verifying a strong overlap condition (see Theorem \ref{surrounded}), the following system of $p$ limit imbalanced equations holds\footnote{With non-binary design matrices \(X_1\) and \(X_0\) and no vectors of weights, we obtain \(\mathtt{X}_0^T\left( \frac{e^{\mathtt{X}_0\tilde{\beta}}}{\sum_i e^{(\mathtt{X}_0\tilde{\beta})_i}}\right) = \overline{N}^1\,.\) These equations also differ from Owen's \cite{Owen}. }
\begin{equation}
\label{imbal}
\mathtt{I}_0^T\left( \frac{\overline{n}^0 e^{\mathtt{I}_0\tilde{\beta}}}{\sum_i \overline{n}^0_ie^{(\mathtt{I}_0\tilde{\beta})_i}}\right) = \overline{N}^1\,.
\end{equation}
Notice that the $\tilde{\beta}$ coefficients do not depend on the structure in rows of the design matrix associated to response $y=1$ but only on the means of ones for each predictor: $\overline{N}^1$. 
\end{thm}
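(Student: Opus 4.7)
The natural plan is to deduce Theorem \ref{thmimba} directly from Proposition \ref{linearimb} by passing to the limit $s \to \infty$. Substituting the definitions $\bold{n_1^0} = \mathtt{I}_0^T(\overline{n}^0 e^{\mathtt{I}_0\tilde\beta})$ and $n_1^0 = \sum_i \overline{n}^0_i e^{(\mathtt{I}_0\tilde\beta)_i}$ into the leading term of the expansion immediately rewrites the identity $\bold{n_1^0}/n_1^0 = \overline{N}^1$ as equation (\ref{imbal}). So once the asymptotic expansion of Proposition \ref{linearimb} is accepted, the statement reduces to a routine substitution and a clean passage to the limit.

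The delicate point is justifying that the limit actually exists, i.e.\ that the solutions $\tilde\beta(s)$ of the normal equations (\ref{normalgeneral}) remain in a compact subset of $\mathbb{R}^p$ as $s \to \infty$. Here I would invoke the strong overlap condition promised by the forthcoming Theorem \ref{surrounded}: under that assumption, a uniform a priori bound on $\tilde\beta(s)$ holds, so the remainder terms in Proposition \ref{linearimb} (which are $O(1/s)$ with constants depending continuously on $\tilde\beta$) are genuinely negligible. Taking $s \to \infty$ then yields $\bold{n_1^0}/n_1^0 = \overline{N}^1$, and any accumulation point of $\tilde\beta(s)$ must satisfy (\ref{imbal}).

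Concerning the final observation of the theorem, namely that only $\overline{N}^1$ (and not the row structure of $I_1$) appears in the limit equations: this is a transparent byproduct of the derivation. In Proposition \ref{linearimb} the quantities carrying the detailed structure of $I_1$ (namely $\bold{n_1^1}$ and $n_1^1$) enter only through $O(1/s)$ corrections, so they vanish in the limit. The sole surviving trace of the minority class is the predictor means vector $\overline{N}^1 = \mathtt{I}_1^T \overline{n}^1$, which collects exactly the ones-frequencies for each covariate.

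The main obstacle in this plan is therefore not the algebraic limit itself, but establishing the uniform boundedness of $\tilde\beta(s)$ needed to control the remainder as $s \to \infty$; this is precisely where the strong overlap hypothesis of Theorem \ref{surrounded} intervenes. If one wished to avoid forward-referencing that result, an alternative route would be to argue intrinsically: one could exhibit a strictly concave rescaled potential whose first-order condition is (\ref{imbal}) (this is in fact what Section \ref{section3} goes on to construct), and deduce existence and uniqueness from coercivity under the overlap assumption, bypassing the passage-to-the-limit argument entirely. Either route yields the same system (\ref{imbal}).
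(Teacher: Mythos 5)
Your argument is correct in substance, but it takes the route the paper deliberately sets aside. You derive the theorem as the $s=+\infty$ specialization of Proposition \ref{linearimb}: substituting the definitions of $\bold{n_1^0}$ and $n_1^0$ into the leading term indeed reproduces (\ref{imbal}), and you rightly identify that the only real issue is uniform control of $\tilde{\beta}(s)$ so that the $O(1/s)$ corrections (whose constants depend on $\tilde{\beta}$) genuinely vanish --- a point the paper, like you, ultimately delegates to Owen's convergence result under the strong overlap condition. The paper, however, explicitly announces ``a direct simple proof, avoiding the complicated previous proof of Appendix \ref{appB}'': it starts from the normal equations (\ref{normalgeneral}), uses only the first-order expansion $\tanh(x/2) = -1 + 2e^{x} + o(e^{x})$ together with Owen's fact that $\beta_0 \to -\infty$, solves the intercept equation for $\exp(\beta_0)$ as in (\ref{approxbeta0}), and observes that the $I_1$ contribution in the remaining $p$ equations is damped by the factor $|n^1|/|n^0| \to 0$. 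What the paper's route buys is independence from the heavy second-order Taylor machinery of Appendix \ref{appB}; what your route buys is that, once Proposition \ref{linearimb} is accepted, the limit statement is a one-line substitution, and it additionally exhibits the rate at which the limit is approached. Your closing remark --- that one could instead characterize (\ref{imbal}) intrinsically as the first-order condition of a strictly concave, coercive rescaled potential --- is exactly the program the paper carries out afterwards in Theorem \ref{surrounded} and Proposition \ref{rescaledlik}, so it is a legitimate third route rather than a gap. Both arguments share the same external dependency (Owen's finiteness of the non-intercept coefficients), and the observation that only $\overline{N}^1 = \mathtt{I}_1^T\overline{n}^1$ survives from class $1$ falls out identically in either derivation.
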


We give a direct simple proof, avoiding the complicated previous proof of Appendix \ref{appB}.
\begin{proof} 
For $x$ near minus infinity, the hyperbolic tangent has the following first order expansion:
$$\tanh\left(\frac{x}{2} \right) = -1 + 2 e^{x} + o(e^{x})\,.$$
From \cite{Owen} we know that the intercept term tends to minus infinity, then with $x = I_0\beta$ or $x = I_1\beta$, we use the previous expansion neglecting the remainder term. Thus, equations (\ref{normalgeneral}) become
\begin{equation}
\label{gene}
I_1^Tn^1 = I_1^T\left(n^1 e^{I_1\beta}\right)+I_0^T\left(n^0 e^{I_0\beta}\right)\,,
\end{equation}
and factoring by $\exp(\beta_0)$ in the first equation of this system we have
\begin{equation}
\label{approxbeta0}
\exp(\beta_0)= \frac{|n^1|}{\sum_{i=1}^{q_1} n_i^1 e^{(\mathtt{I}_1\tilde{\beta})_i}+\sum_{i=1}^{q_0} n_i^0 e^{(\mathtt{I}_0\tilde{\beta})_i}} \approx \frac{|n^1|}{\sum_{i=1}^{q_0} n_i^0 e^{(\mathtt{I}_0\tilde{\beta})_i}}\,,
\end{equation}
because $\frac{|n^0|}{|n^1|} \to + \infty$. Looking back at (\ref{gene}) without the first equation, we have
$$\mathtt{I}_1^T \overline{n}^1 = \mathtt{I}_1^T\left(\frac{n^1 e^{\mathtt{I}_1\tilde{\beta}}}{\sum_{i=1}^{q_0} n_i^0 e^{(\mathtt{I}_0\tilde{\beta})_i}} \right)+\mathtt{I}_0^T\left(\frac{n^0 e^{\mathtt{I}_0\tilde{\beta}}}{\sum_{i=1}^{q_0} n_i^0 e^{(\mathtt{I}_0\tilde{\beta})_i}} \right)$$
but
$$\frac{n^1 e^{\mathtt{I}_1\tilde{\beta}}}{\sum_{i=1}^{q_0} n_i^0 e^{(\mathtt{I}_0\tilde{\beta})_i}} = \frac{|n^1|}{|n^0|} \frac{\overline{n}^1 e^{\mathtt{I}_1\tilde{\beta}}}{\sum_{i=1}^{q_0} \overline{n}_i^0 e^{(\mathtt{I}_0\tilde{\beta})_i}}\to 0$$
because $\frac{|n^1|}{|n^0|} \to 0$ and we obtain the desired result.
\end{proof}

In table \ref{table2}, we present simulation results based on limit imbalanced equations (\ref{imbal}) compared with classical logistic regression (\ref{normalgeneral}). The sample procedure is the same as the one used for table \ref{table1} except that we fixed sample size at $|n| = |n^1|+|n^0|= 10^4$ and vary dimension for the matrix $I_0$ (we chose \(q_0=10, 21, 32\)). 

\begin{table}[!ht]
\begin{center}

\begin{tabular}{c|ccccccccc}

&$\beta_0$ & \multicolumn{1}{c}{-5} & \multicolumn{1}{c}{-4} & \multicolumn{1}{c}{-3}& \multicolumn{1}{c}{-2}& \multicolumn{1}{c}{-1}& \multicolumn{1}{c}{0}\\
$q_0$&$|n^0|/|n^1|$& 141 & 51 & 19 & 7.0 & 2.6 & 1.0 \\
\hline
&sd.&  0.3614 & 0.2154 & 0.1331 & 0.08765 & 0.06384 & 0.05730  \\
&sd. imb.& 0.3614 & 0.2154 & 0.1331 & 0.08766 & 0.06402 & 0.05757 \\
\cdashline{2-8}
$10$&bias& 2.533e-3 & 7.572e-4 & -9.553e-4 & 1.407e-3 & 2.797e-4 & 4.954e-4   \\
&bias imb.& 2.532e-3 &7.527e-4 & -9.433e-4 & 1.439e-3 & 3.261e-4 & 5.073e-4 \\
\cdashline{2-8}
&$l^1$& 4.702e-4& 6.849e-4& 1.089e-3& 1.762e-3& 2.869e-3& 4.806e-3\\
\hline

&sd.&0.2643 & 0.1583 &  0.09879& 0.06468 & 0.04839& 0.04382 \\
&sd. imb.& 0.2643& 0.1583& 0.09879& 0.06474& 0.04860& 0.04451 \\
\cdashline{2-8}
$21$&bias& 2.148e-3& 1.956e-3& -1.583e-5& 7.601e-4& 2.461e-4&-6.280e-4  \\
&bias imb.& 2.130e-3& 1.967e-3& -8.450e-6& 7.752e-4& 3.000e-4& -5.337e-4 \\
\cdashline{2-8}
&$l^1$&5.950e-4& 8.864e-4& 1.399e-3& 2.281e-3& 3.698e-3& 5.952e-3 \\
\hline
&sd.& 0.2438& 0.1470& 0.09346& 0.06112& 0.04585&  0.04090 \\
&sd. imb.& 0.2438& 0.1471& 0.09349& 0.06117& 0.04628&  0.04171 \\
\cdashline{2-8}
$32$&bias&  4.385e-3& 1.034e-3& 8.025e-4& 3.584e-4& 1.259e-3& -1.905e-4   \\
&bias imb.& 4.383e-3& 1.048e-3& 7.894e-4& 3.537e-4& 1.313e-3& -4.297e-5 \\
\cdashline{2-8}
&$l^1$& 6.220e-4& 9.239e-4& 1.472e-3& 2.341e-3& 3.920e-3&  6.306e-03 \\
\hline

\end{tabular}
\caption{Variance and bias analysis in standard and imbalanced situations for coefficient \(\beta_4 = 0.25\). The \(l^1\) is given by the formula \(l^1 = 10^{-4}\sum_{i=1}^{10^4}|(\hat{\beta}_4^{imb})_i-(\hat{\beta}_4)_i|\).}

\end{center}
\label{table2}
\end{table}

The two estimates \(\hat{\beta}_4\) for standard and imbalanced regressions are very close to each other as shown by the mean of the $l^1$ norm -- even if the problem is not imbalanced -- so that standard deviation and bias are almost the same. This means that, if interesting properties can be established with the limit equations, this context will be appropriate to highlight new features in classical logistic regression. 

The $1/s$ first order term in Proposition \ref{linearimb} should be estimated to understand how good the limit imbalanced approximation is, without having to estimate the standard regression coefficients. Simulations show that this term is very small and we choose not to dwell on this intermediate situation, but it could be a more important result if non-binary design matrices are involved.

\subsection{Strong overlap condition and rescaled likelihood}
\label{subsection32}

Existence and uniqueness conditions to solve (\ref{imbal}) are well-known \cite{Owen}, they consist in an overlap condition a little bit stronger than the one given by Silvapulle. In fact, we need the point $\overline{N}^1$ to be surrounded by the rows of $\mathtt{I}_0$ (hereafter denoted by the letter \(\mathtt{I}\)). We give this result in the framework of the binary problem (simpler than Owen's general case) and establish a new proof leading to a minimum relative entropy problem. From there and using duality, we build the corresponding rescaled likelihood also justified by a rescaling on the initial likelihood.

\begin{thm}
\label{surrounded}
There exists a unique finite solution to the limit imbalanced BLR problem if and only if there exists $\lambda \in (\mathbb{R}_+^*)^{q}$ such that $\mathtt{I}^T \lambda = \overline{N}^1$ and $\sum_{i=1}^{q} \lambda_i = 1$. (If present, the null row (such that \(\mathtt{I}_i = (0,...,0)\)) is removed\footnote{in order to have non-zero coefficients \(\lambda\) as for the overlap condition in Theorem \ref{Silvapulle}.}.)
\end{thm}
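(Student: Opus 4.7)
The plan is to handle the two directions separately. Necessity is immediate: if $\tilde{\beta}$ is a finite solution of (\ref{imbal}), then setting $\lambda_i = \overline{n}^0_i e^{(\mathtt{I}\tilde{\beta})_i}/Z$ with $Z = \sum_j \overline{n}^0_j e^{(\mathtt{I}\tilde{\beta})_j}$ produces a strictly positive probability vector satisfying $\mathtt{I}^T \lambda = \overline{N}^1$. The remainder of the proof addresses sufficiency by recognizing equation (\ref{imbal}) as the optimality condition of a minimum relative entropy problem.

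For sufficiency, I would consider the convex program
$$\min_{p}\ \sum_{i=1}^{q} p_i \log \frac{p_i}{\overline{n}^0_i} \quad \text{subject to} \quad \mathtt{I}^T p = \overline{N}^1,\ \ \sum_{i=1}^{q} p_i = 1,\ \ p_i \geq 0,$$
with the convention $0 \log 0 = 0$. The hypothesis provides the interior point $\lambda$, so the feasible set is non-empty; together with compactness of the simplex and strict convexity and continuity of the objective, this gives existence and uniqueness of a minimizer $p^\star$.

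The key step is showing that $p^\star$ lies in the relative interior of the simplex (after discarding any all-zero row of $\mathtt{I}$, which merely absorbs mass without affecting $\mathtt{I}^T p$), so that the sign constraints do not bind. For this I would exploit the infinite slope of $x\log x$ at $0$: if some $p^\star_i = 0$, then along the feasible segment $p_t = (1-t)p^\star + t\lambda$ the contribution $t\lambda_i \log(t\lambda_i)$ forces the right-derivative of the objective at $t=0^+$ to be $-\infty$, contradicting optimality of $p^\star$. With $p^\star$ strictly positive, the Lagrangian
$$\mathcal{L}(p,\tilde{\beta},\mu) = \sum_i p_i \log\frac{p_i}{\overline{n}^0_i} - \tilde{\beta}^T\bigl(\mathtt{I}^T p - \overline{N}^1\bigr) - \mu\Bigl(\sum_i p_i - 1\Bigr)$$
yields the stationarity condition $p^\star_i = \overline{n}^0_i e^{(\mathtt{I}\tilde{\beta})_i + \mu - 1}$, and the normalization fixes $e^{\mu - 1} = 1/Z$; substituting into $\mathtt{I}^T p^\star = \overline{N}^1$ recovers exactly (\ref{imbal}). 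Uniqueness of $\tilde{\beta}$ follows from $\mathbf{1} \notin \mathrm{col}(\mathtt{I})$: two vectors producing the same $p^\star$ would give $\mathtt{I}(\tilde{\beta}_1 - \tilde{\beta}_2)$ proportional to $\mathbf{1}$, contradicting the full rank of $I_0 = [\mathbf{1}\mid\mathtt{I}]$.

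The main obstacle is the interior point argument: because $\mathrm{KL}$ remains finite on the boundary of the simplex, no direct barrier-type reasoning applies, and one must instead combine the infinite slope of $x \log x$ at $0$ with the availability of the specific interior feasible point $\lambda$ supplied by the hypothesis. Everything else reduces either to convex duality or to the linear-algebraic observation that $\mathbf{1}$ is missing from the column span of $\mathtt{I}$.
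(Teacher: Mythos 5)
Your proof is correct, but it attacks the problem from the opposite side of the duality than the paper does. The paper works in $\tilde{\beta}$-space throughout: it introduces the convex conjugate $f^*$ of the log-sum-exp function $f(\tilde{\beta})=\log(\sum_i \overline{n}^0_i e^{(\mathtt{I}\tilde{\beta})_i})$, invokes a theorem of Hiriart-Urruty on conjugates of convex composite functions to identify $\mathrm{dom}\, f^*$ with the set of strict convex combinations of the rows of $\mathtt{I}$ (the Kullback--Leibler program appears there only as the \emph{value} of $f^*$), and then proves attainment of the supremum defining $f^*(\overline{N}^1)$ by a separate coercivity argument on $\tilde{\beta}\mapsto -\overline{N}^1\cdot\tilde{\beta}+f(\tilde{\beta})$. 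You instead take the entropy program in $p$-space as the primary object: existence of a minimizer comes from compactness of the simplex, strict positivity from the infinite slope of $x\log x$ at $0$ combined with the interior feasible point $\lambda$ supplied by the hypothesis, and the regression vector $\tilde{\beta}$ is recovered as the Lagrange multiplier of the affine constraints. This buys you a more elementary argument --- no composite-conjugate formula, no coercivity computation --- and your interiority argument is in fact a cleaner justification of the step the paper dispatches with the terse remark that ``the KKT conditions impose $\alpha_i>0$.'' Your necessity direction ($\lambda_i=\overline{n}^0_ie^{(\mathtt{I}\tilde{\beta})_i}/Z$) is also more direct than the paper's chain $A\subset B\subset C$. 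The only point worth tightening is uniqueness: your linear-algebra argument assumes that any two solutions of (\ref{imbal}) induce the same $p^\star$, which holds because each induced $p(\tilde{\beta})$ is a feasible KKT point of the convex program and hence the unique minimizer; you should say this explicitly, or simply observe (as the paper does via its Lemma \ref{l2}) that $\tilde{\beta}\mapsto \overline{N}^1\cdot\tilde{\beta}-f(\tilde{\beta})$ is strictly concave because $I_0=[\mathbf{1}\mid\mathtt{I}]$ has full column rank, which settles uniqueness in one line.
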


\begin{remark} 
The condition $\mathtt{I}^T \lambda = \overline{N}^1$ means that we have \(I^T_0 \lambda = I^T_1 \overline{n}^1\) with \(\lambda \in (\mathbb{R}_+^*)^{q_0}\)  and \(\overline{n}^1 \in (\mathbb{R}_+^*)^{q_1}\) so that \(C_1\cap C_0 \ne \emptyset\). In other words, the existence and uniqueness of a solution for the limit problem implies existence and uniqueness  for its associated BLR problem.
\end{remark}

Our proof of this theorem is based on the following three lemmas. 

\begin{lemma}
\label{l1}
The log-sum-exp function  $h : \mathbb{R}^q \mapsto \mathbb{R}$, defined by $h(z)= \log(\sum_{i=1}^{q} e^{z_i})$ is a convex, continuous, increasing function on $\mathbb{R}^q$. The function $f : \mathbb{R}^p \mapsto \mathbb{R}$, $f(\tilde{\beta})= \log(\sum_{i=1}^{q} \overline{n}^0_i e^{(\mathtt{I}\tilde{\beta})_i})$ is continuous and convex on $\mathbb{R}^p$.
\end{lemma}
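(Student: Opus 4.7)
The plan is to verify each claimed property of $h$ directly, then transfer them to $f$ via an affine reformulation. No deep idea is required; the only step with any content is the convexity of $h$.

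First I would dispose of the easy properties of $h$. Continuity follows because $h$ is a composition of the continuous functions $\exp$, summation, and $\log$ (the argument of $\log$ is always positive). Monotonicity follows from the partial derivatives
\[
\frac{\partial h}{\partial z_i}(z) = \frac{e^{z_i}}{\sum_{j=1}^{q} e^{z_j}} > 0,\qquad i \in \{1,\dots,q\},
\]
so $h$ is strictly increasing in each coordinate.

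For convexity, I would compute the Hessian. Writing $p_i(z) = e^{z_i}/\sum_{k} e^{z_k}$, one gets
\[
\frac{\partial^2 h}{\partial z_i\, \partial z_j}(z) = p_i\,\delta_{ij} - p_i p_j,
\]
so that $\nabla^2 h(z) = \mathrm{diag}(p) - p p^T$. The key step, which is the only nontrivial point in the lemma, is then a Cauchy--Schwarz (or Jensen) estimate: for every $v \in \mathbb{R}^q$,
\[
v^T \bigl(\mathrm{diag}(p) - p p^T\bigr) v = \sum_{i=1}^{q} p_i v_i^2 - \Bigl(\sum_{i=1}^{q} p_i v_i\Bigr)^{\!2} \geq 0,
\]
because $(p_i)$ is a probability vector. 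Hence $\nabla^2 h$ is positive semi-definite on all of $\mathbb{R}^q$ and $h$ is convex.

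Finally, I would treat $f$ by reduction to $h$. Since the weights $\overline{n}^0_i$ are strictly positive, set $c \in \mathbb{R}^q$ with $c_i = \log \overline{n}^0_i$; then
\[
f(\tilde{\beta}) = h(\mathtt{I}\tilde{\beta} + c).
\]
Continuity of $f$ follows from continuity of $h$ composed with the affine map $\tilde{\beta} \mapsto \mathtt{I}\tilde{\beta}+c$, and convexity is preserved under precomposition by an affine map, which yields convexity of $f$. The main (and really only) obstacle is the PSD argument for the log-sum-exp Hessian, and that reduces to the one-line Cauchy--Schwarz bound above.
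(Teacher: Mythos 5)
Your proof is correct and follows essentially the same route as the paper: convexity of $h$ via a positive semi-definite Hessian (the Cauchy--Schwarz computation you spell out is the one the paper defers to the proof of its next lemma), monotonicity from the sign of the partial derivatives, and transfer to $f$ by precomposition with the affine map $\tilde{\beta} \mapsto \mathtt{I}\tilde{\beta} + b$ where $\overline{n}^0 = e^{b}$. No gaps.
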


\begin{proof}
Function h has a positive semi-definite Hessian and is then convex. Furthermore for all $y,z \in \mathbb{R}^q$ such that $y_i \le z_i$, $i \in \{1,...,q\}$, we have $h(y) \le h(z)$ and the function is increasing on $\mathbb{R}^q$. The composition with an affine mapping preserves continuity and convexity. Thus, with $z = \mathtt{I}\tilde{\beta} + b$ and $\overline{n}^0 = e^{b}$ we obtain a convex continuous $f(\tilde{\beta}) = h(\mathtt{I}\tilde{\beta} + b)$ and $dom\, f = \mathbb{R}^p$. 
\end{proof}

\begin{lemma}
\label{l2}
The function $f : \mathbb{R}^p \mapsto \mathbb{R}$, $f(\tilde{\beta})= \log(\sum_{i=1}^{q} \overline{n}^0_i e^{(\mathtt{I}\tilde{\beta})_i})$ is strictly convex on $\mathbb{R}^p$.
\end{lemma}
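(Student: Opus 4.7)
The plan is to upgrade the convexity already established in Lemma \ref{l1} to strict convexity by a direct Hessian computation, exploiting the fact that the full design matrix $I_0 = [\mathbf{1}\,|\,\mathtt{I}]$ has rank $p+1$.

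First, I would compute the Hessian of the log-sum-exp function $h(z)=\log(\sum_i e^{z_i})$ on $\mathbb{R}^q$. A standard calculation gives
\[
\nabla^2 h(z) = \mathrm{diag}(p(z)) - p(z) p(z)^T, \qquad p(z)_i = \frac{e^{z_i}}{\sum_j e^{z_j}},
\]
where $p(z)$ lies in the open simplex. A Cauchy--Schwarz argument (or recognising this as the covariance matrix of a categorical distribution with probabilities $p(z)$) shows that $\nabla^2 h(z)$ is positive semi-definite, and that $u^T \nabla^2 h(z) u = 0$ holds if and only if $u$ is constant, i.e.\ $u \in \mathrm{span}(\mathbf{1}_q)$.

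Next, since $f(\tilde{\beta}) = h(\mathtt{I}\tilde{\beta} + b)$ with $b = \log \overline{n}^0$, the chain rule yields
\[
\nabla^2 f(\tilde{\beta}) = \mathtt{I}^T \nabla^2 h(\mathtt{I}\tilde{\beta} + b)\, \mathtt{I}.
\]
For any $v \in \mathbb{R}^p$, setting $u = \mathtt{I} v \in \mathbb{R}^q$ we have $v^T \nabla^2 f(\tilde{\beta}) v = u^T \nabla^2 h(\mathtt{I}\tilde{\beta}+b) u \ge 0$, with equality only when $\mathtt{I} v$ is proportional to $\mathbf{1}_q$.

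The key step, which I expect to be the only real point to watch, is ruling out $\mathtt{I} v = c\,\mathbf{1}_q$ for non-trivial $(v,c)$. This uses that the full design matrix $I_0 = [\mathbf{1}_q\,|\,\mathtt{I}]$ has rank $p+1$: if $\mathtt{I} v = c\,\mathbf{1}_q$, then $I_0 \begin{pmatrix} -c \\ v \end{pmatrix} = 0$, forcing $v=0$ and $c=0$. Therefore $v^T \nabla^2 f(\tilde{\beta}) v > 0$ for every non-zero $v \in \mathbb{R}^p$, so $\nabla^2 f$ is positive definite everywhere, which gives strict convexity on $\mathbb{R}^p$ and completes the proof.
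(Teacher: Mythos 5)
Your proposal is correct and follows essentially the same route as the paper: compute the Hessian of the log-sum-exp function, use Cauchy--Schwarz (equivalently the categorical covariance interpretation) to show the quadratic form vanishes only along constant directions, and then invoke the full rank of $I_0=[\mathbf{1}_q\,|\,\mathtt{I}]$ to exclude $\mathtt{I}v = c\,\mathbf{1}_q$ for $(v,c)\neq 0$. The paper phrases the last step in terms of ruling out affine directions of $f$ along paths $\tilde{\beta}(t)$ rather than directly through the composed Hessian, but the argument is the same.
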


\begin{proof}
The Hessian \(H\) of $h : \mathbb{R}^q \mapsto \mathbb{R}$, $h(z)= \log(\sum_{i=1}^{q} e^{z_i})$, is the following:
\[H_{ij} = \delta_{ij}\frac{e^{z_i}}{\sum_{k=1}^{q}e^{z_k}}-\frac{e^{z_i}}{\sum_{k=1}^{q}e^{z_k}}\frac{e^{z_j}}{\sum_{k=1}^{q}e^{z_k}}\,,\quad i,j \in \{1,...,q\}.\]
For all \(v=(v_1,...,v_q)^T \in \mathbb{R}^q\), we have
\[\sum_{i,j=1}^q v_iH_{ij}v_j = \frac{(\sum_{k=1}^{q}e^{z_k}v_k^2)(\sum_{k=1}^{q}e^{z_k})-(\sum_{k=1}^{q}e^{z_k}v_k)^2}{(\sum_{k=1}^{q}e^{z_k})^2}\,,\]
which is non-negative due to the Cauchy-Schwarz inequality. This expression is equal to zero if and only if there exists \(\lambda \in \mathbb{R}\) such that \(e^{z_k}v_k^2 = \lambda e^{z_k}\,,\forall k \in \{1,...,q\}\). Thus, only in the constant direction \(z_k(t) = t + z_k(0)\,, k \in \{1,...,q\}, t \in \mathbb{R}\), the function \(h\) is affine, in any others, this function is strictly convex.\\
Suppose that there exists a family of parameters \(F_a = \{\tilde{\beta}(t) \in \mathbb{R}^p, t \in [0,a], a >0\}\) such that \(z(t)=\mathtt{I}\tilde{\beta}(t) + b = t + z(0)\) and $e^{b} = \overline{n}^0$. This means that along the path described by \(\tilde{\beta}(t)\) the function \(f\) is affine. We obtain \(\mathtt{I}(\tilde{\beta}(t)-\tilde{\beta}(0))=t\) and with \(t \ne 0\), we have \(\gamma = (-t,\tilde{\beta}(t)-\tilde{\beta}(0))^T \in \mathbb{R}^{p+1} \setminus \{0\}^{p+1}\) such that \(I\gamma = 0\). This is impossible because the matrix \(I\) is of full rank, which proves the lemma.
\end{proof}

We present a corollary to a theorem on the Legendre-Fenchel transform of convex composite functions exposed in \cite{Hiriart}.
\begin{lemma}
\label{l3}
If functions $g_i : \mathbb{R}^p \mapsto \mathbb{R}$, $i \in \{1,...,q\}$ are convex and continuous with $dom\, g_i = \mathbb{R}^p$ and $h : \mathbb{R}^q \mapsto \mathbb{R}$ is convex, continuous and increasing with $dom\, h = \mathbb{R}^q$, then the convex conjugate of $h(g_1,...,g_q)$ is given by
$$[h(g_1,...,g_q)]^*(m) = \min_{\underset{m_1+...+m_q = m}{\alpha_1 \ge 0, ..., \alpha_q \ge 0}} \left(h^*(\alpha_1,...,\alpha_q) + \sum_{i=1}^{q}\alpha_i g_i^*(\frac{m_i}{\alpha_i}) \right)\,,$$
with $m \in (\mathbb{R}^p)^T$.
\end{lemma}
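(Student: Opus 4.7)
The plan is to derive the formula by combining biconjugation of $h$ with the classical fact that the Legendre--Fenchel conjugate of a non-negatively weighted sum of convex functions is an infimal convolution. Throughout, I write conjugates by $^*$ and use the convention $0 \cdot g_i^{*}(m_i/0) = 0$ if $m_i = 0$ and $+\infty$ otherwise, so that the minimum in the claim is well defined.

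First, since $h$ is convex, continuous with $\mathrm{dom}\,h = \mathbb{R}^q$, it coincides with its biconjugate, and writing $z = (g_1(\tilde{\beta}),\dots,g_q(\tilde{\beta}))$ I get
\[
h(g_1(\tilde{\beta}),\dots,g_q(\tilde{\beta})) = \sup_{\alpha \in \mathbb{R}^q}\Bigl\{\langle \alpha,z\rangle - h^{*}(\alpha)\Bigr\}.
\]
The assumption that $h$ is increasing in each argument forces $h^{*}(\alpha) = +\infty$ outside $\mathbb{R}_{+}^{q}$ (test with directions where some $z_i \to -\infty$), so the supremum can be restricted to $\alpha \geq 0$. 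Plugging this into the definition of the conjugate gives
\[
[h(g_1,\dots,g_q)]^{*}(m) = \sup_{\tilde{\beta} \in \mathbb{R}^p}\inf_{\alpha \geq 0}\Bigl\{\langle m,\tilde{\beta}\rangle - \sum_{i=1}^{q}\alpha_i g_i(\tilde{\beta}) + h^{*}(\alpha)\Bigr\}.
\]

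Next I would swap the sup and inf. This is the technical heart of the proof: the inner expression is affine (hence concave) in $\tilde{\beta}$ for fixed $\alpha$, and convex in $\alpha$ for fixed $\tilde{\beta}$ (since $-\sum \alpha_i g_i(\tilde{\beta})$ is linear in $\alpha$ and $h^{*}$ is convex). Sion's minimax theorem applies provided one has a compactness condition; since the $g_i$ are finite and convex on all of $\mathbb{R}^p$ and $h^{*}$ is closed proper convex with values in $\mathbb{R}\cup\{+\infty\}$, one may either argue directly on bounded level sets or, more directly, invoke the Hiriart-Urruty--Lemar\'echal theorem on the conjugate of convex composites from \cite{Hiriart}, of which the present statement is indeed the corollary announced in the lemma. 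After exchanging, this yields
\[
[h(g_1,\dots,g_q)]^{*}(m) = \inf_{\alpha \geq 0}\Bigl\{ h^{*}(\alpha) + \sup_{\tilde{\beta}}\bigl(\langle m,\tilde{\beta}\rangle - \sum_{i=1}^{q}\alpha_i g_i(\tilde{\beta})\bigr) \Bigr\} = \inf_{\alpha \geq 0}\Bigl\{ h^{*}(\alpha) + \bigl(\sum_{i=1}^{q}\alpha_i g_i\bigr)^{*}(m) \Bigr\}.
\]

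Finally, I would compute the remaining conjugate by the infimal convolution formula. For each $\alpha_i > 0$ one has $(\alpha_i g_i)^{*}(y) = \alpha_i g_i^{*}(y/\alpha_i)$, and since the $g_i$ are continuous and finite everywhere the conjugate of a finite sum equals the infimal convolution of the conjugates (no closure issue thanks to the inf being attained, or achieved up to $\varepsilon$). This gives
\[
\Bigl(\sum_{i=1}^{q}\alpha_i g_i\Bigr)^{*}(m) = \inf_{m_1+\dots+m_q = m}\sum_{i=1}^{q}\alpha_i g_i^{*}\bigl(m_i/\alpha_i\bigr),
\]
and combining with the previous display gives exactly the claimed formula, with the outer infimum attained (hence a minimum) by coercivity of $h^{*}$ on its effective domain intersected with $\mathbb{R}_+^q$. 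The main obstacle I anticipate is the rigorous justification of the minimax exchange together with the careful treatment of boundary directions $\alpha_i = 0$; both are handled cleanly by invoking the composite-conjugate theorem from \cite{Hiriart} rather than re-doing the minimax argument from scratch.
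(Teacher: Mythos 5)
Your proposal is correct in outline, but it is worth noting that the paper itself does not prove this lemma at all: it is presented verbatim as a corollary of the composite-conjugate theorem in \cite{Hiriart}, with no argument given. You instead reconstruct the standard derivation --- biconjugation of $h$, restriction of the dual variable to $\alpha\ge 0$ via monotonicity, a minimax exchange, and the infimal-convolution formula for $\bigl(\sum_i\alpha_i g_i\bigr)^*$ --- which makes transparent exactly where each hypothesis enters (monotonicity of $h$ kills $h^*$ off $\mathbb{R}_+^q$; finiteness of the $g_i$ guarantees the qualification condition for the inf-convolution and fixes the convention $0\cdot g_i^*(m_i/0)=\sigma_{\mathrm{dom}\,g_i}(m_i)$, consistent with how the paper later uses the lemma). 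Two small caveats: the inner function is concave, not affine, in $\tilde\beta$ for fixed $\alpha$ (the term $-\sum_i\alpha_i g_i(\tilde\beta)$ is only concave for general convex $g_i$; it happens to be affine in the paper's application where $g_i(\tilde\beta)=(\mathtt{I}\tilde\beta)_i+b_i$), though concavity is all Sion's theorem needs; and the sup--inf exchange together with attainment of the outer minimum is the genuine technical content, which you do not establish from scratch but correctly delegate to \cite{Hiriart} --- the same source the paper leans on. So both routes ultimately rest on the same external theorem; yours buys an explicit account of the mechanism, while the paper's buys brevity.
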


\begin{proof}[Proof of the theorem]
Let us define the function \(F_m\) such that
\[F_m : \left\{\begin{array}{cl}
			\mathbb{R}^p& \to \mathbb{R}\,,\\
			\tilde{\beta} &\mapsto m \cdot \tilde{\beta} - \log(\sum_i \overline{n}^0_i e^{(\mathtt{I}\tilde{\beta})_i})\,.
		\end{array}\right.\]	
\(F_m\) is differentiable on \(\mathbb{R}^p\) and the first order equations 
\[\frac{\partial}{\partial \tilde{\beta}_j}(\sum_{i=1}^{p}m_i \cdot \tilde{\beta}_i - f(\tilde{\beta})) = 0\,,\, j \in \{1,...,p\}\,,\]
are equal to the system (\ref{imbal}) with \(\overline{N}^1 = m^T\). Function \(F_m\) is strictly concave as the sum of a concave function and a strictly concave function (see Lemma \ref{l2}). Consequently, the solution \(\gamma\) to \(\nabla F_{\overline{N}^1}(\gamma) = 0\) is unique.

We now introduce the convex conjugate of the function $f$: 
\[f^* : \left\{\begin{array}{cl}
			(\mathbb{R}^p)^T &\mapsto \mathbb{R}\,,\\
			m  &\mapsto \sup\limits_{\tilde{\beta} \in \mathbb{R}^p}\left(m \cdot \tilde{\beta} - \log(\sum_i \overline{n}^0_i e^{(\mathtt{I}\tilde{\beta})_i})\right) = \sup\limits_{\tilde{\beta} \in \mathbb{R}^p}\left( F_m(\tilde{\beta})\right)\,.
		\end{array}\right.\]	
We will prove that the three following sets are identical
\[A = \bigg\{ m \in (\mathbb{R}^p)^T \,|\, \exists \tilde{\beta} \in \mathbb{R}^p\,,\, \mathtt{I}^T\left( \frac{\overline{n}^0 e^{\mathtt{I}\tilde{\beta}}}{\sum_i \overline{n}^0_ie^{(\mathtt{I}\tilde{\beta})_i}}\right) = m^T \bigg\}\,,\]

\[B = \bigg\{ m \in (\mathbb{R}^p)^T \,|\, f^*(m) < + \infty \bigg\}\,,\]

\[C = \bigg\{ m \in (\mathbb{R}^p)^T \,|\, \exists \, \lambda \in (\mathbb{R^*_+})^q\,,\, m = \lambda^T \mathtt{I} \,,\, \sum_{i=1}^{q}\lambda_i = 1 \bigg\}\,.\]

\underline{i) \(A \subset B\)}. If \(m_0 \in A\) there exists \(\tilde{\beta} \in \mathbb{R}^p\) solution to (\ref{imbal}), that is \(\nabla F_{m_0}(\tilde{\beta}) = 0\). Moreover \(f^*(m_0) = F_{m_0}(\tilde{\beta})\) because of the strict concavity of \(F_{m_0}\). Thus \(m_0 \in B\).

\underline{ii) \(B \subset C\)}. We use the Lemma \ref{l3} with $g_i(\tilde{\beta}) = (\mathtt{I}\tilde{\beta})_i +b_i$ and $h$ the log-sum-exp function verifying the necessary conditions (Lemma \ref{l1}). We have the convex conjugate $g_i^*(u_i)= -b_i$ if $u_i = \mathtt{I}_i$ and $+\infty$ elsewhere (we do not consider the presence of a null row \(\mathtt{I}_i=(0,...,0)\)). The only way to obtain a finite result is to impose the constraint $u_i = \frac{m_i}{\alpha_i} = \mathtt{I}_i$ for all $i \in \{1,...,q\}$. Therefore, knowing that
$$ h^*(\alpha_1,...,\alpha_q) = \left\{ \begin{array}{cc} \sum_{i=1}^{q}\alpha_i \log(\alpha_i) & if \,\, \alpha_1 \ge 0, ..., \alpha_q \ge 0\,,\,\alpha_1+...+\alpha_q = 1\,, \\ +\infty & otherwise\,,\end{array} \right.$$
we have
$$f^*(m)=[h(g_1,...,g_q)]^*(m) = \min_{\underset{\alpha_1\mathtt{I}_1+...+\alpha_q\mathtt{I}_q = m}{\underset{\alpha_1+...+\alpha_q = 1}{\alpha_1 \ge 0, ..., \alpha_q \ge 0}}} \left(\sum_{i=1}^{q}\alpha_i \log(\alpha_i) + \sum_{i=1}^{q}\alpha_i (-b_i) \right)$$
$$ = \min_{\underset{\alpha_1\mathtt{I}_1+...+\alpha_q\mathtt{I}_q = m}{\underset{\alpha_1+...+\alpha_q = 1}{\alpha_1 \ge 0, ..., \alpha_q \ge 0}}} \left(\sum_{i=1}^{q}\alpha_i \log(\frac{\alpha_i}{\overline{n}_i^0}) \right)\,.$$
We minimize a Kullback--Leibler divergence between two distributions under linear constraints. If one of the $\alpha_i$ is zero, $0g_i^*(\frac{m_i}{0}) = \sigma_{dom \, g_i}(m_i) = 0$ if $m_i=0$ elsewhere $ + \infty$ (see \cite{Hiriart}) and the previous equalities remain true with \(m_i = \alpha_i\mathtt{I}_i\). The KKT conditions of this problem impose the constraint \(\alpha_i >0\) for all \(i \in \{1,...,q\}\). Thus,
\begin{equation}
\label{KL}
f^*(m)=[h(g_1,...,g_q)]^*(m) =\min_{\underset{\alpha_1\mathtt{I}_1+...+\alpha_q\mathtt{I}_q = m}{\underset{\alpha_1+...+\alpha_q = 1}{\alpha_1 > 0, ..., \alpha_q > 0}}} \left(\sum_{i=1}^{q}\alpha_i \log(\frac{\alpha_i}{\overline{n}_i^0}) \right)\,.
\end{equation}
This minimum exists: this is a linear restriction to a convex and continuous function in a simplex and therefore \(B = dom \, f^* \subset C\).

\underline{iii) \(C \subset B \subset A\)}. If \(m_0 \in C\), then there exists \(\lambda \in (\mathbb{R}^*_+)^q\) such that \(\sum_{i=1}^{q}\lambda_i = 1\) and \(\lambda^T\mathtt{I} = m_0\) so that
\[f^*(m_0) = \sup_{\tilde{\beta} \in \mathbb{R}^p}\left(m_0 \cdot \tilde{\beta} - \log(\sum_i \overline{n}^0_i e^{(\mathtt{I}\tilde{\beta})_i})\right) = \sup_{\tilde{\beta} \in \mathbb{R}^p}\left(\log(\frac{e^{\sum_i \lambda_i(\mathtt{I}\tilde{\beta})_i}}{\sum_i \overline{n}^0_i e^{(\mathtt{I}\tilde{\beta})_i}}) \right)\,.\]
If the supremum is reached, there is a miximizing element \(\gamma \in \mathbb{R}^p\) and \(m_0 \in B\), this element is the solution to the system (\ref{imbal}) and thus \(m_0 \in A\). To state this result, it is enough to have \(-F_{m_0}\) coercive. Let \(\epsilon \in \mathbb{R}^p \setminus \{0\}^p\) be an arbitrary vector and \(\tilde{\beta} = x \epsilon\) with \(x \in \mathbb{R}\). Then,
\[F_{m_0}(x \epsilon) = \log \left(\frac{e^{\sum_i \lambda_i(\mathtt{I}\epsilon)_i x}}{\sum_i \overline{n}^0_i e^{(\mathtt{I} \epsilon)_i x}} \right) = \log \left( \frac{e^{\sum_i \lambda_i \omega_i x}}{\sum_i \overline{n}^0_i e^{\omega_i x}} \right)\,,  \]
with \(\omega = \mathtt{I}\epsilon\). Notice that the vector \(\omega\) can not satisfy the relations \(\omega_1 = ... =\omega_p\) because \(I\) is of full rank. Thus, if \(W = \max\limits_{i \in \{1,...,q\}}(\omega_i)\), we have
\[F_{m_0}(x \epsilon) = \log \left( \frac{e^{(\sum_i( \lambda_i \omega_i) - W) x}}{\sum_i \overline{n}^0_i e^{(\omega_i-W) x}} \right)\,,  \]
and \( \sum_i( \lambda_i \omega_i) - W<0\) because \(\sum \lambda_i = 1\). Therefore, with \(\Omega = \{i \in \{1,...,p\}\,|\, \omega_i = W\}\)
\[ \lim_{x\to +\infty} e^{(\sum_i( \lambda_i \omega_i) - W) x} = 0\,,\quad \lim_{x\to +\infty} \sum_i \overline{n}^0_i e^{(\omega_i-W) x} = \sum_{i \in \Omega}\overline{n}^0_i >0  \,,\]
which proves that the function \(-F_{m_0}\) is coercive when \(m_0 \in C\) and achieves the proof.
\end{proof}

The expression \(f^*(m)\) in (\ref{KL}) is the minimization of a relative entropy between the class 0 distribution and a kind of ghost class 1 distribution (built on the \(I = I_0\) design matrix). With the duality property, we can introduce a new likelihood. The following proposition leads to the same "limit" likelihood and justifies the use of the adjective "rescaled". Indeed:

\begin{proposition}
\label{rescaledlik}
The limit imbalanced equations arise from the following rescaled likelihood:
\[L^*(\tilde{\beta}|\mathtt{I}_0,\mathtt{I}_1,\overline{n}^0,n^1) = \prod_{j=1}^{q_1} \left( \frac{e^{(\mathtt{I}_1\tilde{\beta})_j}}{\sum_i \overline{n}^0_ie^{(\mathtt{I}_0\tilde{\beta})_i}} \right)^{n^1_j}\,.\]
\end{proposition}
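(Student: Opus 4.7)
The proof is a direct computation of first order conditions, so the plan is to simply take the logarithm of $L^*$, differentiate with respect to each $\tilde{\beta}_k$ for $k\in\{1,\dots,p\}$, and check that we recover equation (\ref{imbal}) exactly.

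First I would expand
\[\log L^*(\tilde{\beta}) = \sum_{j=1}^{q_1} n^1_j (\mathtt{I}_1\tilde{\beta})_j - |n^1| \log\Bigl(\sum_{i=1}^{q_0} \overline{n}^0_i e^{(\mathtt{I}_0\tilde{\beta})_i}\Bigr),\]
using that the denominator does not depend on $j$ and $\sum_j n^1_j = |n^1|$. The first term rewrites as $\sum_j n^1_j (\mathtt{I}_1\tilde{\beta})_j = (n^1)^T \mathtt{I}_1 \tilde{\beta} = (N^1)^T \tilde{\beta}$, which isolates the role of the means $\overline{N}^1$ once we divide by $|n^1|$.

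Next, I would differentiate component-wise: for $k \in \{1,\dots,p\}$,
\[\frac{\partial \log L^*}{\partial \tilde{\beta}_k}(\tilde{\beta}) = N^1_k - |n^1|\,\frac{\sum_i \overline{n}^0_i (\mathtt{I}_0)_{ik} e^{(\mathtt{I}_0\tilde{\beta})_i}}{\sum_i \overline{n}^0_i e^{(\mathtt{I}_0\tilde{\beta})_i}}.\]
Setting this to zero and dividing by $|n^1|$ yields
\[\overline{N}^1_k = \frac{\sum_i \overline{n}^0_i (\mathtt{I}_0)_{ik} e^{(\mathtt{I}_0\tilde{\beta})_i}}{\sum_i \overline{n}^0_i e^{(\mathtt{I}_0\tilde{\beta})_i}} = \Bigl[\mathtt{I}_0^T\Bigl(\frac{\overline{n}^0 e^{\mathtt{I}_0\tilde{\beta}}}{\sum_i \overline{n}^0_i e^{(\mathtt{I}_0\tilde{\beta})_i}}\Bigr)\Bigr]_k,\]
stacking these $p$ equalities into a vector gives (\ref{imbal}).

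There is no real obstacle here, the computation being one line of calculus; the only thing worth stressing is that the system (\ref{imbal}) contains $p$ equations (no intercept), which matches the $p$-dimensional variable $\tilde{\beta}$ and is consistent with the fact that $\beta_0$ has already been sent to $-\infty$ in the limit imbalance regime, so it should no longer appear in $L^*$. I would also briefly note that the strict concavity of $\log L^*$ follows from Lemma \ref{l2} (the $-|n^1|$-multiple of the strictly convex log-sum-exp term dominates the linear first term), so the first-order conditions indeed characterize the unique maximizer, which coincides with the solution of (\ref{imbal}) under the strong overlap condition of Theorem \ref{surrounded}.
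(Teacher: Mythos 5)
Your computation is correct: $\log L^* = (N^1)^T\tilde{\beta} - |n^1|\log\bigl(\sum_i \overline{n}^0_i e^{(\mathtt{I}_0\tilde{\beta})_i}\bigr)$, its gradient vanishes exactly on the system (\ref{imbal}), and strict concavity via Lemma \ref{l2} makes these first-order conditions characterize the unique maximizer. However, you take a genuinely different route from the paper. The paper does not verify the first-order conditions of $L^*$ at all (that connection is already available from the duality argument in the proof of Theorem \ref{surrounded}, where the function $F_{\overline{N}^1}(\tilde{\beta}) = \overline{N}^1\cdot\tilde{\beta} - \log\bigl(\sum_i \overline{n}^0_i e^{(\mathtt{I}\tilde{\beta})_i}\bigr)$, which equals $\tfrac{1}{|n^1|}\log L^*$, is shown to have (\ref{imbal}) as its stationarity equations). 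Instead, the paper's proof starts from the original likelihood $L(\beta)$, substitutes the expression $e^{\beta_0} = \tfrac{|n^1|}{|n^0|}C\bigl(\tfrac{|n^1|}{|n^0|}\bigr)$ from (\ref{approxbeta0}), and shows that $L(\beta)\times\bigl(\tfrac{|n^0|}{|n^1|}\bigr)^{|n^1|}e^{|n^1|} \to L^*(\tilde{\beta})$ as the imbalance grows. That limiting argument is what justifies the adjective ``rescaled'' and ties $L^*$ back to the actual statistical model; your argument, by contrast, proves the literal claim of the proposition (that (\ref{imbal}) arises as the first-order conditions of $L^*$) more directly and cheaply, but treats $L^*$ as a given objective rather than deriving it from $L$. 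Both are valid; yours is the shorter verification, the paper's is the one that explains the provenance of $L^*$.
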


\begin{proof}
With the initial likelihood
\[L(\beta|I_0,I_1,n^0,n^1) =  \prod_{j = 1}^{q_1}\left(\frac{e^{(I_1\beta)_j}}{1+e^{(I_1\beta)_j}}\right)^{n_j^1}\prod_{j = 1}^{q_0}\left(1+e^{(I_0\beta)_j}\right)^{-n_j^0}\,,\]
and the relation \(\exp(\beta_0) = \frac{|n^1|}{\sum_{i=1}^{q_0} n_i^0 e^{(\mathtt{I}_0\tilde{\beta})_i}+\sum_{i=1}^{q_1} n_i^1 e^{(\mathtt{I}_1\tilde{\beta})_i}}= \frac{|n^1|}{|n^0|} C(\frac{|n^1|}{|n^0|})\) (see \ref{approxbeta0}), we obtain the following expression for the likelihood, using notation \(x = \frac{|n^1|}{|n^0|}\):
\[x^{|n^1|}\prod_{j=1}^{q_1} \left(e^{(\mathtt{I}_1\tilde{\beta})_j}C(x)\right)^{n^1_j} \prod_{j=1}^{q_1} \left( \frac{1}{1 + x C(x)e^{(\mathtt{I}_1\tilde{\beta})_j}} \right)^{n^1_j} \prod_{j=1}^{q_0} \left( \frac{1}{1 + x C(x)e^{(\mathtt{I}_0\tilde{\beta})_j}} \right)^{n^0_j}\,. \]
We consider that \(|n^0|\) is large enough to consider the limit (with \(|n^1|\) fixed) \("x \to 0"\) and to make the approximations
\[\left(e^{(\mathtt{I}_1\tilde{\beta})_j}C(x)\right)^{n_j^1} \to \left(\frac{e^{(\mathtt{I}_1\tilde{\beta})_j}}{\sum_i \overline{n}^0_ie^{(\mathtt{I}_0\tilde{\beta})_i}}\right)^{n_j^1}\,,\quad \prod_{j=1}^{q_1} \left( \frac{1}{1 + x C(x)e^{(\mathtt{I}_1\tilde{\beta})_j}} \right)^{n^1_j} \to 1\,,\]
and
\[\prod_{j=1}^{q_0} \left( \frac{1}{1 + x C(x)e^{(\mathtt{I}_0\tilde{\beta})_j}} \right)^{n^0_j} \to e^{-|n^1|}\,, \]
thus, using a rescaling term,
\[L(\beta) \times \left(\frac{|n^0|}{|n^1|}\right)^{|n^1|}e^{|n^1|} \to L^*(\tilde{\beta})\,,\]
as \(\beta_0\) tends to minus infinity because \(\frac{|n^0|}{|n^1|} \to + \infty\). 
\end{proof}

The reader can see an analogy in physics with the existence of different scales of modelization. For example, the discrete mincroscopic N-body problem changed into the mesoscopic Boltzmann equation using the Boltzmann-Grad limit. See the book \cite{Saint} for further information on hydrodynamic limits.\\
This new likelihood makes now possible to consider a wide range of problems, related to variance reduction using simple prior penalties (Subsection \ref{subsection34}) or regularization (Section~\ref{section4}).

\subsection{The relative entropy dual problem}
With a likelihood and an entropy, we benefit from two points of view in order to numerically estimate the regression coefficients. The classical approach using a Newton-Raphson algorithm associated to the likelihood can be challenged by other algorithms on the primal or dual problems as described in \cite{Minka} and \cite{Yu} for classical logistic regression. We present here the dual problem and its link with initial regression coefficients. We leave the numerical analysis to another study.

\begin{proposition}
The regression coefficients of the limit imbalanced regression are given by the formulae
$$\hat{\tilde{\beta}} = (P^TP)^{-1}P^T \log\left( \frac{n^*}{\overline{n}^0} e^{-A}\right)\,,$$
where $n^*$ is the probability distribution solving a relative entropy problem with linear constraints
\[n^* = \operatornamewithlimits{argmin}\limits_{\underset{ \mathtt{I}^T \alpha = \overline{N}^1}{\underset{\alpha_1+...+\alpha_q = 1}{\alpha_1 > 0, ..., \alpha_q > 0}}} DL(\alpha || \overline{n}^0)\,,\]
$A = \sum_{i=1}^{q}n^*_i \log(\frac{n^*_i}{\overline{n}_i^0})$ and $P = \mathtt{I} - M$ with $P_{ij}=\mathtt{I}_{ij}-\overline{N}^1_j$.
\end{proposition}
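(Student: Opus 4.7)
The plan is to exploit the duality already established in the proof of Theorem~\ref{surrounded}: the minimum relative entropy problem defining $n^*$ is precisely the convex conjugate of $f$ at $\overline{N}^1$, so its KKT conditions will express $n^*$ as an exponential family whose parameters can be matched to $\hat{\tilde{\beta}}$. Once $n^*$ has been identified with an exponential reweighting of $\overline{n}^0$ by $\mathtt{I}\hat{\tilde{\beta}}$, extracting $\hat{\tilde{\beta}}$ from $\log(n^*/\overline{n}^0)$ becomes a linear problem, and the formula follows by a least-squares inversion.

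First I form the Lagrangian for $\min \sum_i \alpha_i \log(\alpha_i/\overline{n}_i^0)$ under the constraints $\sum_i \alpha_i = 1$ and $\mathtt{I}^T\alpha = \overline{N}^1$. Stationarity in $\alpha_i$ gives $\log(\alpha_i/\overline{n}_i^0) = \mu - 1 + (\mathtt{I}\nu)_i$ for a scalar multiplier $\mu$ and a vector $\nu \in \mathbb{R}^p$, hence $n^*_i = \overline{n}_i^0\, e^{(\mathtt{I}\nu)_i}/Z$ with $Z = \sum_k \overline{n}_k^0 e^{(\mathtt{I}\nu)_k}$. The primal feasibility condition $\mathtt{I}^T n^* = \overline{N}^1$ is then exactly the limit imbalanced system~(\ref{imbal}), whose unique solution is $\hat{\tilde{\beta}}$ by Theorem~\ref{surrounded}, so I may identify $\nu = \hat{\tilde{\beta}}$.

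Next I evaluate $A$ using the exponential representation and the constraints:
$$A = \sum_i n^*_i\bigl((\mathtt{I}\hat{\tilde{\beta}})_i - \log Z\bigr) = \overline{N}^1 \cdot \hat{\tilde{\beta}} - \log Z.$$
Subtracting $A$ from $\log(n^*_i/\overline{n}_i^0) = (\mathtt{I}\hat{\tilde{\beta}})_i - \log Z$ cancels the partition function and leaves
$$\log(n^*_i/\overline{n}_i^0) - A = (\mathtt{I}\hat{\tilde{\beta}})_i - \overline{N}^1 \cdot \hat{\tilde{\beta}} = (P\hat{\tilde{\beta}})_i,$$
i.e.\ the vector identity $\log\!\bigl((n^*/\overline{n}^0)\,e^{-A}\bigr) = P\hat{\tilde{\beta}}$.

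The final step is to invert this linear system, and the only real obstacle is showing that $P^TP$ is invertible, equivalently that $P$ has full column rank $p$. Writing $P = \mathtt{I} - \mathbf{1}(\overline{N}^1)^T$, any $v \in \ker P$ satisfies $\mathtt{I}v = (\overline{N}^1 \cdot v)\,\mathbf{1}$. But the full design matrix $I = [\mathbf{1}\,|\,\mathtt{I}]$ has rank $p+1$ by hypothesis, so $\mathbf{1} \notin \mathrm{colspan}(\mathtt{I})$; therefore $\overline{N}^1 \cdot v = 0$ and $\mathtt{I}v = 0$, forcing $v = 0$. Multiplying $P\hat{\tilde{\beta}} = \log\!\bigl((n^*/\overline{n}^0)\,e^{-A}\bigr)$ by $(P^TP)^{-1}P^T$ then yields the announced formula.
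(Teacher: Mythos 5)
Your proposal is correct, and it reaches the key identity $n^*_i=\overline{n}^0_i e^{(\mathtt{I}\hat{\tilde{\beta}})_i}/Z$ by a genuinely different route than the paper. The paper works backwards from the fact that the duality gap is zero ($\overline{N}^1\cdot\hat{\tilde{\beta}}-\log Z=A$): it only knows a priori that $\mathtt{I}^T n^*=\mathtt{I}^T(\overline{n}^0e^{A}e^{P\hat{\tilde{\beta}}})$, so it parametrizes the discrepancy by a vector $H\gamma$ in the null space of $I^T$ and then uses strict convexity of the Kullback--Leibler divergence to force $\gamma=0$. You instead work forwards from the KKT stationarity conditions of the entropy program, which directly exhibit $n^*$ as an exponential tilt of $\overline{n}^0$ with multiplier $\nu$, and you identify $\nu=\hat{\tilde{\beta}}$ by observing that primal feasibility is literally the limit system (\ref{imbal}), whose solution is unique by Theorem \ref{surrounded}. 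Your route is shorter and avoids the orthogonal-complement construction entirely; its only implicit debt is the validity of the Lagrange-multiplier conditions, which holds here because the program is convex with affine constraints, the strong overlap condition supplies a strictly positive feasible point, and the entropy objective forces the minimizer into the interior $\alpha_i>0$ (a point the paper itself invokes in the proof of Theorem \ref{surrounded}). Your cancellation of the partition function via $A=\overline{N}^1\cdot\hat{\tilde{\beta}}-\log Z$ and your full-column-rank argument for $P=\mathtt{I}-\mathbf{1}(\overline{N}^1)^T$ coincide in substance with the paper's concluding steps, and both are sound.
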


\begin{proof}
With the existence of a unique solution (see Subsection \ref{subsection32}), there exists a solution \(n^* \in (\mathbb{R}^*_+)^q\) such that \(\mathtt{I}^T n^* = \overline{N}^1\), \(\sum_i n^*_i = 1\), and
$$\overline{N}^1 \cdot \tilde{\beta} - \log(\sum_i \overline{n}^0_i e^{(\mathtt{I}\tilde{\beta})_i}) = \sum_{i=1}^{q}n^*_i \log(\frac{n^*_i}{\overline{n}_i^0}) = A\,.$$
Then, using equations (\ref{imbal}) we obtain
$$\mathtt{I}^T n^* = \mathtt{I}^T (\overline{n}^0 e^{A}e^{P\tilde{\beta}})\,.$$
Let $H$ in $\mathcal{M}_{q \times (q-p-1)}(\mathbb{R})$ be an orthogonal matrix to $I$ (the previous relation remains true with \(I\) instead of \(\mathtt{I}\)) and $\gamma \in \mathbb{R}^{q-p-1}$, such that we can remove \(\mathtt{I}\) to obtain the relation
$$n^* + H \gamma = \overline{n}^0 e^{A}e^{P\tilde{\beta}}\,,$$
hence,
$$-\sum_{i=1}^{q}n^*_i \log(\frac{n^*_i}{\overline{n}_i^0})  + \log(\frac{n^*_k+(H\gamma)_k}{\overline{n}_k^0}) = \sum_{j=1}^{p}(\mathtt{I}_{kj}-\overline{N}^1_j)\tilde{\beta}_j\,,\quad k \in \{1,...,q\}\,.$$
Summing all these relations with weights \(n^*_k+(H\gamma)_k\), using the fact that \(\sum_{k=1}^{q}(H\gamma)_k=0\), gives
$$\sum_{i=1}^{q}n^*_i \log(\frac{n^*_i}{\overline{n}_i^0})  - \sum_{k=1}^{q}(n^*_k+(H\gamma)_k)\log(\frac{n^*_k+(H\gamma)_k}{\overline{n}_k^0}) = 0\,.$$
Due to convexity of the Kullback-Leibler divergence, we have a unique minimum obtained (by definition of \(n^*\)) at $\gamma =0$. Therefore

\[P \tilde{\beta} = \log \left(\frac{n^*}{\overline{n}^0}e^{-A} \right)\,,\]
and the result is proved if $P$ is of full rank. Suppose that this is not the case. Then, there exists $\gamma \in \mathbb{R}^p \setminus \{0\}^p$ such that \(P \gamma = (\mathtt{I}-M)\gamma = 0\), therefore \(\mathtt{I}\gamma = C\) with $C$ a vector with identical components all equal to $\sum_{i=1}^{p}\overline{N}^1_i \gamma_i$. Consequently, the matrix \(I\) (that is $\mathtt{I}$ with the intercept column of ones) is no more of full rank, which is, by definition of \(I = I_0\), impossible.
\end{proof}

\subsection{Priors for variance reduction and a priori information}
\label{subsection34}

The rare events structure of class imbalance goes hand in hand with the problem of precision for estimates. A classical solution consists in introducing an a priori distribution in a Bayesian context. This can be done using a Jeffreys non-informative prior \cite{Jeffreys} allowing both first order bias removal and variance shrinkage \cite{Firth}. Thus, we have to maximize the expression
\[L^*_J(\tilde{\beta}|\mathtt{I}_0,\mathtt{I}_1,\overline{n}^0,n^1) = \prod_{j=1}^{q_1} \left( \frac{e^{(\mathtt{I}_1\tilde{\beta})_j}}{\sum_i \overline{n}^0_ie^{(\mathtt{I}_0\tilde{\beta})_i}} \right)^{n^1_j} \times |\mathcal{I}(\tilde{\beta})|^{1/2}\,,\]
with \(|\mathcal{I}|\) the determinant of the Fisher information matrix. This approach is implemented in the R package \emph{logistf} for logistic regressions. In the imbalanced case, we search for a method conserving the shape of the limit equations and achieving at the same time variance reduction: we choose the following approximation
\[\frac{1}{2}\log(|\mathcal{I}(\tilde{\beta})|) \approx \frac{1}{2}\sum_{i=1}^{p} \log\left( \frac{\mathtt{I}_i^T(\overline{n}^0 e^{\mathtt{I}\tilde{\beta}})}{\sum_j \overline{n}^0_je^{(\mathtt{I}\tilde{\beta})_j}}-(\frac{\mathtt{I}_i^T(\overline{n}^0 e^{\mathtt{I}\tilde{\beta}})}{\sum_j \overline{n}^0_je^{(\mathtt{I}\tilde{\beta})_j}})^2\right)\,,\]
supposing an absence of correlation between predictors in a random design framework (see Section \ref{section4}). With this hypothesis, we derive first order equations
\[N^1_i - |n^1|\frac{\mathtt{I}_i^T\overline{n}^0 e^{\mathtt{I}\tilde{\beta}}}{\sum_j \overline{n}^0_je^{(\mathtt{I}\tilde{\beta})_j}} + \frac{1}{2}\left(1 - 2\frac{\mathtt{I}_i^T(\overline{n}^0 e^{\mathtt{I}\tilde{\beta}})}{\sum_j \overline{n}^0_je^{(\mathtt{I}\tilde{\beta})_j}} \right) = 0\,,\quad i \in \{1,...,p\}\,,\]
thus,
\[\mathtt{I}^T\left(\frac{\overline{n}^0 e^{\mathtt{I}\tilde{\beta}}}{\sum_j \overline{n}^0_je^{(\mathtt{I}\tilde{\beta})_j}}\right) = \frac{N^1 + \frac{1}{2}}{|n^1|+1} = \overline{N}^1_J\,.\]
In table \ref{table3}, we simulate data sets as previously done with the length for \(n^0 = (n^0_1,...,n^0_{10})^T\) fixed (to \(10\)) and we compare estimated bias and variance for coefficient \(\beta_4 = 0.25\) with three different methods: a classical logistic regression (\emph{bias} and \emph{sd.}), the imbalanced case with means \(\overline{N}^1_J\) (\emph{im. bias} and \emph{im. sd.}) and the Jeffreys exact penalty (\emph{J. bias} and \emph{J. sd.}).

\begin{table}[!ht]

\begin{center}
\begin{tabular}{cccccccc}

$\beta_0$ & \multicolumn{1}{c}{-5} & \multicolumn{1}{c}{-4} & \multicolumn{1}{c}{-3}& \multicolumn{1}{c}{-2}& \multicolumn{1}{c}{-1}& \multicolumn{1}{c}{0}\\
$|n^0|/|n^1|$& 143 & 52 & 19 & 7.1 & 2.7 & 1.0 \\
\hline
sd.&  0.3720& 0.2173 & 0.1328 & 0.08835& 0.06388& 0.05645  \\

im. sd.& 0.3672&  0.2165 & 0.1326& 0.08837 & 0.06406 & 0.05677\\
J. sd.& 0.3575 & 0.2141 & 0.1322& 0.08814 & 0.06382 & 0.05641\\
\cdashline{1-7}
bias& 6.367e-3& 4.247e-3 & -8.032e-4 &  1.281e-3 & -2.822e-4 & 3.048e-4   \\
im. bias& 3.284e-3 &  3.025e-3& -1.176e-3& 1.159e-3& -3.229e-4& 2.987e-4 \\
J. bias& -1.631e-3&  1.419e-3& -1.858e-3& 8.354e-4& -5.028e-4& 1.345e-4 \\
\hline
\end{tabular}
\caption{Variance and bias analysis with prior distributions for coefficient \(\beta_4 = 0.25\)}
\end{center}
\label{table3}
\end{table}

Variance reduction is about 2 percents with the Jeffreys prior and the half as much its easily computable approximation in class imbalance. Bias was already small and gets a little smaller. The shrinkage of the variance is limitated by the Cram\'er-Rao bound (see Fisher variance in table \ref{table1}) and no miraculous reduction was conceivable.\\

In the next section, we consider path following methods to complete regularization and highlight its "simplicity" with binary data. The initial parameters being the maximum a posteriori estimate (MAP), this estimation is a central problem of the limit imbalanced study. The benefit of the rescaled likelihood compared with the standard one is in the easy use of exponential a priori penalties. Indeed, with the penalty\footnote{P could be written as a probability distribution with a normalization term (the support of regression coefficients is finite).}
\begin{equation}
\label{apriori}
P(\tilde{\beta}) = \exp\left(\sum_{i=1}^{p} \epsilon_i \tilde{\beta}_i\right)\,,
\end{equation}
where \(\epsilon \in \mathbb{R}^{p}\), we maintain the shape of the likelihood by only perturbing the predictor means vector \(\overline{N}^1\) by \(\frac{\epsilon}{|n^1|}\) (the MAP exists if and only if \(\overline{N}^1+\frac{\epsilon}{|n^1|}\) is surrounded by the rows of \(I\), see Theorem \ref{surrounded}).

\section{Path estimators for Lasso-type regularization}
\label{section4}

In ths section, we consider that each observation \(\mathtt{I}_i\) (\(i \in \{1,...,q\}\)) is generated by a random binary vector \(X_i^T = (X_{i1},...,X_{ip})^T \in \{0,1\}^p\) with \(\mathbb{E}[X_{ij}]=b_j\in \,]0,1[\), \(j \in \{1,...,p\}\). With this modelization, we find many path estimators depending on the underlying correlation structure of the random design. 

\subsection{Limit lasso properties}
The well-known lasso regularization consists in introducing a positive parameter $\lambda$ defining the strength of a Laplace prior distribution \cite{Tibshirani}. We search for the maximum of the expression
\[{\cal L} (\beta, \lambda) = L^*(\beta)\times \exp\left( - \lambda \sum_{i=1}^{p} |\beta_i|\right)\,,\]
which verifies the following simple first order conditions. Notice that we use, from now on, the notation $\beta$ instead of $\tilde{\beta}$ to facilitate the reading.

\begin{proposition}
The limit imbalanced BLR problem with lasso penalty leads to the system of equations
\begin{equation}
\mathtt{I}^T\left( \frac{\overline{n}^0 e^{\mathtt{I}\beta}}{\sum_i \overline{n}^0_ie^{(\mathtt{I}\beta)_i}}\right) = \overline{N}^1 - t \nu(\beta) \,,
\label{LassoIm}
\end{equation}
with \( t = \frac{\lambda}{|n^1|}\) and \(\nu_j(\beta) = {\rm sign}(\beta_j)\), if \(\beta_j \ne 0\), \(\nu_j(\beta) \in [-1,1]\), if \(\beta_j = 0\) for all \(j \in \{1,...,p\}\) (\(\nu\) is the subgradient of the \(l^1\) norm). 
\end{proposition}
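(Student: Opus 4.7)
The plan is to compute the first-order subgradient optimality conditions for the concave functional $\log\mathcal{L}(\beta,\lambda)=\log L^*(\beta)-\lambda\|\beta\|_1$ and identify the resulting condition with (\ref{LassoIm}). First I would take the logarithm of the rescaled likelihood $L^*$ provided by Proposition \ref{rescaledlik},
\[\log L^*(\beta)=\sum_{j=1}^{q_1}n^1_j(\mathtt{I}_1\beta)_j - |n^1|\log\!\Bigl(\sum_{i=1}^{q_0}\overline{n}^0_i e^{(\mathtt{I}_0\beta)_i}\Bigr),\]
and differentiate coordinate by coordinate. Using the definition $N^1=\mathtt{I}_1^T n^1$ this gives
\[\partial_{\beta_k}\log L^*(\beta) = N^1_k - |n^1|\left[\mathtt{I}_0^T\!\left(\frac{\overline{n}^0 e^{\mathtt{I}_0\beta}}{\sum_i \overline{n}^0_i e^{(\mathtt{I}_0\beta)_i}}\right)\right]_k,\]
which up to the factor $|n^1|$ is simply the unpenalized limit-imbalanced gradient already used to derive Theorem \ref{thmimba}.

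The only non-routine ingredient is the non-smoothness of $\beta\mapsto\|\beta\|_1$ at points with a vanishing coordinate. I would handle it by subdifferential calculus: since $\log L^*$ is smooth with full domain $\mathbb{R}^p$, the Moreau--Rockafellar sum rule applies with no constraint qualification, so the Fermat condition for a maximizer $\hat\beta$ of the concave function $\log\mathcal{L}(\cdot,\lambda)$ reads
\[\nabla \log L^*(\hat\beta)\in \lambda\,\partial\|\hat\beta\|_1.\]
By separability of the $l^1$ norm, $\partial\|\hat\beta\|_1=\prod_{k=1}^{p}\partial|\cdot|(\hat\beta_k)$, with $\partial|\cdot|(x)=\{\mathrm{sign}(x)\}$ when $x\ne 0$ and $\partial|\cdot|(0)=[-1,1]$; this is exactly the vector-valued object $\nu(\hat\beta)$ of the statement.

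Writing the inclusion componentwise, then dividing by $|n^1|$ and setting $t=\lambda/|n^1|$ and $\overline{N}^1=N^1/|n^1|$, immediately produces (\ref{LassoIm}). I do not anticipate any real obstacle: the only point requiring care is the sign convention (the penalty enters $\log\mathcal{L}$ with a minus, so the subgradient term reappears with a minus on the right-hand side of (\ref{LassoIm})), and the observation that $\log L^*-\lambda\|\cdot\|_1$ is concave with full-dimensional effective domain, ensuring that the subdifferential sum rule and the Fermat condition above are genuinely equivalent to the extremality of $\hat\beta$.
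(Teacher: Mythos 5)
Your derivation is correct and is exactly the argument the paper intends: the paper states the proposition without proof, simply asserting that these are the "simple first order conditions" of $\mathcal{L}(\beta,\lambda)=L^*(\beta)\exp(-\lambda\sum_i|\beta_i|)$, and your computation of $\nabla\log L^*$ together with the subdifferential Fermat condition and the division by $|n^1|$ supplies precisely the missing details. The sign bookkeeping and the appeal to the sum rule (valid here since $\log L^*$ is smooth on all of $\mathbb{R}^p$) are both handled correctly.
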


Thus, the lasso has a clear interpretation as a shift operating on the observed proportions $\overline{N}^1$. Thereafter, we often use the vector \(p(t) \in \mathbb{R}^p\) defined as \(p(t) = \overline{N}^1 - t \,\nu(\beta)\). 

\begin{proposition}
\label{hatbetacont}
If the strong overlap condition in Theorem \ref{surrounded} is satisfied, then the function
\[\hat{\beta} : \left\{\begin{array}{cl}
			\mathbb{R}& \to \mathbb{R}^p\,,\\
			t &\mapsto \operatornamewithlimits{argmax}\limits_{\beta \in \mathbb{R}^p}({\cal L} (\beta,t))\,,\end{array}\right.\]

is continuous for all \(t \ge 0\) and there exists \(T \in [0,1]\), such that \(\beta(t)=0\,,\, \forall t \ge T\).
\end{proposition}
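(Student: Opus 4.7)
The plan is to express the maximum-a-posteriori problem as the minimization of
\[G(\beta,t) := -\log L^*(\beta) + t\,|n^1|\sum_{i=1}^{p}|\beta_i|,\]
so that $\hat{\beta}(t)=\operatornamewithlimits{argmin}_{\beta\in\mathbb{R}^p} G(\beta,t)$. By Lemma \ref{l2}, the log-sum-exp part of $-\log L^*$ is strictly convex on $\mathbb{R}^p$, and adding the convex $l^1$ penalty keeps $G(\cdot,t)$ strictly convex while $G$ is jointly continuous in $(\beta,t)$. Denote $\ell := -\log L^*$ for short.

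First I would settle existence and uniqueness for every $t\ge 0$. At $t=0$ this is exactly Theorem \ref{surrounded}, which furnishes a finite unique minimizer $\hat{\beta}(0)$ of $\ell$ and, as a byproduct, the lower bound $\ell \ge \ell(\hat{\beta}(0))$. For $t>0$, $G(\cdot,t)$ is strictly convex and coercive (bounded below plus a coercive penalty), so $\hat{\beta}(t)$ exists and is unique. The pivot of the whole argument is then the uniform a priori bound
\[\sum_{i=1}^{p}|\hat{\beta}_i(t)| \;\le\; \sum_{i=1}^{p}|\hat{\beta}_i(0)|, \qquad t\ge 0,\]
proved by writing $G(\hat{\beta}(t),t)\le G(\hat{\beta}(0),t)$ and cancelling the $\ell$-terms using $\ell(\hat{\beta}(t))\ge \ell(\hat{\beta}(0))$. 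I want to stress that the comparison must be made against $\hat{\beta}(0)$: comparing instead with $\beta=0$ would only give a bound proportional to $1/t$, which degenerates as $t\to 0^+$. This is precisely where the strong overlap condition (providing a finite reference minimizer) is indispensable.

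With the uniform bound in hand, continuity is routine argmin stability. For $t_n\to t^*$, the sequence $(\hat{\beta}(t_n))$ is precompact; for any subsequential limit $\beta^*$, passing to the limit in $G(\hat{\beta}(t_{n_k}),t_{n_k})\le G(\beta,t_{n_k})$ by joint continuity yields $G(\beta^*,t^*)\le G(\beta,t^*)$ for every $\beta$, so $\beta^*=\hat{\beta}(t^*)$ by uniqueness and the whole sequence converges. For the threshold, the subdifferential optimality condition at $\beta=0$ reads $-\nabla\ell(0)\in t\,|n^1|\,[-1,1]^p$; a direct computation gives $\nabla\ell(0)=|n^1|(\overline{N}^0-\overline{N}^1)$, which turns the condition into $t\ge T:=\max_{1\le j\le p}|\overline{N}^1_j-\overline{N}^0_j|$. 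Since each $\overline{N}^0_j,\overline{N}^1_j$ is a probability and therefore lies in $[0,1]$, we obtain $T\in[0,1]$, and uniqueness of the minimizer forces $\hat{\beta}(t)=0$ for all $t\ge T$. The only genuine obstacle is the uniform $l^1$ bound; everything else follows from standard convex analysis.
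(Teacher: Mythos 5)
Your proof is correct, and while it shares the paper's overall skeleton (pass to the negative log-posterior, use strict convexity plus coercivity for existence and uniqueness, then stability of the argmin for continuity), it adds two ingredients that are genuinely not in the paper's argument. First, the paper asserts continuity of $\hat{\beta}$ directly from joint continuity and strict concavity of $\log \mathcal{L}$, without supplying the compactness that an argmin-stability argument needs; your uniform bound $\sum_i|\hat{\beta}_i(t)|\le\sum_i|\hat{\beta}_i(0)|$, obtained by comparing $G(\hat{\beta}(t),t)\le G(\hat{\beta}(0),t)$ against the unpenalized minimizer and cancelling the $\ell$-terms, closes exactly that gap, and your remark that comparing with $\beta=0$ only yields an $O(1/t)$ bound correctly identifies why the reference point must be $\hat{\beta}(0)$ (hence why the strong overlap condition is needed). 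Second, for the threshold the paper only checks that the stationarity equations (\ref{LassoIm}) at $t=1$ cannot hold with a subgradient component equal to $\pm 1$, which gives $T\le 1$; your subdifferential computation at $\beta=0$, using $\nabla\ell(0)=|n^1|(\overline{N}^0-\overline{N}^1)$, yields the sharper explicit value $T=\max_{j}|\overline{N}^1_j-\overline{N}^0_j|$, which is precisely the starting value $t_0$ the paper later uses in its backward algorithm but does not derive here. In short, your route is a more complete and quantitatively sharper version of the paper's proof; the only price is the extra (elementary) a priori bound.
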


\begin{proof}
With the positivity of \(\cal L\), we have, 
\[\operatornamewithlimits{argmax}\limits_{\beta \in \mathbb{R}^p}({\cal L} (\beta,t)) = \operatornamewithlimits{argmin}\limits_{\beta \in \mathbb{R}^p}(-\log ({\cal L} (\beta,t)))\,,\]
and for all \(t \ge 0\), \(-\log ({\cal L})\) is a strictly convex and coercive function in \(\beta\) if the strong overlap condition is satisfied (see proof of Theorem \ref{surrounded}). Therefore, the function \(\hat{\beta}\) is well defined for all \(t \ge 0\). Furthermore, this function is continuous because of the continuity in \((\beta,t)\) of \(\log (\cal L)\) and its strict concavity in \(\beta\). The equations (\ref{LassoIm}) with \(t=1\) have no solution if one of the components of \(\nu(\beta)\) is equal to \(-1\) or \(+1\), therefore \(\hat{\beta}_j(1) = 0\) for all \(j \in \{1,...,p\}\).
\end{proof}

\begin{remark}
Using the law of large numbers, the family of model parameters \(\{\beta(t)\}_{t \ge 0}\) solves the system of equations

\begin{equation}
\label{betaesti}
\frac{\mathbb{E}[X_{1j}e^{X_1\beta(t)}]}{\mathbb{E}[e^{X_1\beta(t)}]} = p_j(t)\,,\quad j \in \{1,...,p\}\,,
\end{equation}
with \(\mathbb{E}\) being the expectation operator. This previous system of equations takes the same form as in (\ref{LassoIm}) because \(X_1\) is a discrete random vector and therefore the path estimator \(\{\beta(t)\}_{t \ge 0}\) is continuous. Notice that the function \(\nu \circ \beta\): \(\mathbb{R}^+ \to \mathbb{R}\) is also continuous in \(t\).
\end{remark}

\subsection{Path estimators}
Thanks to this previous remark, we are able to find precise analytic estimators of the path in the case of independent and orthogonal random designs. Notice that such solutions already exist in the framework of linear regression (see \cite{Tibshirani}). From now on, the strong overlap condition is considered to be always satisfied at \(t=0\).

\begin{thm}
\label{descriptionLasso}
If the random vector \(X\) generating the observations \(\mathtt{I}\) has independent components, a precise path estimator $\{\hat{\beta}(t)\}_{t \ge 0}$ is given by the formulae
$$\hat{\beta}_j(t) = \hat{\beta}_j(0) + \log\left(\frac{1-t\frac{{\rm sign}(\hat{\beta}_j)}{\overline{N}_j^1}}{1+t\frac{{\rm sign}(\hat{\beta}_j)}{1-\overline{N}_j^1}}\right)\,,\quad j \in \{1,...,p\}\,,$$
$${\rm if} \quad t \in \left[ 0,\, t_{0j}\right]\,,\quad t_{0j} =  \overline{N}_j^1\frac{|1-e^{-\hat{\beta}_j(0)}|}{1+\frac{\overline{N}_j^1}{1-\overline{N}_j^1}e^{-\hat{\beta}_j(0)}}\quad {\rm and} \quad \hat{\beta}_j(t) = 0\,, \quad {\rm if} \,\,t > t_{0j}\,.$$
The coefficients \(\hat{\beta}(0)\) are give by the classical MLE (solution of equations (\ref{normalgeneral}) without intercept) if we want to estimate the path obtained by an (imbalanced) logisitic regression. If we use the limit equations, we need the MLE of the rescaled likelihood (Proposition \ref{rescaledlik} and equations (\ref{thmimba})) and in this case:
\[\hat{\beta}_j(0) = \log \left( \frac{\overline{N}^1_j}{1-\overline{N}^1_j} \frac{1-\overline{N}^0_j}{\overline{N}^0_j}\right)\,,\quad j \in \{1,...,p\}\,.\]
\end{thm}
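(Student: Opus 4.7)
The plan is to exploit independence to decouple the system (\ref{betaesti}) into $p$ scalar equations, one per coordinate $\beta_j$, and then invert each explicitly. Under independence of the components of $X$, the multiplicative structure of $e^{X\beta}=\prod_k e^{X_k\beta_k}$ gives
\[
\mathbb{E}[e^{X\beta}]=\prod_{k=1}^{p}\bigl(1-b_k+b_k e^{\beta_k}\bigr),\qquad \mathbb{E}[X_j e^{X\beta}]=b_j e^{\beta_j}\prod_{k\ne j}\bigl(1-b_k+b_k e^{\beta_k}\bigr),
\]
so that after the product over $k\ne j$ cancels, equation (\ref{betaesti}) reduces to the one-dimensional relation $\tfrac{b_j e^{\beta_j}}{1-b_j+b_j e^{\beta_j}}=p_j(t)$, which is solved explicitly by $\beta_j(t)=\log\!\tfrac{p_j(t)(1-b_j)}{b_j(1-p_j(t))}$.

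Next I would identify $b_j$ and verify the initial value. Since the limit imbalanced equations only involve the class-0 design, the law of large numbers forces $b_j=\overline{N}^0_j$, and substituting $p_j(0)=\overline{N}^1_j$ yields exactly the stated formula for $\hat\beta_j(0)$. For $t>0$ I would use Proposition \ref{hatbetacont}: by continuity of $t\mapsto\hat\beta(t)$, the sign of each coordinate cannot change without passing through zero, so on the open interval where $\hat\beta_j(t)\ne 0$ one has $\nu_j(\hat\beta(t))=\mathrm{sign}(\hat\beta_j(0))$ (a constant), and thus $p_j(t)=\overline{N}^1_j-t\,\mathrm{sign}(\hat\beta_j(0))$. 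Subtracting the expression for $\hat\beta_j(0)$ from that for $\hat\beta_j(t)$ and rearranging gives the stated closed form
\[
\hat\beta_j(t)-\hat\beta_j(0)=\log\!\frac{(\overline{N}^1_j-t\nu_j)(1-\overline{N}^1_j)}{\overline{N}^1_j(1-\overline{N}^1_j+t\nu_j)}=\log\!\frac{1-t\,\mathrm{sign}(\hat\beta_j)/\overline{N}^1_j}{1+t\,\mathrm{sign}(\hat\beta_j)/(1-\overline{N}^1_j)}.
\]

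To locate the knot $t_{0j}$, I would solve $\hat\beta_j(t_{0j})=0$, which collapses the scalar equation to $p_j(t_{0j})=b_j$, giving $t_{0j}=|\overline{N}^1_j-\overline{N}^0_j|$. A short computation then shows that this quantity coincides with the expression stated in the theorem after substituting $e^{-\hat\beta_j(0)}=\tfrac{(1-\overline{N}^1_j)\overline{N}^0_j}{\overline{N}^1_j(1-\overline{N}^0_j)}$: the numerator simplifies to $|\overline{N}^1_j-\overline{N}^0_j|/(1-\overline{N}^0_j)$ and the denominator to $1/(1-\overline{N}^0_j)$, so the two forms agree. For $t>t_{0j}$ I would check that $\hat\beta_j(t)=0$ together with the subgradient value $\nu_j(0)=(\overline{N}^1_j-\overline{N}^0_j)/t\in[-1,1]$ indeed solves (\ref{LassoIm}), which is precisely the range $t\ge t_{0j}$.

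The main obstacle, such as it is, is not the algebra but the correct handling of the subgradient: one must rule out sign changes along the path and justify $\nu_j(\hat\beta(t))=\mathrm{sign}(\hat\beta_j(0))$ on the whole interval $[0,t_{0j})$. This is where Proposition \ref{hatbetacont} is essential — continuity of $\hat\beta(\cdot)$ together with strict monotonicity of the inversion map $p_j\mapsto\log\!\tfrac{p_j(1-b_j)}{b_j(1-p_j)}$ ensures that $\hat\beta_j$ crosses zero exactly once, at $t_{0j}$, and then remains zero (consistent with the subgradient condition), so gluing the two pieces yields a globally continuous path satisfying (\ref{LassoIm}) for all $t\ge 0$.
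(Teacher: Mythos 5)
Your proposal is correct and follows essentially the same route as the paper: independence factorizes $\mathbb{E}[e^{X\beta}]$ so that each equation of (\ref{betaesti}) collapses to the scalar relation $\frac{b_je^{\beta_j}}{1-b_j+b_je^{\beta_j}}=p_j(t)$, which is inverted explicitly, with $b_j$ then replaced by the observed frequency $\overline{N}^0_j$ and the knot located at $t_{0j}=|\overline{N}^1_j-\overline{N}^0_j|$. Your treatment of the subgradient for $t>t_{0j}$ (checking $\nu_j(t)=(\overline{N}^1_j-\overline{N}^0_j)/t\in[-1,1]$ exactly when $t\ge t_{0j}$) is in fact slightly more explicit than the paper's monotonicity remark, but it is the same argument in substance.
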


\begin{proof}

For all \(j \in \{1,...,p\}\) we use the hypothesis of independence:
\[p_j(t) = \frac{\mathbb{E}[X_je^{X\beta(t)}]}{\mathbb{E}[e^{X\beta(t)}]} = \frac{\mathbb{E}[X_je^{X_j\beta_j(t)}]\mathbb{E}[\prod_{k \ne j}e^{X_k\beta_k(t)}]}{\mathbb{E}[e^{X_j\beta_j(t)}]\mathbb{E}[\prod_{k \ne j}e^{X_k\beta_k(t)}]}= \frac{\mathbb{E}[X_je^{X_j\beta_j(t)}]}{\mathbb{E}[e^{X_j\beta_j(t)}]}\]
\[ = \frac{e^{\beta_j(t)}P(X_j = 1)}{e^{\beta_j(t)}P(X_j = 1) + P(X_j = 0)} = \frac{e^{\beta_j(t)} b_j}{e^{\beta_j(t)}b_j + (1-b_j)}\,,\]
and the solution is
\[\beta_j(t)= \log \left(\frac{p_j(t)}{1-p_j(t)}\right) - \log \left( \frac{b_j}{1-b_j}\right)\,, \quad t \in \left[0, |\overline{N}_j^1-b_j|\right]\,,\]
and \(\beta_j(t) = 0\) if \(t > |\overline{N}_j^1-b_j|\). Indeed, $\dot{\beta_j}$ is negative in region \(\beta_j >0\) and positive in region \(\beta_j <0\). \(\beta_j(0) = \log\left(\frac{\overline{N}_j^1}{1-\overline{N}_j^1} \frac{1-b_j}{b_j} \right)\) for a random design with independent predictors (see Appendix \ref{A2} with \(p=1\)). We replace all the \(b_j\) by the frequencies of observations \(\overline{N}^0_j\) to obtain the estimator.
\end{proof}

The orthogonal case, when the inner product between columns of the design matrix vanishes (\(X_{1j}X_{1k}=0\), \(j \ne k\)), is also tractable.

\begin{thm} 
\label{thindep}
If the random design is orthogonal, we have \(\mathtt{I} \in \mathcal{M}_{(p+1) \times p}(\mathbb{B})\) filled by zeros except at positions \((i+1,i)\), \(i = 1,...,p\) and the derivative of the path estimator takes the form
$$\dot{\hat{\beta}}_i(t) = \dot{\overbrace{\log \left( \frac{p_i(t)}{1-\sum_{s \in S_t}p_s(t)} \right)}}\,,\quad i \in S_t\,,\quad t \ge 0\,,$$
with \(\{S_t\}_{t \ge 0}\) a family of subsets of \(\{1,...,p\}\) containing the indexes of non-zero coefficients of vector \(\beta\) at time \(t \ge 0\). The algorithm that describes the positions of the change-points in \(S_t\) is described in the proof.
\end{thm}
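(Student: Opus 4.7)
The plan is to exploit the very specific shape of the orthogonal design. Since $X_{1j}X_{1k}=0$ for $j\ne k$, the random vector $X$ takes values only in $\{0,e_1,\ldots,e_p\}$; writing $b_0,b_1,\ldots,b_p$ for the corresponding probabilities (read off as empirical frequencies from the rows of $\mathtt{I}$), equation~(\ref{betaesti}) specializes to
\[ b_j\, e^{\hat\beta_j(t)} = p_j(t)\,Z(t),\qquad Z(t) := b_0 + \sum_{k=1}^p b_k\, e^{\hat\beta_k(t)},\quad j=1,\ldots,p. \]

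Next I would separate, at any given $t$, the active set $S_t=\{j:\hat\beta_j(t)\ne 0\}$ from its complement. For $j\notin S_t$ the previous relation reduces to $b_j = p_j(t)Z(t)$. Summing the full identity over all $j$ and isolating $b_0/Z(t)$ yields
\[ 1-\sum_{s\in S_t} p_s(t) = \frac{b_0 + \sum_{s\notin S_t} b_s}{Z(t)}, \]
and dividing the two foregoing equalities for an index $i\in S_t$ eliminates the unknown $Z(t)$, leaving the closed form
\[ \hat\beta_i(t) = \log\!\left(\frac{p_i(t)}{1-\sum_{s\in S_t} p_s(t)}\right) - \log\!\left(\frac{b_i}{b_0 + \sum_{s\notin S_t}b_s}\right). \]
On any open sub-interval on which $S_t$ is constant the second term is a constant, and differentiating in $t$ produces exactly the formula announced in the theorem.

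I would then spell out the algorithm governing $\{S_t\}$. Initialize at $t=0$ with $S_0=\{j:\hat\beta_j(0)\ne 0\}$ and signs $\nu_j(0)=\mathrm{sign}(\hat\beta_j(0))$; for $j\in S_t$ one has $p_j(t)=\overline{N}^1_j - t\,\mathrm{sign}(\hat\beta_j)$. Increase $t$ until one of two events occurs: either a coefficient $\hat\beta_i$ vanishes, which the closed form above reduces to a scalar equation in $t$, in which case $i$ leaves $S_t$; or, for some $j\notin S_t$, the subgradient $\nu_j(t)=(\overline{N}^1_j-b_j/Z(t))/t$ saturates $\pm 1$, in which case $j$ enters $S_t$ with that sign. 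The next change-point is the smaller of these candidate times, after which the constants in the closed form are updated and the evolution continues. Continuity of $\hat\beta$ from Proposition~\ref{hatbetacont} guarantees consistent gluing across change-points and the existence of a terminal time $T\le 1$.

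The main obstacle is the bookkeeping of the change-points and showing that at each stage both candidate event times are determined by a single scalar equation whose monotonicity in $t$ (inherited from the monotonicity of $t\mapsto p_j(t)$ and $t\mapsto Z(t)$ on a fixed $S_t$) yields a unique next event. A secondary subtlety is the entry sign of a newly-active coefficient: it is prescribed by the sign at which $\nu_j(t)$ saturates, and continuity of $\nu\circ\hat\beta$ (remark after Proposition~\ref{hatbetacont}) ensures that this matches the coefficient's sign immediately after the change-point.
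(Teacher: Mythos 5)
Your proposal is correct and follows essentially the same route as the paper: you specialize the population equations to the support $\{0,e_1,\dots,e_p\}$, eliminate the normalizing constant $Z(t)$ by summing over the active set to obtain $e^{\hat\beta_i(t)}=\frac{b_0+\sum_{s\notin S_t}b_s}{b_i}\cdot\frac{p_i(t)}{1-\sum_{s\in S_t}p_s(t)}$, differentiate on intervals where $S_t$ is fixed, and locate change-points as the first time a coefficient vanishes or an inactive subgradient saturates $\pm1$ — exactly the paper's argument, including the equivalent closed form for $\nu_j(t)$ on the inactive set. No gaps worth flagging.
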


\begin{proof}
With the hypothesis of orthogonality, equations (\ref{betaesti}) are reducted to
\begin{equation}
\left\{\begin{array}{cl}
\frac{b_1e^{\beta_1(t)}}{b_0 + b_1e^{\beta_1(t)} + \cdots + b_pe^{\beta_p(t)}}& = \overline{N}^1_1 - t \nu_1(t)\,,\\
...&....\\
\frac{b_pe^{\beta_p(t)}}{b_0 + b_1e^{\beta_1(t)} + \cdots + b_pe^{\beta_p(t)}} &= \overline{N}^1_p - t \nu_p(t)\,,
\end{array}\right.
\end{equation}
and we obtain
\[e^{\beta_i(t)} = \frac{b_0}{b_i}\frac{p_i(t)}{1-\sum_{j=1}^p p_j(t)}\,,\quad i = 1,...,p\,.\]

Let \(\overline{S}_t = \{0,1,...,p\}\setminus S_t\) and \(\overline{S}^*_t =\overline{S}_t \setminus \{0\} \), then
\begin{equation}
\left\{\begin{array}{ll}
e^{\beta_i(t)} = \frac{b_0}{b_i} \frac{\overline{N}^1_i - t {\rm sign}(\beta_i)}{1 - \sum_{j=1}^p \overline{N}^1_j + t\sum_{s \in S_t}{\rm sign}(\beta_s)+ t\sum_{s \in \overline{S}^*_t}\nu_s(t)}\,,\quad & i \in S_t\,,\\
1 = \frac{b_0}{b_i} \frac{\overline{N}^1_i - t \nu_i(t)}{1 - \sum_{j=1}^p \overline{N}^1_j + t\sum_{s \in S_t}{\rm sign}(\beta_s)+ t\sum_{s \in \overline{S}^*_t}\nu_s(t)}\,,\quad & i \in \overline{S}^*_t\,.\\
\end{array}\right.
\end{equation}
After computation, we have explicit formulae for the continuous functions \(\beta_i\) and \(\nu_i\) (\(i \in \{1,...,p\}\)):
\begin{equation}
\left\{\begin{array}{cll}
e^{\beta_i(t)} &= \frac{b^{S}}{b_i} \frac{p_i(t)}{1 - \sum_{s \in S_t}p_s(t)}\,,&\quad i \in S_t\,,\\
\nu_i(t) &= \frac{1}{t} \frac{b^S \overline{N}^1_i - b_i \overline{N}^S}{b^S}-\frac{b_iR^S}{b^S}\,,&\quad i \in \overline{S}^*_t\,,
\end{array}\right.
\end{equation}
with \(b^S = \sum_{s\in \overline{S}_t}b_s\), \(\overline{N}^S = \sum_{s\in \overline{S}_t}\overline{N}^1_s\) and \(R^S = \sum_{s\in S_t} {\rm sign}(\beta_s)\). These functions are monotonous, we need the change-points to draw the path, that is the finite sequence of different models \(\{S_t\}_{t \ge 0} = \{S_{t_0}, S_{t_1},...,S_{t_m}\}\), \(m \in \mathbb{N}^*\). For all \(i \in \{0,...,m-1\}\), \(\{S_t\}_{t \in [t_i,t_{i+1}[}\) is a unique subset.  If \(t \in [t_i,t_{i+1}[\) and we know \(S_t\) we determine \(S_{t_{i-1}}\),\(S_{t_{i+1}}\), \(t_{i}\) and \(t_{i+1}\) by solving
\begin{equation}
\left\{\begin{array}{llll}
\beta_i(u_i) = 0 &\Leftrightarrow  & u_i = \frac{b^S\overline{N}^1_i - b_i \overline{N}^S}{b^S{\rm sign}(\beta_i) + b_i R^S}\,,&\quad i \in S_t\,,\\
\nu_i(v_i^+) = 1 &\Leftrightarrow  & v_i^+ = \frac{b^S\overline{N}^1_i - b_i \overline{N}^S}{b^S + b_i R^S}\,,&\quad i \in \overline{S}^*_t\,,\\
\nu_i(v_i^-) = -1 &\Leftrightarrow  & v_i^- = \frac{b^S\overline{N}^1_i - b_i \overline{N}^S}{-b^S + b_i R^S}\,,&\quad i \in \overline{S}^*_t\,.\\
\end{array}\right.
\end{equation}
We define \(W = \{w_i\} = \{u_i,v_j^+,v_j^-\,,\, i \in S_t\,,\, j \in \overline{S}^*_t\}\) and the two adjacent change-points are given by
\[t_{i+1} = \min_{j}\{w_j\,|\, w_j > t\}\quad {\rm and} \quad t_{i} = \max_{j}\{w_j\,|\, w_j \le t\}\,.\]
Therefore, 
\[S_{t_{i+1}} = S_t \cup V_{i+1}\setminus U_{i+1}\quad {\rm and} \quad S_{t_{i-1}} = S_t \cup V_{i}\setminus U_{i}\,,\]
with \(U_{i} = \{j \in \{1,...,p\} \,|\, u_j = t_{i} \}\,, V_{i} = \{j \in \{1,...,p\} \,|\, v_j^+ = t_{i}\,\,{\rm or}\,\, v_j^-= t_{i} \}\).

The path can be built forward or backward. If we choose the path following approach (forward), 
\(S_{t_0}\) is found using the MLE of the rescaled likelihood (see Section \ref{section3}) and \(t_0 = 0\). In the other configuration (backward), we have \(S_{t_m} = \emptyset \) and for \(t > t_m\), \(b^S=\overline{N}^S = 1\) and \(R^S = 0\), so that \(t_m = \max\limits_{i \in \{1,...,p\}}|\overline{N}^1_i-b_i| \).
\end{proof}
Simulations with this type of design show that each path usually vanishes only one time (and does not reappear) and thus \(m>p\) is a very rare (impossible?) configuration.

The opposite situation to orthogonality is inclusion. For example, if $X_{12}$ is included in $X_{11}$ meaning that for the observed data $\mathtt{I}_{i1}=1$ if $\mathtt{I}_{i2}=1$, we find an analytic description of the estimator given by the formulae
$$\dot{\hat{\beta}}_1(t) = \dot{\overbrace{\log \left( \frac{p_1(t)}{p_2(t)-p_1(t)} \right)}} \,,\, \dot{\hat{\beta}}_2(t) = \dot{\overbrace{\log \left( \frac{p_2(t)-p_1(t)}{1-p_2(t)} \right)}}\,,\quad\forall t \in \{u\,,\hat{\beta}_1(u) \hat{\beta}_2(u) \ne 0\}\,.$$

This solution is likely generalizable (with a design in stairs as presented in Appendix \ref{A4}), however, this case is meaningless in the analysis of spontaneous reports databases and then left aside.\\

We give examples of plots of path estimates compared with a standard (using \(L\) not \(L^*\)) lasso path for different imbalance strengths in appendix~\ref{appC}. The results highlight the high quality of the analytic path estimators, even in absence of class imbalance.

\begin{remark}
Another regularization method is called the elastic net penalization and uses, in addition to the lasso, a second penalized term of ridge (or Tikhonov) kind \cite{Zou}:
$${\cal L} (\beta|y) =  L^*(\beta|y) \times \exp\left(- \lambda \left[\alpha\sum_{i=1}^{p} |\beta_i| + \frac{1-\alpha}{2}\sum_{i=1}^{p} \beta_i^2\right]\right)\,,$$
with $\alpha \in ]0,1]$. In the case of independence in random vector \(X\), we have an explicit formula for $t$ with respect to $\beta$:
$$t = \frac{\overline{N}^1}{\alpha\, {\rm sign}(\hat{\beta}) + (1-\alpha)\hat{\beta}} \frac{1-e^{-\hat{\beta}(0)+\hat{\beta}}}{1+\frac{\overline{N}^1}{1-\overline{N}^1}e^{-\hat{\beta}(0)+\hat{\beta}}}\,,$$
for $\hat{\beta}$ between $0$ and $\hat{\beta}(0)$. The coefficients vanish when $t_0^{en} = \frac{1}{\alpha} t_0^{lasso}\,.$
The proof of this result is a simple adaptation of the proof for the lasso in Theorem \ref{descriptionLasso}.
\end{remark}

\subsection{Negative correlation structure}

If the random design verifies the relations \(\mathbb{E}[X_{1j}X_{1k}e^{X_1\beta}]\mathbb{E}[e^{X_1\beta}] \le \mathbb{E}[X_{1j}e^{X_1\beta}]\mathbb{E}[X_{1k}e^{X_1\beta}]\), \(\forall j \ne k\), \(\forall t \ge 0\), this in-between situation of a \(\beta\)-dependent negative correlation between variables \(X_j\) (\(j=1,...,p\)) is also tractable and particularly interesting in the sparse context of near-zero components for vector \(\overline{N}^1\)\footnote{Spontaneous reports databases are an example of such a sparsity with negative correlation.}. We find two estimators that sourrunded the real path.

\begin{thm}
The path estimator in the \(\beta\)-dependent negative correlation case is surrouned by estimators, whose derivatives are given by
\[\dot{\overbrace{\log \left( \frac{p_j(t)}{1-\sum_{s \in S^+_t}p_s(t)} \right)}} \le \dot{\hat{\beta}}_j(t) \le \dot{\overbrace{\log \left( \frac{p_j(t)}{1-\sum_{s \in S^-_t}p_s(t)} \right)}}\,,\quad j \in S_t\,,\]
with \(S_t^+ = \{j \in S_t\,|\, {\rm sign}(\beta_j)>0\}\) and \(S_t^- = \{j \in S_t\,|\, {\rm sign}(\beta_j)<0\}\). With the rare occurrence of resurgence of a coefficient after vanishing, we neglect this possibility and we easily find the \(p\) vanishing points and thus the family of subsets \(\{S_t\}_{t \ge 0}\). 
\end{thm}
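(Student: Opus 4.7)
The plan is to differentiate the equilibrium condition (\ref{betaesti}) with respect to $t$, recognise the resulting linear system as being driven by the $\beta$-dependent covariance matrix of $X_1$ under exponential tilting, and then exploit the sign pattern produced by the negative correlation hypothesis together with the shrinkage direction of $\dot{\hat{\beta}}$.

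First, differentiation of $p_j(t)=\mathbb{E}[X_{1j}e^{X_1\beta(t)}]/\mathbb{E}[e^{X_1\beta(t)}]$ yields
\[
\dot p_j(t)=\sum_{k}C_{jk}(t)\,\dot{\hat{\beta}}_k(t),\qquad C_{jk}=\frac{\mathbb{E}[X_{1j}X_{1k}e^{X_1\beta}]}{\mathbb{E}[e^{X_1\beta}]}-p_jp_k,
\]
where $C$ is the covariance of $X_1$ under the tilted measure with density proportional to $e^{X_1\beta}$. Because $X_{1j}\in\{0,1\}$ the diagonal satisfies $C_{jj}=p_j(1-p_j)>0$, and the hypothesis of this subsection is exactly $C_{jk}\le 0$ for $j\ne k$. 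Outside $S_t$ the coefficients $\hat{\beta}_k$ vanish on a neighbourhood (by continuity of $\hat{\beta}$, Proposition~\ref{hatbetacont}), so $\dot{\hat{\beta}}_k=0$ there and the system reduces to $\sum_{k\in S_t}C_{jk}\dot{\hat{\beta}}_k=\dot p_j=-\nu_j$ for $j\in S_t$.

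The crucial sign observation is that $\dot{\hat{\beta}}_k$ and $\nu_k=\mathrm{sign}(\hat{\beta}_k)$ are of opposite signs along the path, since each active coefficient shrinks monotonically toward zero. Splitting $S_t=S_t^+\sqcup S_t^-$, for $j\in S_t^+$ every cross term $C_{jk}\dot{\hat{\beta}}_k$ with $k\in S_t^+\setminus\{j\}$ is the product of two non-positive factors, hence non-negative, while every cross term with $k\in S_t^-$ is the product of a non-positive and a non-negative factor, hence non-positive. Dropping the non-positive $S_t^-$ contributions in the $j$-equation yields an upper bound on $p_j(1-p_j)\dot{\hat{\beta}}_j$, and dropping the non-negative $S_t^+\setminus\{j\}$ contributions yields a lower bound; a symmetric argument handles $j\in S_t^-$.

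To turn these one-sided bounds into the stated logarithmic derivatives, I would replicate the elimination trick of Theorem~\ref{thindep} on the block of indices of common sign: within $S_t^+$ (resp.\ $S_t^-$) the truncated system behaves like the orthogonal case once the cross-sign terms are discarded, and solving for $\sum_{s\in S_t^\pm}p_s\dot{\hat{\beta}}_s$ by summing and telescoping recovers exactly the orthogonal-type relation
\[
\sum_{s\in S_t^\mp}p_s\,\dot{\hat{\beta}}_s=\frac{\pm|S_t^\mp|}{1-\sum_{s\in S_t^\mp}p_s}.
\]
Plugging this back into the one-sided bound produces the derivative of $\log\!\bigl(p_j/(1-\sum_{s\in S_t^\mp}p_s)\bigr)$. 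The main obstacle will be this last step: once equalities have been weakened to inequalities, the closed-form cancellation of Theorem~\ref{thindep} no longer goes through mechanically, and one must verify that every residual term retains the correct sign and that the denominator $1-\sum_{s\in S_t^\mp}p_s$ stays strictly positive along the path. Once this is in place, the description of $\{S_t\}_{t\ge 0}$ and of the $p$ vanishing points follows by tracking the first coordinate whose envelope hits zero, under the stated no-resurgence assumption, exactly as in the orthogonal case.
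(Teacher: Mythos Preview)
Your opening is correct and matches the paper: differentiate the equilibrium condition to obtain $\sum_k C_{jk}\dot{\hat\beta}_k=\dot p_j$ with $C$ the tilted covariance, and restrict to $j,k\in S_t$. From there, however, the paper takes a different and more robust route than yours.

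The gap in your argument is twofold. First, the ``crucial sign observation'' that $\dot{\hat\beta}_k$ and $\mathrm{sign}(\hat\beta_k)$ are opposite is precisely what a bound on $\dot{\hat\beta}$ should deliver; assuming monotone shrinkage a priori is circular, and nothing in the preceding results guarantees it in the correlated case. Second, as you yourself flag, once the equalities are weakened to inequalities by dropping cross terms, the orthogonal-case elimination (summing and telescoping to isolate $\sum_{s}p_s\dot{\hat\beta}_s$) no longer produces the closed form $\log\bigl(p_j/(1-\sum_{s}p_s)\bigr)$: you cannot recover an identity from a one-sided inequality, so the displayed relation you write for $\sum_{s\in S_t^\mp}p_s\dot{\hat\beta}_s$ is not justified.

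The paper avoids both difficulties by a matrix reformulation and a Neumann series. Writing $R_{jk}-p_jp_k=(F_{jk}-1)p_jp_k$ with $F_{jk}\in[0,1]$ under the negative-correlation hypothesis, the differentiated system becomes $(\mathbb{I}_r-(D-F)P)\dot\beta=P^{-1}\dot P(1)$, where $D$ is a $0/1$ correlation-track matrix, $P=\mathrm{diag}(p_j)$, and $(D-F)P$ has non-negative entries. Expanding $\dot\beta=\sum_{i\ge 0}((D-F)P)^i P^{-1}\dot P(1)$, the driving vector has $j$-th entry $-\mathrm{sign}(\beta_j)/p_j$, and since every matrix power is entrywise non-negative one can sandwich the series between the geometric series built from $DP^+$ and from $DP^-$ (the diagonal $P$ restricted to $S_t^+$ and $S_t^-$). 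Summing those geometric series yields the logarithmic derivatives directly, with no monotonicity assumed and no broken telescoping. If you want to repair your approach, the missing idea is this Neumann expansion rather than an attempt to mimic the orthogonal elimination.
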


\begin{proof}
We differentiate equations (\ref{LassoIm}) with respect to $t$ considering only the equations verifying the condition \(\beta_j(t) \ne 0\), i.e. \(j \in S_t\). We obtain at time \(t\),
\[\dot{\beta_j}(t) \left(\frac{\mathtt{I}_j^T (\overline{n}^0e^{\mathtt{I}\beta})}{\sum_i \overline{n}^0_ie^{(\mathtt{I}\beta)_i}}-\left(\frac{\mathtt{I}_j^T( \overline{n}^0e^{\mathtt{I}\beta})}{\sum_i \overline{n}^0_ie^{(\mathtt{I}\beta)_i}}\right)^2 \right) +$$
$$ \sum_{k \ne j, k \in S_t}\dot{\beta_k}(t) \left( \frac{\sum_{i}\mathtt{I}_{ij}\mathtt{I}_{ik}\overline{n}^0_ie^{(\mathtt{I}\beta)_i}}{\sum_i \overline{n}^0_ie^{(\mathtt{I}\beta)_i}} -\frac{\mathtt{I}_j^T (\overline{n}^0e^{\mathtt{I}\beta})}{\sum_i \overline{n}^0_i e^{(\mathtt{I}\beta)_i}}\frac{\mathtt{I}_k^T( \overline{n}^0e^{\mathtt{I}\beta})}{\sum_i \overline{n}^0_ie^{(\mathtt{I}\beta)_i}} \right) = -{\rm sign}(\beta_j)\,,\]
or written differently,
\begin{equation}
\label{follpath}
\dot{\beta_j}(t) p_j(t)(1-p_j(t)) + \sum_{k \ne j, k \in S_t}\dot{\beta_k}(t) \left( R_{jk}(t)-p_j(t)p_k(t)\right) = \dot{p}_j(t)\,,
\end{equation}
where $R_{jk}(t) = \frac{\sum_{i}\mathtt{I}_{ij}\mathtt{I}_{ik}\overline{n}^0_ie^{(\mathtt{I}\beta)_i}}{\sum_i \overline{n}^0_ie^{(\mathtt{I}\beta)_i}}$ is a t-dependent proportion of rows with a one on the columns $j$ and $k$. With only negative correlations or independence between components of \(X\), we define the matrix \(F(t) \in \mathcal{M}_{r \times r}([0,1])\) with \(r = \#S_t\) as long as \(R_{jk}(t) \le p_j(t)p_k(t)\),
\[R_{jk}(t) - p_j(t)p_k(t)= (F_{jk}(t)-1)p_j(t)p_k(t)\,,\]
if observations \(\mathtt{I}\) give such a matrix \(F\). We obtain \((\mathbb{I}_r - (D-F)P)\dot{\beta} = P^{-1}\dot{P}(1)\) with $P$ a diagonal matrix filled with the elements $\{p_i(t)\,,\, i \in S_t\}$. Matrix \(D\) is the correlation-track matrix containing ones at positions \((j,k)\) if \(F_{jk}(0)<1\) and we have\footnote{the non-singularity of the matrix \(C(t)\) in (\ref{follpath}), \(C(t)\dot{\beta} = \dot{\overbrace{\log p(t)}}\), will be proven with Proposition \ref{inversibility}.}
\[\dot{\beta}(t) = \sum_{i=0}^{+\infty} ((D-F)P)^i P^{-1}\dot{P}(1) = P^{-1}\dot{P}(1) +  \sum_{i=1}^{+\infty} ((D-F)P)^{i}P^{-1}\dot{P}(1)\,,\]
so that, using the positivity of all the elements in matrix \(D-F\):
\[ \dot{\overbrace{\log(P(1))}} - \dot{\overbrace{\log(1-DP^+(1))}} \le \dot{\beta}(t) \le  \dot{\overbrace{\log(P(1))}} - \dot{\overbrace{\log(1-DP^-(1))}}\,,\]
with \(P^+\) the diagonal matrix filled with vector \(p^+(t)= (\max(sign(\beta_i),0)) p_i(t))_{i \in S_t}\) and \(P^-\) with vector \(p^-(t)=(\max(-sign(\beta_i),0)) p_i(t))_{i \in S_t}\). Finally,
\[\dot{\overbrace{\log \left( \frac{p_j(t)}{1-(Dp^+(t))_j} \right)}} \le \dot{\hat{\beta}}_j(t) \le \dot{\overbrace{\log \left( \frac{p_j(t)}{1-(Dp^-(t))_j} \right)}}\,,\quad j \in S_t\,.\]
\end{proof}

In presence of sparsity (small components in \(\overline{N}^1\)), \(0 < p_j(t) \ll 1-(Dp^-(t))_j(t)\) and \(0 < p_j(t) \ll 1-(Dp^+(t))_j(t)\), which makes previous upper and lower bounds good path estimators. The \(p\) (or more) change-points are determined step by step as in previous subsection and the estimated path \(\beta(t)\) is stucked between a lower path and an upper paths.

\section{Efficient algorithms for Lasso regularization}

In this last section, we propose two new algorithms drawing piecewise logarithmic approximate paths derived from a small amount of matrix inversions (\(p\) or more). The logarithmic function naturally arised in the expression of all previously found path estimators, consequently, we build approximations involving this function. The main benefit of our algorithms is the direct computation of the sequence \(\{t_i\}\) as done by the LARS \cite{Efron} for linear regression. Our first algorithm follows the path (\(t\) increases) and is a simplified procedure adapted to data with a low correlation structure. The second algorithm is a backward procedure (\(t\) decreases toward zero) and can challenge the classic coordinate descent approach \cite{Fri1}. The efficiency of the algorithms are eventually illustrated on pharmacovigilance data.

\subsection{Cauchy problem}

The derivative of the first order equations for the Lasso with respect to \(t\) leads to a Cauchy problem.

\begin{proposition}
\label{inversibility}
The Lasso regularization path is described by the following system of differential equations
\[\dot{\beta}(t) = C(t)^{-1}\,\dot{\overbrace{\log p(t)}}\,, \quad t >0 \,, \]

with \(C(t) \in \mathcal{M}_{r_t \times r_t}(\mathbb{R})\) (\(r_i = \#S_{t}\)), \(\beta \in \mathbb{R}^{r_t}\), \(\log p(t) \in \mathbb{R}^{r_t} \) and
\[C_{jk}(t) = \left( \frac{\sum_{u}\mathtt{I}_{uj}\mathtt{I}_{uk}\overline{n}^0_ue^{(\mathtt{I}\beta)_u(t)}}{\mathtt{I}_j^T( \overline{n}^0e^{\mathtt{I}\beta(t)})} -p_k(t) \right)\,, \quad j,k \in S_{t} \subset \{1,...,p\}\,.\]

\end{proposition}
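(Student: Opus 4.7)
The plan is to obtain the system by differentiating the lasso first-order equations (\ref{LassoIm}) on the active set $S_t$, and then to prove invertibility of $C(t)$ by identifying $P(t)C(t)$, with $P(t)=\mathrm{diag}(p_j(t))_{j\in S_t}$, as the covariance matrix of the columns of $\mathtt{I}$ under a strictly positive probability measure.

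For the derivation, I introduce $W(t)=\sum_u \overline{n}^0_u e^{(\mathtt{I}\beta(t))_u}$, $A_j(t)=\mathtt{I}_j^T(\overline{n}^0 e^{\mathtt{I}\beta(t)})$ and $A_{jk}(t)=\sum_u \mathtt{I}_{uj}\mathtt{I}_{uk}\overline{n}^0_u e^{(\mathtt{I}\beta(t))_u}$, so that $p_j(t)=A_j(t)/W(t)$ for $j\in S_t$. Between the finitely many change-points the active set is locally constant, hence $\beta_k\equiv 0$ and $\dot{\beta}_k=0$ for $k\notin S_t$; the chain rule gives $\dot A_j=\sum_{k\in S_t}\dot{\beta}_k A_{jk}$ and $\dot W=\sum_{k\in S_t}\dot{\beta}_k A_k$, whence
\[
\frac{d}{dt}\log p_j=\frac{\dot A_j}{A_j}-\frac{\dot W}{W}=\sum_{k\in S_t}\dot{\beta}_k\left(\frac{A_{jk}}{A_j}-p_k\right)=\sum_{k\in S_t}C_{jk}(t)\,\dot{\beta}_k(t),
\]
which is exactly $\frac{d}{dt}\log p(t)=C(t)\dot\beta(t)$ (equivalently, this is what one obtains from (\ref{follpath}) after division by $p_j(t)$, using that $\mathtt{I}_{uj}^2=\mathtt{I}_{uj}$ so that $C_{jj}=1-p_j$).

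The main task is the invertibility of $C(t)$. Set $\alpha_u(t)=\overline{n}^0_u e^{(\mathtt{I}\beta(t))_u}/W(t)$, a probability distribution on $\{1,\dots,q\}$ with $\alpha_u(t)>0$ for every $u$. A direct computation shows
\[
(P(t)C(t))_{jk}=\sum_u \mathtt{I}_{uj}\mathtt{I}_{uk}\alpha_u(t)-p_j(t)p_k(t)=\mathrm{Cov}_\alpha(\mathtt{I}_{\cdot j},\mathtt{I}_{\cdot k}),\qquad j,k\in S_t,
\]
so $P(t)C(t)$ is the covariance matrix, under $\alpha$, of the columns of $\mathtt{I}$ indexed by $S_t$, and is therefore symmetric positive semi-definite.

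The key step is to upgrade semi-definiteness to strict definiteness. If $v\in\mathbb{R}^{r_t}$ satisfies $v^T(PC)v=\mathrm{Var}_\alpha\bigl(\sum_{j\in S_t}v_j\mathtt{I}_{\cdot j}\bigr)=0$, the strict positivity of $\alpha$ forces $\sum_{j\in S_t}v_j\mathtt{I}_{uj}$ to be constant in $u$. Treating this constant as a coefficient on the intercept column of ones (column $0$ of $I_0$), one obtains a nontrivial linear relation among the columns of $I_0$ indexed by $\{0\}\cup S_t$, contradicting the full-rank hypothesis on $I_0$ unless $v=0$. Hence $P(t)C(t)$ is positive definite. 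Since $p_j(t)>0$ on the active set under the strong overlap condition (Theorem \ref{surrounded}), $P(t)$ is invertible, and therefore so is $C(t)=P(t)^{-1}(P(t)C(t))$, yielding the announced Cauchy problem $\dot\beta(t)=C(t)^{-1}\frac{d}{dt}\log p(t)$.
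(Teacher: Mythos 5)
Your proposal is correct and follows essentially the same route as the paper: derive the ODE by differentiating the active-set equations and dividing by $p(t)$, symmetrize via $\tilde{C}=P(t)C(t)$, and show that a vanishing quadratic form forces $\mathtt{I}\gamma$ to be constant across rows, which contradicts the full rank of $I$ once the intercept column is taken into account. Your reformulation of the quadratic form as $\mathrm{Var}_\alpha(\sum_j v_j\mathtt{I}_{\cdot j})$ under the tilted probability $\alpha$ is just a cleaner packaging of the paper's expansion $\sum_{u\ne v}(J_u-J_v)^2\overline{n}^0_u\overline{n}^0_v e^{(\mathtt{I}\beta)_u+(\mathtt{I}\beta)_v}=0$, and your explicit remark that $p_j(t)>0$ is needed for $P(t)$ to be invertible is a small point the paper leaves implicit.
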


\begin{proof}
Equations (\ref{follpath}) are divided by vector \(p(t)\) and we obtain the desired equations. It remains to be proven the non-singularity of matrix \(C(t)\) for all \(t>0\). \\
With diagonal matrix \(P  \in \mathcal{M}_{r_t \times r_t}(\mathbb{R}) \) filled by elements \((p_j)_{j\in S_{t}}\) we build a matrix \(\tilde{C} = PC\) whose elements are:
\[(PC)_{jk} = \tilde{C}_{jk}= \frac{\sum_{u}\mathtt{I}_{uj}\mathtt{I}_{uk}\overline{n}^0_ue^{(\mathtt{I}\beta)_u}}{ \sum_u\overline{n}_u^0e^{(\mathtt{I}\beta)_u}} -p_j(t)p_k(t) \,, \quad \quad j,k \in S_{t}\,.\]
Suppose that this matrix \(\tilde{C}(t)\) is singular, then there exists a non-identically null vector \(\gamma \in \mathbb{R}^{r_t}\) such that \(\tilde{C}(t)\gamma = 0\) or written component-by-component 
\[(\tilde{C}\gamma)_j = \frac{\sum_{u}\mathtt{I}_{uj}(\sum_k \mathtt{I}_{uk} \gamma_k)\overline{n}^0_ue^{(\mathtt{I}\beta)_u}}{ \sum_u\overline{n}_u^0e^{(\mathtt{I}\beta)_u}} -p_j(t)\frac{\sum_{u}(\sum_k \mathtt{I}_{uk} \gamma_k)\overline{n}^0_ue^{(\mathtt{I}\beta)_u}}{ \sum_u\overline{n}_u^0e^{(\mathtt{I}\beta)_u}} =0 \,,\quad j \in S_t\]
We compute the linear combination \(\sum_j \gamma_j (\tilde{C}\gamma)_j = 0\) to obtain after computations
\[(\sum_{u}J_u^2\overline{n}^0_ue^{(\mathtt{I}\beta)_u}) (\sum_u\overline{n}_u^0e^{(\mathtt{I}\beta)_u}) -(\sum_{u}J_u\overline{n}^0_ue^{(\mathtt{I}\beta)_u})(\sum_{u}J_u\overline{n}^0_ue^{(\mathtt{I}\beta)_u}) =0 \,,\]
with \(J_u =  \sum_l \mathtt{I}_{ul} \gamma_l\). This relation is expanded and simplified into
\[\sum_{u \ne v}(J_u-J_v)^2 \overline{n}^0_u \overline{n}^0_v e^{(\mathtt{I}\beta)_u+(\mathtt{I}\beta)_v} =0 \,.\]
This is a sum of positive terms equals to zero, meaning that each term wanishes and we get \(J_u = const\) for all \(u = 1,...,n\). Thus \(\mathtt{I}\gamma = const\) which is impossible because matrix \(I\) is a full rank matrix. 
\end{proof}

\subsection{The piecewise logarithmic approximate path : a first simple algorithm}
\label{subsection44}

Path following algorithms \cite{Rosset} are competing methods with more used coordinate descent algorithms \cite{Fri1} \cite{Fri2}. We here present a simple algorithm for an increasing regularization parameter \(t\). Within this procedure, we are able to estimate at each step the value \(t\) of the next wanishing component in vector \(\beta(t)\) and thus speeding up the classical Newton-Raphson step \cite{Rosset}. We consider that correlation between predictors is "low", so that an emergence of a coefficient along the path after wanishing is not taken into account (but this case is included in the second algorithm). \\ 

\begin{proposition}
\label{algo1}
The path following algorithm for limit imbalanced logisitic regression by binary predictors (with low correlation) is the following:\\
\(i=0\), \(t_0=0\), \(\beta(t_0)=\beta(0)\) given. \(S_0 = \{j\,|\, \beta_j(0) \ne 0\,,\, j = 1,...,p\}\).\\
WHILE \(r_i = \#S_{t_i} \ne 0\) DO
\begin{equation}
\label{tseq}
t_{i+1} = t_i + \min \Delta T_i\,,\quad \Delta T_i = \{\Delta t_j\,|\, \Delta t_j = \frac{1-e^{-\beta_j(t_i)}}{(C_i^{-1}(\frac{{\rm sign}(\beta)}{p(t_i)}))_j} > 0\,,\, j \in S_{t_i}\}\,,
\end{equation}
with \(C_i  \in \mathcal{M}_{r_i \times r_i}(\mathbb{R}) \) such that
\[(C_i)_{jk}(t_i) = \left( \frac{\sum_{u}\mathtt{I}_{uj}\mathtt{I}_{uk}\overline{n}^0_ue^{(\mathtt{I}\beta)_u(t_i)}}{\mathtt{I}_j^T( \overline{n}^0e^{\mathtt{I}\beta(t_i)})} -p_k(t_i) \right)\,, \quad j,k \in S_{t_i} \subset \{1,...,p\}\,.\]
The path, on the segment \([t_i, t_{i+1}]\), is given by 
\[\beta(t)-\beta(t_i) = \log\left( 1 -C_i^{-1} \left(\frac{{\rm sign}(\beta)}{p(t_i)}\right)(t-t_i)\right)\,,\quad t \in [t_i, t_{i+1}]\,,\]
\[S_{t_{i+1}} = S_{t_i} \setminus U_i\,,\quad U_i = \Big\lbrace  j \in S_{t_i} \,|\, \Delta t_j = \min \Delta T_i \Big\rbrace\,.\]
\(i\) becomes \(i+1\).\\
END DO.
\end{proposition}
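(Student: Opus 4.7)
The plan is to view Proposition~\ref{algo1} as the specification of an algorithm whose correctness reduces to two derivations: the local update formula for \(\beta\) on \([t_i,t_{i+1}]\), and the explicit formula for the next change-point \(t_{i+1}\).

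First, I would restrict the Cauchy problem of Proposition~\ref{inversibility} to the active set \(S_{t_i}\). On any interval where both \(S_{t_i}\) and the signs \((\mathrm{sign}(\beta_j))_{j \in S_{t_i}}\) are constant, one has \(\dot p(t) = -\mathrm{sign}(\beta)\), so the exact ODE reads \(\dot\beta(t) = -C(t)^{-1}\mathrm{sign}(\beta)/p(t)\). The piecewise logarithmic approximation then freezes \(C(t)\) at \(C_i := C(t_i)\) and replaces the coupled dynamics by the scalar ansatz \(\dot\beta_j(t) = -\gamma_j/(1-\gamma_j(t-t_i))\), with \(\gamma := C_i^{-1}(\mathrm{sign}(\beta)/p(t_i))\). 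Two facts single out this choice: at \(t_i\) its slope is exactly \(-\gamma = C_i^{-1}\dot{\overbrace{\log p(t_i)}}\), matching the exact Cauchy problem; and in the independent case of Theorem~\ref{descriptionLasso}, where \(C_i\) becomes diagonal, the ansatz reproduces the exact solution. A one-line integration then yields the stated update \(\beta(t)-\beta(t_i) = \log(1-\gamma(t-t_i))\).

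Second, I would obtain the change-points by asking when a component of this approximate path reaches zero. For each \(j \in S_{t_i}\), solving \(\log(1-\gamma_j(t-t_i)) = -\beta_j(t_i)\) gives \(\Delta t_j = (1-e^{-\beta_j(t_i)})/\gamma_j\). The positivity condition in the definition of \(\Delta T_i\) automatically selects the coordinates actually moving toward zero---those for which \(\gamma_j\) and \(\beta_j(t_i)\) share a common sign---while the opposite case corresponds to \(|\beta_j|\) growing and hence never exiting on this interval. Setting \(t_{i+1} = t_i + \min\Delta T_i\) gives the first exit time; letting \(U_i\) be the argmin handles simultaneous removals, so that \(S_{t_{i+1}} = S_{t_i}\setminus U_i\) and the loop can be re-entered with the updated active set.

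The main obstacle to a fully rigorous statement is the low-correlation hypothesis: the algorithm assumes that once a coordinate has left the active set it does not come back, so that the loop terminates in at most \(p\) iterations, each requiring a single inversion of \(C_i\). A rigorous certificate for this would demand checking that, along the \emph{exact} path, the subgradients \(\nu_j(t)\) of the removed coordinates remain strictly inside \((-1,1)\) for \(t > t_{i+1}\); this monitoring step is exactly what the more general second algorithm of the section will reinstate.
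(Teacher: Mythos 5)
Your proposal is correct and follows essentially the same route as the paper: restrict the Cauchy problem of Proposition \ref{inversibility} to the active set, freeze the correction matrix at \(C_i = C(t_i)\), obtain the componentwise logarithmic update, and read off \(\Delta t_j = (1-e^{-\beta_j(t_i)})/\gamma_j\) from the zero-crossing condition (your sign analysis of the positivity filter is also the right reading). The only presentational difference is in how the logarithmic form is justified: the paper integrates the frozen system exactly to \(\beta(t)-\beta(t_i) = C_i^{-1}\left[\log p(t)-\log p(t_i)\right]\) and then commutes \(C_i^{-1}\) with the componentwise logarithm on the grounds that the argument is small, whereas you posit the componentwise ansatz directly by matching the initial slope \(-\gamma = C_i^{-1}(\mathrm{sign}(\beta)/p(t_i))\); these are equivalent to first order and yield the same formulas. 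One inaccuracy worth correcting: your claim that the ansatz reproduces the exact solution of Theorem \ref{descriptionLasso} when \(C_i\) is diagonal is not true --- in the independent case the exact path is \(\log\bigl(p_j(t)/(1-p_j(t))\bigr)\) up to a constant, a difference of two logarithms of affine functions of \(t\), which differs from the single term \(\log(1-\gamma_j(t-t_i))\) beyond first order; since the whole construction is avowedly an approximation, this does not damage the derivation, but it should not be used as a certificate of exactness.
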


\begin{proof}
Equations (\ref{follpath}) take the form \(C(t)\dot{\beta} = \dot{\overbrace{\log \left(p(t)\right)}}\) with \(C(t)\) called correction matrix. 
\[C_{jk}(t) = \left( \frac{\sum_{u}\mathtt{I}_{uj}\mathtt{I}_{uk}\overline{n}^0_ue^{(\mathtt{I}\beta)_u}}{\mathtt{I}_j^T( \overline{n}^0e^{\mathtt{I}\beta})} -p_k(t) \right)\,, \quad \quad j,k \in S_{t}\,.\]
Between two annulations of regression coefficients along the path (\(t_i\) and \(t_{i+1}\)), we consider this matrix to be constant (\(C(t_i)=C_i\)). In this case, 
\[\beta(t_{i+1})-\beta(t_i)=C_i^{-1}\left[\log(p(t_{i+1}))-\log(p(t_{i}))\right]\,.\]
We have \(t_0=0\), but the sequence of values \(\{t_i\}\) is unknown. However, we iteratively approximate them as follows. With
\[\beta(t_{i+1})-\beta(t_i)=C_i^{-1}\log\left( 1 - \left(\frac{{\rm sign}(\beta)}{p(t_i)}\right)(t_{i+1}-t_i)\right)\]
\begin{equation}
\approx \log\left( 1 -C_i^{-1} \left(\frac{{\rm sign}(\beta)}{p(t_i)}\right)(t_{i+1}-t_i)\right)\,,\label{approxx}
\end{equation} 
because \(|C_i^{-1} \left(\frac{{\rm sign}(\beta)}{p(t_i)}\right)(t_{i+1}-t_i)|\) is small for relative small step \(t_{i+1}-t_i\). We obtain the piecewise logarithmic path:
\[\beta(t)-\beta(t_i) = \log\left( 1 -C_i^{-1} \left(\frac{{\rm sign}(\beta)}{p(t_i)}\right)(t-t_i)\right)\,,\quad t \in [t_i, t_{i+1}]\,,\]
with
$$t_{i+1} = t_i + \min \Delta T_i\,,\quad \Delta T_i = \{ \Delta t_j\,|\, \Delta t_j = \frac{1-e^{\beta_j(t_i)}}{(C_i^{-1}(\frac{{\rm sign}(\beta)}{p(t_i)}))_j} > 0\,,\, j \in S_{t_i}\}\,,\quad i = 0,1,...$$
\(\Delta T_i\) is the set of values for \(t_{i+1}-t_i\) solving (\ref{approxx}) with \(\beta_j(t_{i+1})=0\) (for each \(j \in S_{t_i}\)). The set \(U_i\) gives at each step the indexes of regression coefficients to remove from \(S_{t_i}\). 
\end{proof}

Other approximations could be performed, for example using a second order term in the previous approximation (\ref{approxx}). Simulation tests show that our choice seems to give better results. We notice that the size of the matrix $C_i$ decreases during this procedure, speeding up the computation at each new step \(t_i\).

\begin{remark}
This algorithm has two main computational advantages. Firstly, the sequence \(\{t_i\}_{i=1,...,p}\) is directely determined, whereas other algorithms use a regular discretization on a logarithmic scale (coordinate descent) or Newton-Raphson steps (path following). Secondly, the sum \(\sum_i \overline{n}^0_ie^{(\mathtt{I}\beta)_i}\) does not appear in the $C_i$ matrices, which can highly reduce the computational cost especially if the matrix $\mathtt{I}$ is sparse (\(0.03\%\) of ones in the French spontaneous reports data base): this algorithm handles sparsity!
\end{remark}

To explore the efficiency of the algorithm, we simulate data sets with different correlation structures. Model selection is often provided with the BIC \cite{Schwarz}, which requires to know the different models arising along the path. Hence, we decide to evaluate the algorithm accuracy using a simple indicator: a comparison of the sequence of coefficients in the order of wanishing along the path. The indicator is \(p'/p\) if a simulation with our algorithm gives \(p'\) coefficients at the same index as in the sequence obtained by a classical lasso algorithm (coordinate descent in R package \emph{glmnet}). The correlation coefficient (from \(r= 0\) to \(r = 0.9\)) means that we chose initial \( R_{jk}=(1-r)b_j  b_k+r \min(b_j,b_k)\).\\
We simulate \(10^3\) paths for each number \(nb\) and \(r\), \(nb\) being the number of predictors in correlation. For each path, \(\beta_0=-5\) and the \(10\) regression coefficients \((\beta_1,...,\beta_{10})\) are always the same and chosen on a regular scale between \(-0.5\) and \(0.5\).

\begin{table}[!ht]
\begin{tabular}{c|cccccccccc}
nb/r&0&0.1&0.2&0.3&0.4&0.5&0.6&0.7&0.8&0.9\\
\hline
3 (i)&0.977& 0.909& 0.862& 0.809& 0.740& 0.706 &0.678& 0.642& 0.620& 0.564\\
3 (a)&0.777& 0.710& 0.736& 0.730& 0.735& 0.720& 0.727& 0.726& 0.725 & 0.755\\
\hline
5 (i)&0.970& 0.859& 0.771& 0.671& 0.613& 0.552& 0.513& 0.488& 0.410&  0.361\\
5 (a)&0.785& 0.752& 0.725& 0.715& 0.700& 0.760& 0.708& 0.693& 0.699& 0.705\\
\hline
8 (i)&0.969& 0.868& 0.750& 0.609& 0.550& 0.486& 0.433& 0.363& 0.341& 0.309\\
8 (a)&0.773& 0.742& 0.717&  0.756& 0.722& 0.698& 0.714& 0.675& 0.684& 0.622\\
\hline
\end{tabular}
\caption{Path model sequence analysis. The simple algorithm (a) has a stronger robustness to the presence of correlation than the analytic path obtained with an asumption of independence (i).}
\label{table4}
\end{table}

With \(nb = 3, 5\) or \(8\) correlated predictors over the 10 used, the exact solution with assumption of independence (i) (see Theorem \ref{descriptionLasso}) deteriorates with the increase in correlation (r), which is (almost) not the case if we use our algorithm (a). Notice that, with a result around \(0.8\), the approximate path is often very close to the exact one, this is due to the inversion in the sequence of two close \(t_i\) terms (see (\ref{tseq})).

\subsection{A new algorithm}

The second algorithm presented in this section computes forward selection. It is more suitable for problems with a large number of predictors (when we are looking for a sparse model) or/and in presence of a strong correlation structure. \\

The standard approach for computing regularization path by decreasing \(t\) with logistic regression consists in using a first order quadratic approximation of the first derivative of the likelihood between two consecutive closed solutions (that is in practice, two parameters \(t_i\) and \(t_{i+1}\) such that \(t_{i+1}-t_i<0\) is small). Using small steps for the parameter sequence \(\{t_i\}\) to ensure a good approximation, the path is drawn by the cyclical coordinate method (see \cite{Fri1} and the R package {\it glmnet}). Our new algorithm is a kind of equivalent of the LARS algorithm for the logistic regression : we compute large step in \(t\). Furthermore, in comparison with the cyclic coordinate descent algorithm, there is no loop at a fixed parameter \(t\). After presenting the algorithm, we challenge the {\it glmnet} package with our approach. \\

\begin{proposition}
The backward algorithm for limit imbalanced logisitic regression by binary predictors is the following:\\
\( i=0\), \(t_0=\max_i\{|\overline{N^1_i}-\overline{N^0_i}|\} = |\overline{N^1_k}-\overline{N^0_k}| \), \(\beta(t_0)=(0,...,0)^T \in \mathbb{R}^p\) and \(\epsilon>0\) given. \(S_{t_0} = \{\beta_k\}\).\\
WHILE (\(t_i>\epsilon\) or \(\#S_{t_i}<p\)) DO
\[t_{i+1} = t_i + \max \{\Delta T_i, \overline{\Delta T_i}\}\,,\]
with
 \[ \Delta T_i = \lbrace \Delta t_u \,|\, \Delta t_u= \frac{1-e^{-\beta_u(t_i)}}{\Phi_{iu}}<0\,,\, u \in S_{t_i}\rbrace\]
and
\[\overline{\Delta T_i} = \lbrace \overline{\Delta t_j^+},\,\overline{\Delta t_j^-} \,|\,\overline{\Delta t_j^+} = t_i \frac{1 - \nu_j(t_i)}{-1 + \Psi_{ij}p_j(t_i)}<0\,,\,\overline{\Delta t_j^-} = t_i \frac{1 + \nu_j(t_i)}{-1 - \Psi_{ij}p_j(t_i)}<0\,,\, j \in \overline{S}^*_{t_i}\rbrace\,.\]
Definitions for matrices \(\Phi\) and \(\Psi\) are given in the proof. Notice that \(\Phi = \Phi(t_i,\beta(t_i))\) (as for \(\Psi\)). 
The path, on the segment \([t_{i+1}, t_i]\), is given by
\[\beta_j(t)-\beta_j(t_i) = \log\left( 1 -\Phi_{ij}(t-t_i)\right)\,,\quad t \in [t_{i+1},t_i]\,,\quad j \in S_{t_i}\,,\]
and for the subgradients
\[\nu_j(t) = \frac{t_i}{t}\nu_j(t_i) + (1- \frac{t_i}{t})\Psi_{ij}p_j(t_i)\,,\quad t \in [t_{i+1},t_i]\,,\quad j \in \overline{S}^*_{t_i}\,.\]
The new set \(S_{t_{i+1}}\) is given by
\[S_{t_{i+1}} = (S_{t_i} \setminus U_i) \cup U'_i \,,\]
with
\[U_i = \Big\lbrace  j \in S_{t_i} \,|\, \Delta t_j = \max \{\Delta T_i, \overline{\Delta T_i}\} \Big\rbrace\,,\quad U'_i = \Big\lbrace  u \in \overline{S}^*_{t_i} \,|\,  \overline{\Delta t_u^+} \,{\rm or}\, \overline{\Delta t_u^-} = \max \{\Delta T_i, \overline{\Delta T_i}\} \Big\rbrace\,.\]
\(i\) becomes \(i+1\).\\
END DO
\end{proposition}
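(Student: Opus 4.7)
The plan is to mirror the derivation of the forward algorithm (Proposition~\ref{algo1}), but now track simultaneously the active coefficients $\{\beta_j\}_{j \in S_t}$ \emph{and} the subgradients $\{\nu_j\}_{j \in \overline{S}^*_t}$. Starting from the first-order conditions (\ref{LassoIm}) and differentiating with respect to $t$, using that $\dot{\beta}_k = 0$ for inactive $k$, I obtain for every $j \in \{1,\dots,p\}$
\[
\sum_{k \in S_t}(R_{jk}(t) - p_j(t)p_k(t))\,\dot{\beta}_k(t) \;=\; \dot{p}_j(t),
\]
with $R_{jk}$ and $p_j$ as in the proof of Proposition~\ref{inversibility}. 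For active $j \in S_t$ one has $\dot{\nu}_j = 0$, so $\dot{p}_j = -{\rm sign}(\beta_j)$; dividing by $p_j$ recovers the Cauchy system $C(t)\dot{\beta} = -{\rm sign}(\beta)/p(t)$ of Proposition~\ref{inversibility}. Freezing $C$, $R$, $p$ at their values at $t_i$ (exactly as in Proposition~\ref{algo1}) and integrating yields the piecewise logarithmic branch $\beta_j(t) = \beta_j(t_i) + \log(1 - \Phi_{ij}(t-t_i))$ with $\Phi_{ij} = [C_i^{-1}({\rm sign}(\beta)/p(t_i))]_j$.

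For inactive $j \in \overline{S}^*_t$, $\beta_j \equiv 0$ and $\dot{p}_j = -\nu_j - t\dot{\nu}_j$, so the same equation becomes the linear ODE $\dot{\nu}_j + \nu_j/t = A_j/t$, where, after freezing the matrices and using $\dot{\beta}_k(t_i) = -\Phi_{ik}$, $A_j = \sum_{k \in S_{t_i}}(R_{jk}(t_i) - p_j(t_i)p_k(t_i))\Phi_{ik}$. The integrating factor $t$ then gives $(t\nu_j)' = A_j$, hence the $1/t$-affine solution $\nu_j(t) = (t_i/t)\nu_j(t_i) + (1-t_i/t)\Psi_{ij}p_j(t_i)$ with $\Psi_{ij} := A_j/p_j(t_i)$, matching the announced form. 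The next breakpoint $t_{i+1} < t_i$ is then the first event (moving toward zero) among: an active $\beta_u$ hitting zero, giving $\Delta t_u = (1-e^{-\beta_u(t_i)})/\Phi_{iu}$ by inverting the logarithmic branch; or an inactive $\nu_j$ hitting $\pm 1$, giving after a short algebraic manipulation $\overline{\Delta t}^\pm_j = t_i(\nu_j(t_i) \mp 1)/(\pm 1 - \Psi_{ij}p_j(t_i))$. Retaining only the strictly negative increments and taking their maximum produces $t_{i+1}$ together with the update rule $S_{t_{i+1}} = (S_{t_i}\setminus U_i) \cup U'_i$. Initialization at $t_0 = \max_i|\overline{N}^1_i - \overline{N}^0_i|$ follows by plugging $\beta \equiv 0$ into (\ref{LassoIm}): the subgradient must satisfy $\nu_j = (\overline{N}^1_j - \overline{N}^0_j)/t$, so $|\nu_j| \le 1$ everywhere precisely when $t \ge \max_i|\overline{N}^1_i - \overline{N}^0_i|$, with equality attained at the $\arg\max$ coordinate, which is then the single element of $S_{t_0}$.

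The main obstacle I expect is identical to that of Proposition~\ref{algo1}: the closed-form expressions arise from freezing the matrices $C(t)$, $R(t)$, $p(t)$ on each segment $[t_{i+1}, t_i]$, which is only a first-order approximation of the exact ODE system, so the proposition really defines a specific numerical scheme rather than giving the exact analytic path. A secondary subtle point is fixing sign conventions carefully so that only the strictly negative values of $\Delta t_u$, $\overline{\Delta t}^+_j$, $\overline{\Delta t}^-_j$ are retained (ensuring $t$ strictly decreases) and so that the maximum among these negative quantities correctly identifies the first upcoming change of active set.
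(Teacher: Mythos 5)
Your proposal follows essentially the same route as the paper's proof: differentiate the KKT system (\ref{LassoIm}), split into the active block (giving the Cauchy system of Proposition \ref{inversibility}) and the inactive block, freeze the matrices on each segment to obtain the piecewise-logarithmic law for $\beta_j$ and the affine law for $t\,\nu_j(t)$ (your $A_j$ is exactly the paper's $\Psi_{ij}p_j(t_i)$ since $\Psi_i = D_iC_i^{-1}(\mathrm{sign}(\beta)/p(t_i))$), and read off the breakpoints from $\beta_u=0$ and $\nu_j=\pm1$. The only genuine additions are your integrating-factor derivation of the subgradient update, which is a cleaner bookkeeping of the same freezing approximation, and your explicit justification of the initialization $t_0=\max_i|\overline{N}^1_i-\overline{N}^0_i|$ from $\beta\equiv 0$, which the paper states but does not derive.
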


\begin{proof}
We differentiate equations (\ref{LassoIm}) for all \(j\) in \(\{1,...,,p\}\) (see also (\ref{follpath})):

\[ \sum_{k \in S_t} \left( \frac{\sum_{i}\mathtt{I}_{ij}\mathtt{I}_{ik}\overline{n}^0_ie^{(\mathtt{I}\beta)_i}}{\mathtt{I}_j^T (\overline{n}^0e^{\mathtt{I}\beta})} -p_k(t) \right)\dot{\beta_k}(t) = \dot{\overbrace{\log(p_j(t))}}\,,\]
or in matrix form with \(C(t) \in \mathcal{M}_{r \times r}(\mathbb{R})\), \(D(t) \in \mathcal{M}_{(p-r) \times r} (\mathbb{R})\), \(r = \#S_t\) and vectors \(p^{\ne}(t) = (p_j(t))_{j\in S_t}^T\), \(p^{=}(t) = (p_j(t))_{j \in \overline{S}_t}^T\) and \(\beta^{\ne}(t) = (\beta_j(t))_{j\in S_t}^T\) we get
\[C(t)\dot{\overbrace{\beta^{\ne}(t)}} = \dot{\overbrace{\log(p^{\ne}(t))}}\,,\quad D(t)\dot{\overbrace{\beta^{\ne}(t)}} = \dot{\overbrace{\log(p^{=}(t))}}\,.\]
\(C(t)\) is a square non-singular matrix for all \(t\) in \([0,t_0[\) (see Remark \ref{inversibility}). Between two consecutive values \(t_i\) and \(t_{i+1}\) (\(t_{i+1}<t_i\)) of the \(t\) sequence, we consider that \(C(t) \approx C(t_i)\) and \(D(t) \approx D(t_i)\), thus
\[\dot{\overbrace{\beta^{\ne}(t)}} \approx C^{-1}(t_i)\dot{\overbrace{\log(p^{\ne}(t))}}\,,\quad \dot{\overbrace{\log(p^{=}(t))}} \approx D(t_i)C^{-1}(t_i)\dot{\overbrace{\log(p^{\ne}(t))}} = E(t_i)\dot{\overbrace{\log(p^{\ne}(t))}}\,,\]
with \(E(t_i) \in \mathcal{M}_{(p-r_i) \times r_i} (\mathbb{R})\) and \(r_i = \#S_{t_i}\). The system of equations involving matrix \(C^{-1}\) is solved as in the proof of Proposition \ref{algo1} and we get 
\[\beta_j(t)-\beta_j(t_i) = \log\left( 1 -\Phi_{ij}(t-t_i)\right)\,,\quad t \in [t_{i+1},t_i]\,,\quad j \in S_{t_i}\,,\]
with \(\Phi_{ij} = \left(C_i^{-1}(\frac{{\rm sign}(\beta^{\ne})}{p^{\ne}(t_i)})\right)_j\). The second set of equations gives
\[\log(p^{=}(t))-\log(p^{=}(t_i)) = E_i(\log(p^{\ne}(t)-\log(p^{\ne}(t_i))\,,\]
and using the usual approximation
\[\log(p_j(t))-\log(p_j(t_i)) \approx \log\left(1- \Psi_{ij}(t-t_i)\right)\,,\quad j \in \overline{S}^*_{t_i}\,,\]

with \(\Psi_{ij} = \left(E_i(\frac{{\rm sign}(\beta^{\ne})}{p^{\ne}(t_i)})\right)_j\) and we find
\[\nu_j(t) = \frac{t_i}{t}\nu_j(t_i) + (1- \frac{t_i}{t})\Psi_{ij}p_j(t_i)\,,\quad t \in [t_{i+1},t_i]\,,\quad j \in \overline{S}^*_{t_i}\,.\]
We solve \(2 r_i +(p-r_i) = p + r_i\) equations (\(\nu_j(t_{i+1})=\pm 1\), \(j \in \overline{S}^*_{t_i}\) and \(\beta_j(t_{t+1})=0\), \(j \in S_{t_i}\)) to find the possible values for \(t_{i+1}-t_i\). The maximum of obtained negative values within the \(p + r_i\) results is used to build the \(t\) sequence.
\end{proof}
To visualize what is happening during the algorithm, we define linear functions \( B^{=}_j : t \mapsto t\nu_j(t)\) and \(B^{\ne}_j : t \mapsto e^{\beta_j(t)}-1+{\rm sign}(\beta_j)t\)  leading to the $p$ functions \(B_j\) (\(j=1,...,p\)) such that
\[B_j(t) = \left\{\begin{array}{cll}
B_j^=(t)\,,&{\rm if}&|B_j^=(t)|\le t\,,\\
B_j^{\ne}(t)\,,&{\rm if}&|B_j^{\ne}(t)| > t\,.\\
\end{array}\right.\]
Functions \(B_j\) are all piecewise linear and can be drawn in the plane shown in Figure \ref{nubeta}.
\begin{figure}[!ht]
\center
\includegraphics[totalheight=0.24\textheight]{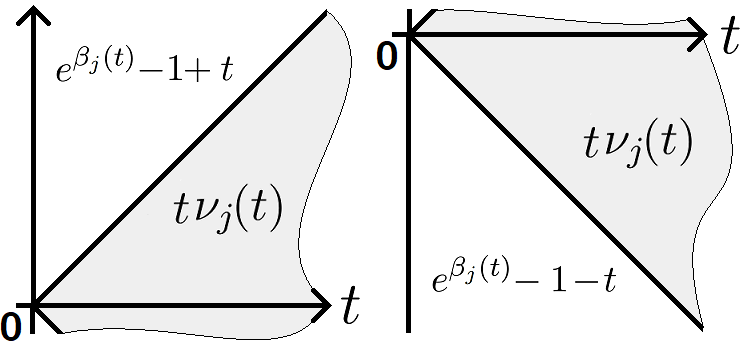} 
\caption{The \(B_j\) functions are piecewise linear in this plane. When \(B_j(t) = \pm t\), \(t\) is one of the value of the sequence \((t_i)_{i \ge 1}\).}
\label{nubeta}
\end{figure}

\subsection{Path reconstruction with the French spontaneous reports database}

We illustrate the efficiency of the limit path construction by piecewise logarithmic functions on the French spontaneous reports database. We look at two examples, a first one with no evidence of correlation and a second with strong correlations. The database contains about 330000 reports in 2016 and the imbalance is high or very high for all the adverse effects \cite{Beziz}. In the following graphs, the dotted lines represent results obtained by our algorithm, the solid ones result from the classical \emph{glmnet} package.

\begin{figure}[!ht]
\center
\includegraphics[totalheight=0.3\textheight]{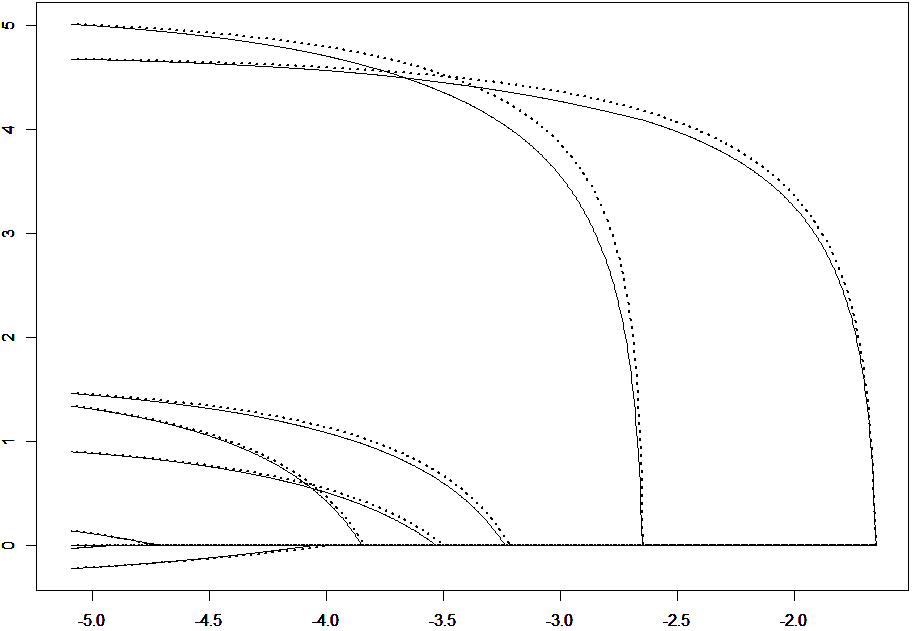} 
\caption{Example 1. Path of regression coefficients with respect to log(t). The correlation between predictors is weak and the path given by the exact solution with independent predictors is a good approximation of the result.}
\label{F1}
\end{figure}

\begin{figure}[!ht]
\center
\includegraphics[totalheight=0.24\textheight]{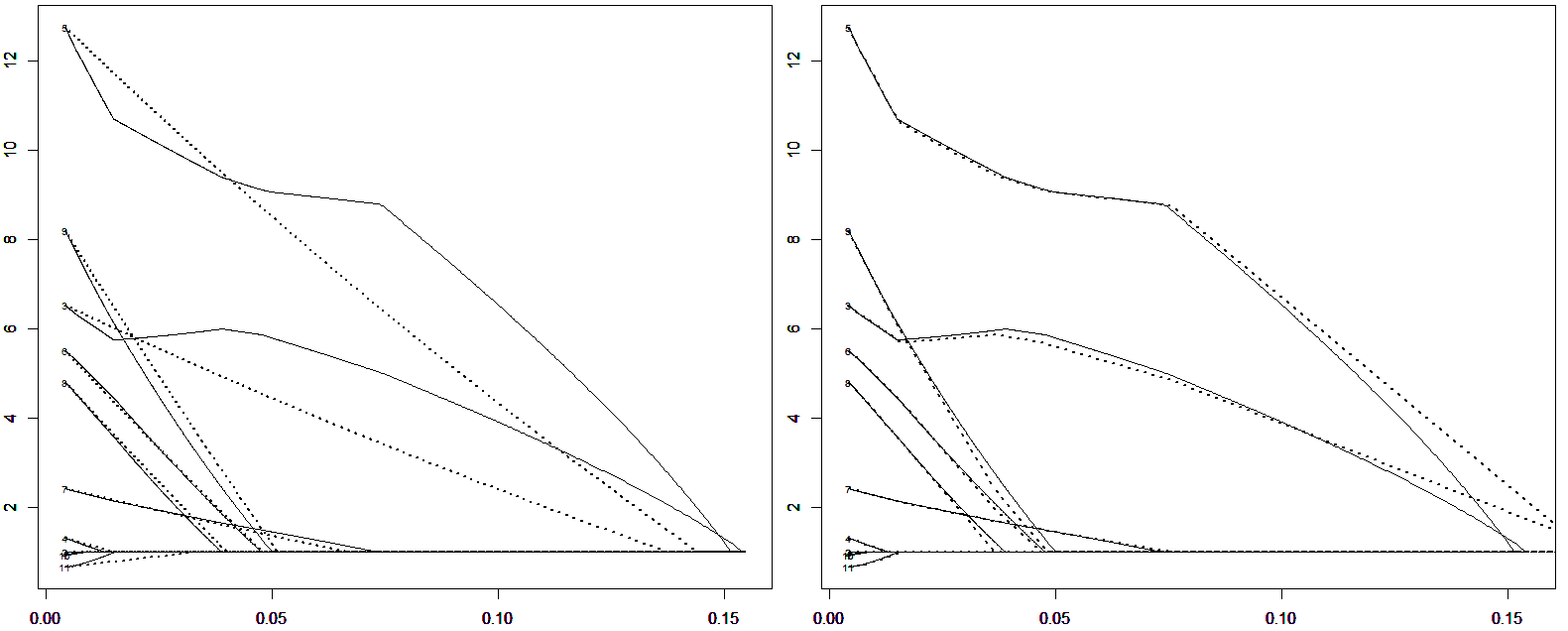} 
\caption{Example 2. We plotted the exponential of the beta coefficients with respect to t. On the left side by the assumption of no correlation, on the right side with the correction due to the approximate path algorithm. In presence of stong correlations, the algorithm of the approximate path (described in previous subsection) gives better adjusted graphs.}
\label{F2}
\end{figure}

The Figure \ref{F2} shows common features encountered with other examples. The path of the exponential of the coefficients shapes a set of piecewise linear functions and the algorithm remains efficient even if the number of predictors is high (150 for examples). It seems that there is no case of a path with a curve reappearing after a first canceling (due to a strong correlation between predictors with opposite signs of initial coefficients). Thus, the sets \(A_i\) in the algorithm do not have to be determined.\\

We notice that the accuracy of this path following algorithm can easily be increased by adding intermediate steps (in variable t). The main computing limitation being the matrix inversion, one could study the inner product (Gram) matrix for class 0 and reorder rows and columns to reveal patterns and form a block diagonal matrix. These blocks could result from a statistical study of the Gram matrix\footnote{To that end, see the literature of the block clustering problem \cite{Govaert}.} (finding the pairwise independent predictors) as well as from pharmacological assumptions (medical treatments also shape patterns). Thus, computational costs become a marginal problem and one can concentrate on the bias correction by adding priors related to temporal bias, under-reporting or the introduction of similarity modifying the \(R\) matrix\footnote{With a similarity matrix \(S \in \mathcal{M}_{p \times p}([0,1])\), the similarity is defined as follows: coefficient \(R_{jk}\) becomes \((1-S_{jk})R_{jk} + S_{jk}\min(p_j,p_k)\).}.

\section{Conclusion and perspectives}
The central novelty of this work is the introduction of a rescaled likelihood for the limit imbalanced logistic regression problem. The expression of this likelihood could have some connexions with the well-known likelihoods of the self-controlled case series method \cite{Simpson} and of the proportional hazards model \cite{Cox} used in epidemiology.

Most results exposed for binary data can be extented to other data types. However, simulations have been done only with binary data, having in mind the underlying applied problem of pharmacovigilance. The new estimate is always very close to the initial MLE because data are located on the vertexes of the hypercube and then one another "close". A convergence study of all possible existing algorithms for the primal and dual problems could be performed with different class imbalances and an evaluation of the first order term.

The variance reduction is a central issue that has to be treated in a Bayesian framework. Whereas the prior to add in the standard logistic regression is unclear, the rescaled likelihood takes a well-adapted form for exponential priors. We considered model selection using the BIC and the lasso to answer this question. Due to binary data, the lasso regularization problem became easier to understand in our limit imbalanced case: we found many precise estimators. Piecewise logarithmic approximate paths are built by an effective path following procedure which determines step by step the vanishing time of each path, do not use any loops as in coordinate descent algorithms and computes expressions only involving non-zero data. Moreover, this algorithm can take into account the correlation structure between predictors to further shrink computational costs. The values for \(\overline{N}^1\), \(\overline{n}^0\) and for matrix \(R\) could be shifted in order to incorporate absolute bias, temporal bias, under-reporting and similarity or correlation corrections.

\section{A pharmacovigilance project?}

Within this paper, we have had in mind the pharmacovigilance context as this work was carried out in parallel of a one-year engineering job at the French National Institute of Health and Medical Research\footnote{B2PHI laboratory UMR 1181, INSERM, UVSQ, Institut Pasteur, Villejuif 94807, France}. We hope this article could contribute a little to the developement of mathematical tools for pharmacovigilance purposes. The science of drug safety at a postmarketing level is nearly non-existent in France as in many other countries: the reporting process of spontaneous reports is inadequate and resulting databases are badly processed with unadapted tools. Public health scandals related to medication are steadily increasing and the spotlights are turned towards big pharmaceutical companies while patient associations should firstly require public authorities to establish a modern drug safety structure. To that end, the statistical community has a major role to play by proposing trustworthy decision-support tools, opposing science to political and financial influences. Creating a useful tool was the guideline of this present work and the author hopes that other mathematicians will embrace the direction initiated by this article. \\

We would like to conclude by giving our opinion about the work that remains to be done to obtain an operational tool (in five points), hoping that it will inspire epidemiologists.

1) Building priors related to bias (temporal bias, under-reporting...) with the help of pharmacologists. 2) Developing the proposed regularization algorithms evaluating their complexity and accuracy levels. 3) Introducing simple indicators to control the quality of the limit approximation. 4) Working on path visualization and new indicators (that are not thresholds). 5) Evaluating the obtained tool in the hands of pharmacologists (the use of reference sets is, to our mind, inadequate).

\section*{Acknowlegment}
I would like to deeply thank Laetitia Comminges from the Paris-Dauphine University for relevant comments that greatly improved the manuscript. I also thank my colleague Mohammed Sedki from the INSERM laboratory of Villejuif for his constant encouragement to complete this work.

\appendix
\section{Exact solutions}\label{appA}

We give a collection of examples consisting in simple solutions of the equation (\ref{solution}). 

\subsection{No intercept}
If there is no intercept and no interaction between the regressors, the matrix $\mathrm{I}$ equals the identity matrix $\mathbb{I}_p$ and
\[\beta_1 = \log\left(\frac{n_1^1}{n_1^0} \right)\,,\quad ...\quad ,\, \beta_p = \log\left(\frac{n_p^1}{n_p^0} \right)\,. \]
If one row contains other ones, the inverse matrix is the same matrix with the added ones transformed into its opposite. 
      
\subsection{Intercept}
\label{A2}
If the square matrix $\mathrm{I}_{p+1}$ is the following 
$$\mathrm{I}_{p+1} =  \begin{pmatrix}
   1& 0 & \cdots & \cdots & 0 \\
   \vdots& 1 & 0 & \cdots & 0 \\
   \vdots& 0 & \ddots& \ddots &  \vdots \\
   \vdots& \vdots & \ddots & \ddots&  0 \\
   1& 0 & \cdots& 0 & 1 \\   
   \end{pmatrix}\,,\quad \hbox{then} \quad \mathrm{I}_{p+1}^{-1} =  \begin{pmatrix}
   1& 0 & \cdots & \cdots & 0 \\
   -1& 1 & 0 & \cdots & 0 \\
   \vdots& 0 & \ddots& \ddots &  \vdots \\
   \vdots& \vdots & \ddots & \ddots&  0 \\
   -1& 0 & \cdots& 0 & 1 \\   
   \end{pmatrix}$$
for the inverse matrix, so that the $\beta$ coefficients take the form
$$\beta_0 = \log\left(\frac{n_0^1}{n_0^0} \right)\,,\quad \beta_1 = \log\left(\frac{n_1^1}{n_1^0}\frac{n_0^0}{n_0^1}  \right)\,,\quad ...\quad ,\,\beta_p = \log\left(\frac{n_p^1}{n_p^0}\frac{n_0^0}{n_0^1}  \right)\,. $$

\subsection{Intercept with one correlation}

The first row of the following matrix
$$\mathrm{I}_{p+1} = 
  \begin{pmatrix}
   1& ? & \cdots & \cdots & ? \\
   \vdots& 1 & 0 & \cdots & 0 \\
   \vdots& 0 & \ddots& \ddots &  \vdots \\
   \vdots& \vdots & \ddots & \ddots&  0 \\
   1& 0 & \cdots& 0 & 1 \\   
   \end{pmatrix}$$
definies the set $K = \{j \in \{0,...,p\}\,,\, \mathrm{I}_{1j}=1 \}$. The case $\#J = 2$ is left out because it does not coincide with a non-singular matrix. The case  $\#J = 1$ corresponds to the previous example. The easiest way to solve this example is to look at initial equations (\ref{normal}). We write down the $p+1$ equations, where only the first one has a different form:

$$\sum_{i=0}^{p} (n_i^1-n_i^0) = \sum_{i=1}^{p}(n_i^1+n_i^0)\tanh(\frac{\beta_0+\beta_i}{2})+ (n_0^1+n_0^0)\tanh(\frac{1}{2}\sum_{j \in K} \beta_j) $$
and for $k \in \{1,...,p\}$,
\begin{equation}
\label{p}
(n_k^1-n_k^0) +  1_{k \in K}(n_0^1-n_0^0) = (n_k^1+n_k^0)\tanh(\frac{\beta_0+\beta_k}{2}) + 1_{k \in K}(n_0^1+n_0^0)\tanh(\frac{1}{2}\sum_{j \in K} \beta_j)\,.
\end{equation}

Subtracting all the $p$ equations to the first one, we obtain
$$(1-\#K) (n_0^1-n_0^0) = (1-\#K)(n_0^1+n_0^0)\tanh(\frac{1}{2}\sum_{j \in K} \beta_j),\,$$
which can be used to simplify equations (\ref{p}) into
$$(n_k^1-n_k^0) = (n_k^1+n_k^0)\tanh(\frac{\beta_0+\beta_k}{2})\,.$$
Finally, we have
$$\exp(\sum_{j \in K} \beta_j) = \frac{n_0^1}{n_0^0}\quad \hbox{and}\quad\exp(\beta_0+\beta_k) = \frac{n_k^1}{n_k^0}\,,\quad k \in \{1,...,p\}\,,$$
so that we deduce the following closed form for the $\beta$ coefficients

$$\exp(\beta_0) = \left( \frac{n_0^0}{n_0^1}\prod_{j \in J,\, j \ne 0}\left(\frac{n_j^1}{n_j^0} \right)\right)^{\frac{1}{\#J-2}} \,,$$

$$\exp(\beta_i) = \frac{n_i^1}{n_i^0}\left(\frac{n_0^1}{n_0^0}\prod_{j \in J,\, j \ne 0}\frac{n_j^0}{n_j^1} \right)^{\frac{1}{\#J-2}}\,,\quad i \in \{1,...,p\}\,. $$

Notice that the regression coefficients behave in a very unpredictable way. It is sufficient to see that on an example with $p=2$ and $\#J =3$. The matrix $\mathrm{I}$ is 
$$
  \begin{pmatrix}
   1& 1 & 1 \\
   1& 1 & 0 \\
   1& 0 & 1 \\ 
   \end{pmatrix}\,,$$
and we have

$$\exp(\beta_0)=\frac{n_0^0 }{n_0^1} \frac{n_1^1 n_2^1}{n_1^0 n_2^0}\,,\quad\exp(\beta_1)=\frac{n_0^1}{n_0^0} \frac{n_2^0}{n_2^1}\,,\quad\exp(\beta_2)=\frac{n_0^1}{n_0^0} \frac{n_1^0}{n_1^1}\,.$$
The first intuition is to think that coefficients $\beta_1$ and $\beta_2$ do depend on the couples $(n_1^0,n_1^1)$ and $(n_2^0,n_2^1)$ respectively, but it is not the case!

\subsection{Stairs}
\label{A4}
With
$$\mathrm{I}_{p+1}=
  \begin{pmatrix}
   1 & 0 & \cdots & 0 \\
   1 & \ddots& \ddots &  \vdots \\
   \vdots & \ddots & \ddots&  0 \\
   1 & \cdots& 1 & 1 \\   
   \end{pmatrix}\,,\quad\hbox{we find} \quad \mathrm{I}_{p+1}^{-1}=
  \begin{pmatrix}
   1& 0 & \cdots & \cdots & 0 \\
   -1& \ddots & \ddots &  & \vdots \\
   0& \ddots & \ddots& \ddots &  \vdots \\
   \vdots& \ddots & \ddots & \ddots&  0 \\
   0& \cdots & 0& -1 & 1 \\   
   \end{pmatrix}$$

and then 

$$\beta_0=\log \left(\frac{n_0^1}{n_0^0}\right) \,,\quad \beta_i= \log \left(\frac{n_i^1}{n_i^0}\frac{n_{i-1}^0}{n_{i-1}^1} \right)\,,\quad i \in \{1,...,p\}\,.$$

\section{Proof of Proposition 3.1.}\label{appB}

The result is proved with a succession of Taylor expansions of degree $1$ or $2$ in $1/s$. We use
$$\tanh\left(\frac{x}{2} \right) = -1 + 2 e^{x} - 2 e^{2x} + o(e^{2x})\,,$$
then with (\ref{normalgeneral}), we have
\begin{equation}
\label{imbal2}
I_1^Tn^1 = I_1^Tn^1 e^{I_1\beta}+I_0^Tn^0 e^{I_0\beta}- \left(  I_1^Tn^1 e^{2I_1\beta}+I_0^Tn^0 e^{2I_0\beta}\right) + o(|n| \, e^{2\beta_0})\,.
\end{equation}
The first equation of this system gives
$$|n^1| = \sum n^1_i e^{(I_1\beta)_i}+\sum n^0_i e^{(I_0\beta)_i}- \left(  \sum n^1_i e^{2(I_1\beta)_i}+\sum n^0_i e^{2(I_0\beta)_i}\right) + o(|n| \, e^{2\beta_0})\,,$$
therefore, using notations introduced in the proposition, 
$$e^{2\beta_0}\left(n_2^1+s\, n_2^0 \right)-e^{\beta_0}\left(n_1^1+s\, n_1^0 \right) + 1 - o(\frac{1}{s}) = 0\,,$$
and
$$e^{\beta_0} = \frac{1}{2}\frac{n_1^1+s\, n_1^0}{n_2^1+s\, n_2^0} - \frac{1}{2}\frac{n_1^1+s\, n_1^0}{n_2^1+s\, n_2^0}\sqrt{1-4\left( 1-o(\frac{1}{s})\right) \frac{n_2^1+s\, n_2^0}{(n_1^1+s\, n_1^0)^2}}\,.$$
The coefficient $s$ is defined as $s = \frac{|n^0|}{|n^1|}$. Now we have to find the Taylor expansion of $e^{\beta_0}$ of degree two in $1/s$ and reinject it in (\ref{imbal2}). We have
$$\frac{n_2^1+s\, n_2^0}{(n_1^1+s\, n_1^0)^2} = \frac{n_2^1+s\, n_2^0}{s^2 (n_1^0)^2}\frac{1}{(\frac{n_1^1}{s n_1^0}+1)^2} = \frac{n_2^1+s\, n_2^0}{s^2 (n_1^0)^2} \left(1 - 2\frac{n_1^1}{s n_1^0} + o(\frac{1}{s})\right)$$
$$= \frac{1}{s}\frac{n_2^0}{(n_1^0)^2} + \frac{1}{s^2}\left(\frac{n_2^1}{(n_1^0)^2}-2\frac{n_1^1n_2^0}{(n_1^0)^3} \right) + o(\frac{1}{s^2})\,,$$
and
$$\sqrt{1-4\left( 1 - o(\frac{1}{s}) \right)\left(\frac{1}{s}\frac{n_2^0}{(n_1^0)^2} + \frac{1}{s^2}\left(\frac{n_2^1}{(n_1^0)^2}-2\frac{n_1^1n_2^0}{(n_1^0)^3}\right)+ o(\frac{1}{s^2}) \right)}$$
$$=\sqrt{1-4\left(\frac{1}{s}\frac{n_2^0}{(n_1^0)^2} + \frac{1}{s^2}\left(\frac{n_2^1}{(n_1^0)^2}-2\frac{n_1^1n_2^0}{(n_1^0)^3}\right)+ o(\frac{1}{s^2}) \right)}$$

$$=1-2\left(\frac{1}{s}\frac{n_2^0}{(n_1^0)^2} + \frac{1}{s^2}\left(\frac{n_2^1}{(n_1^0)^2}-2\frac{n_1^1n_2^0}{(n_1^0)^3} + \frac{1(n_2^0)^2}{(n_1^0)^4}\right) \right)+ o(\frac{1}{s^2})\,.$$
Therefore
$$e^{\beta_0} = \frac{n_1^1+s\, n_1^0}{n_2^1+s\, n_2^0}\left(\frac{1}{s}\frac{n_2^0}{(n_1^0)^2} + \frac{1}{s^2}\left(\frac{n_2^1}{(n_1^0)^2}-2\frac{n_1^1n_2^0}{(n_1^0)^3} + \frac{1(n_2^0)^2}{(n_1^0)^4}\right)+ o(\frac{1}{s^2}) \right)$$

$$=\frac{n_1^1+s\, n_1^0}{s\, n_2^0}\left(1-\frac{n_2^1}{s n_2^0} + o(\frac{1}{s})\right)\left(\frac{1}{s}\frac{n_2^0}{(n_1^0)^2} + \frac{1}{s^2}\left(\frac{n_2^1}{(n_1^0)^2}-2\frac{n_1^1n_2^0}{(n_1^0)^3} + \frac{(n_2^0)^2}{(n_1^0)^4}\right) + o(\frac{1}{s^2}) \right)$$

$$ = \left(\frac{n_1^0}{n_2^0}+\frac{1}{s}\left(\frac{n_1^1}{n_2^0}-\frac{n_2^1n_1^0}{(n_2^0)^2} \right)\right)\left(\frac{1}{s}\frac{n_2^0}{(n_1^0)^2} + \frac{1}{s^2}\left(\frac{n_2^1}{(n_1^0)^2}-2\frac{n_1^1n_2^0}{(n_1^0)^3} + \frac{(n_2^0)^2}{(n_1^0)^4}\right) \right)+ o(\frac{1}{s^2})$$
$$e^{\beta_0} = \frac{1}{s \, n_1^0}+\left(\frac{1}{s\, n_1^0}\right)^2\left(\frac{n_2^0}{n_1^0}-n_1^1\right) + o(\frac{1}{s^2})\,.$$
The system of equations (\ref{imbal2}) without its first equation is
$$\overline{N}^1 = \frac{\mathtt{I}_1^Tn^1}{|n^1|} = e^{\beta_0}\left(\bold{n_1^1}+s\,\bold{n_1^0} \right) - e^{2\beta_0}\left(\bold{n_2^1}+s\,\bold{n_2^0}\right) + o(\frac{1}{s})$$
and we use the previous expression for $e^{\beta_0}$:
$$\overline{N}^1 = \left(\frac{1}{s\, n_1^0}+\left(\frac{1}{s\, n_1^0}\right)^2\left(\frac{n_2^0}{n_1^0}-n_1^1\right) \right)\left(\bold{n_1^1}+s\,\bold{n_1^0} \right) - \left(\frac{1}{s\, n_1^0}\right)^2\left(\bold{n_2^1}+s\,\bold{n_2^0}\right) + o(\frac{1}{s})$$
$$\overline{N}^1 = \frac{\bold{n_1^0}}{n_1^0} + \frac{1}{s}\left( \frac{n_2^0}{(n_1^0)^2}\left[ \frac{\bold{n_1^0}}{n_1^0}-\frac{\bold{n_2^0}}{n_2^0} \right] + \frac{n_1^1}{n_1^0}\left[ \frac{\bold{n_1^1}}{n_1^1}-\frac{\bold{n_1^0}}{n_1^0} \right]\right) + o(\frac{1}{s})$$
or
$$\overline{N}^1 + \frac{1}{s}\left(\frac{\bold{n_2^0}}{(n_1^0)^2}-\frac{\bold{n_1^1}}{n_1^0} \right) = \frac{\bold{n_1^0}}{n_1^0}\left(1+ \frac{1}{s}\left(\frac{n_2^0}{(n_1^0)^2}-\frac{n_1^1}{n_1^0}\right) \right) + o(\frac{1}{s})\,.$$
Then, using again a Taylor expansion,
$$\frac{\bold{n_1^0}}{n_1^0} = \overline{N}^1 + \frac{1}{s}\left(\frac{\bold{n_2^0}-\overline{N}^1n_2^0}{(n_1^0)^2}-\frac{\bold{n_1^1}-\overline{N}^1n_1^1}{n_1^0} \right)+ o(\frac{1}{s})$$
or 
$$\frac{\bold{n_1^0}}{n_1^0} - \overline{N}^1 = \frac{1}{s}\left(\frac{n_2^0}{(n_1^0)^2}\left(\frac{\bold{n_2^0}}{n_2^0}-\overline{N}^1 \right) - \frac{n_1^1}{n_1^0}\left(\frac{\bold{n_1^1}}{n_1^1}-\overline{N}^1 \right)\right)+ o(\frac{1}{s})\,.$$

\newpage
\section{Path simulations}\label{appC}
In the following graphs, the dotted lines are obtained by the exact path corresponding to Theorem \ref{descriptionLasso} for examples 1 and 2, Theorem \ref{thindep} for examples 3 and 4 and the inclusion case for examples 5 and 6. The solid lines are always given by a coordinate descent algorithm for the standard logistic regression. We change the scale for $t$ (by a linear rescaling) in order to have to same \(\max(t)\) (see (\ref{tseq})) for the exact and algorithmic paths.
\begin{figure}[!ht]
\center
\includegraphics[totalheight=0.3\textheight]{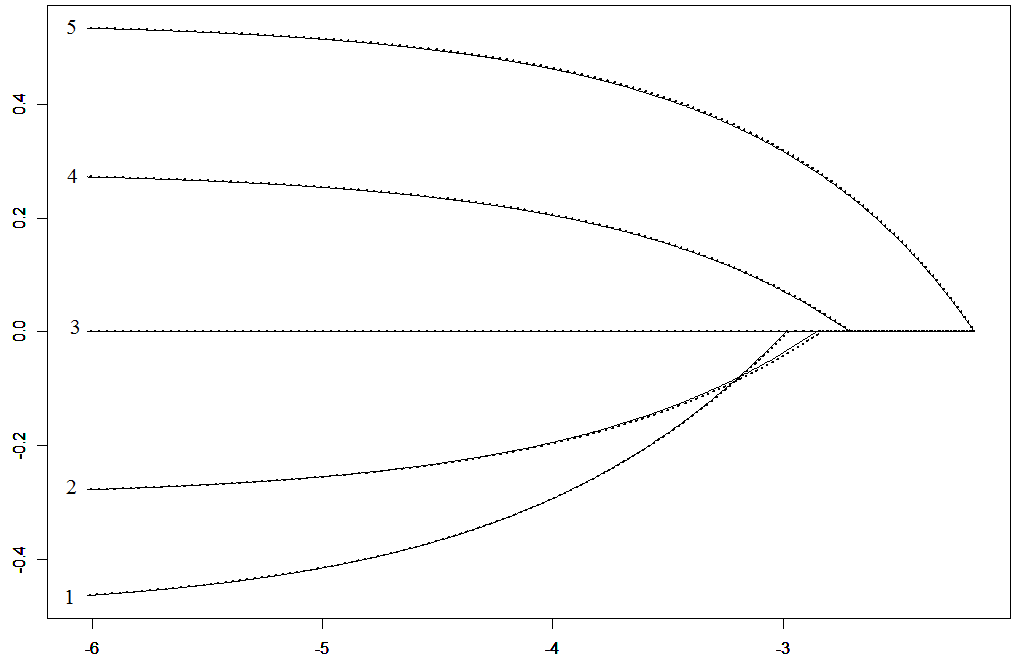} 
\caption{Example 1. Independence and class imbalance. \(\beta_0=-3\) (\(|n^0|/|n^1| = 17\)).}
\label{F1}
\end{figure}

\begin{figure}[!ht]
\center
\includegraphics[totalheight=0.3\textheight]{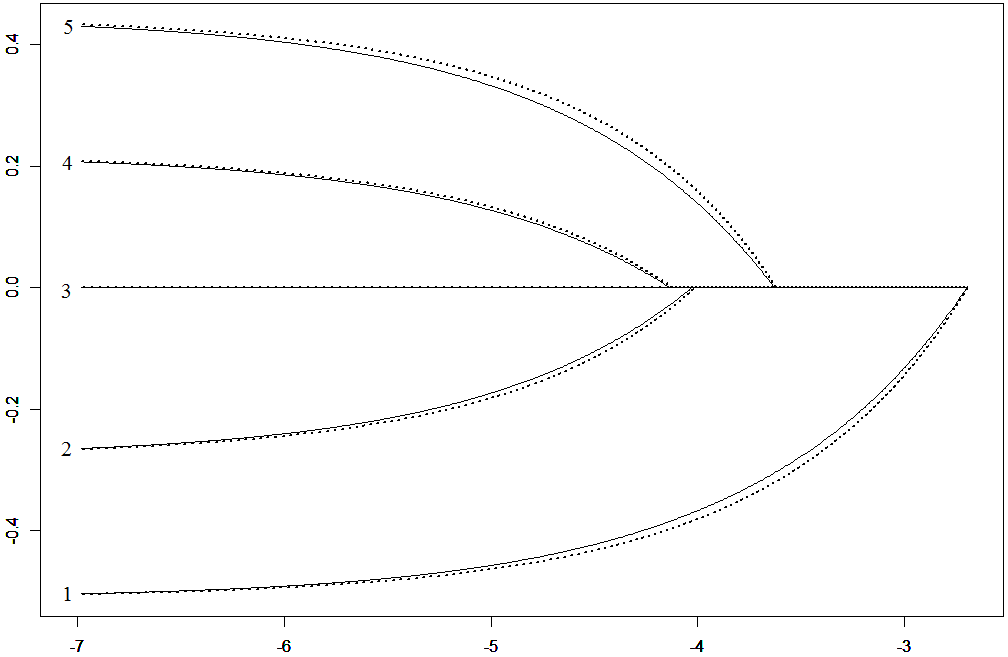} 
\caption{Example 2. Independence and no class imbalance. \(\beta_0 = 0\) (\(|n^0|/|n^1| = 1.5\)).}
\label{F1}
\end{figure}

\newpage

\begin{figure}[!ht]
\center
\includegraphics[totalheight=0.3\textheight]{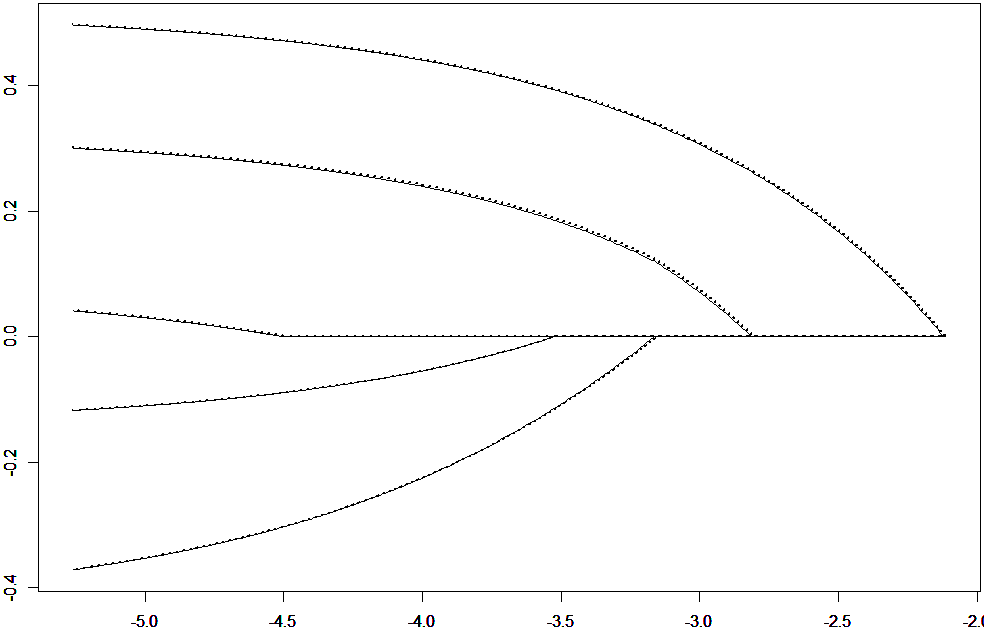} 
\caption{Example 3. Orthogonality and class imbalance. \(\beta_0=-3\) (\(|n^0|/|n^1| = 15\)).}
\label{F1}
\end{figure}

\begin{figure}[!ht]
\center
\includegraphics[totalheight=0.3\textheight]{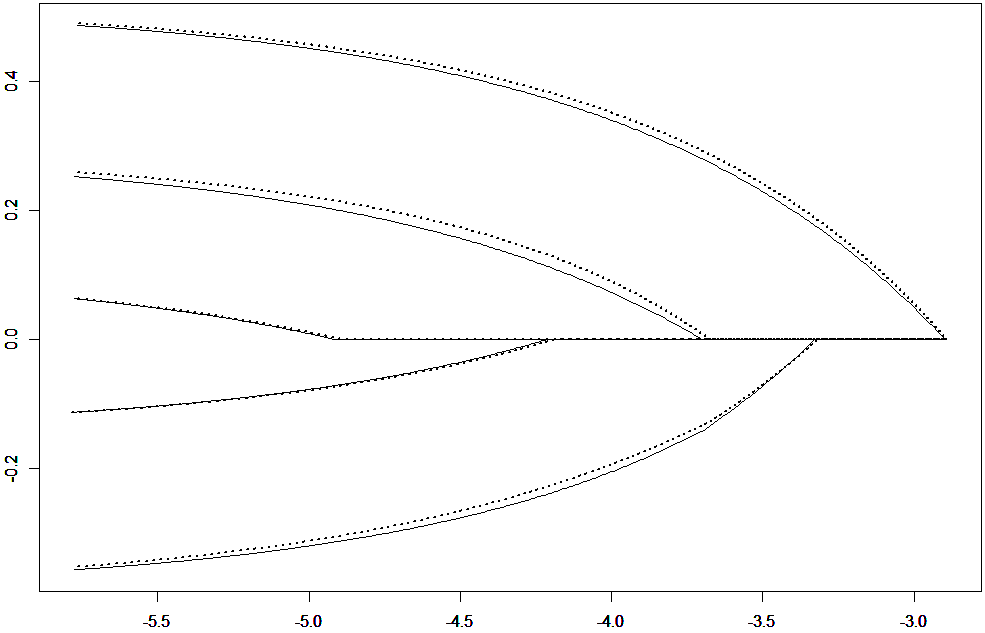} 
\caption{Example 4. Orthogonality and no class imbalance. \(\beta_0=0\) (\(|n^0|/|n^1| = 0.8\)).}
\label{F1}
\end{figure}

\newpage

\begin{figure}[!ht]
\center
\includegraphics[totalheight=0.3\textheight]{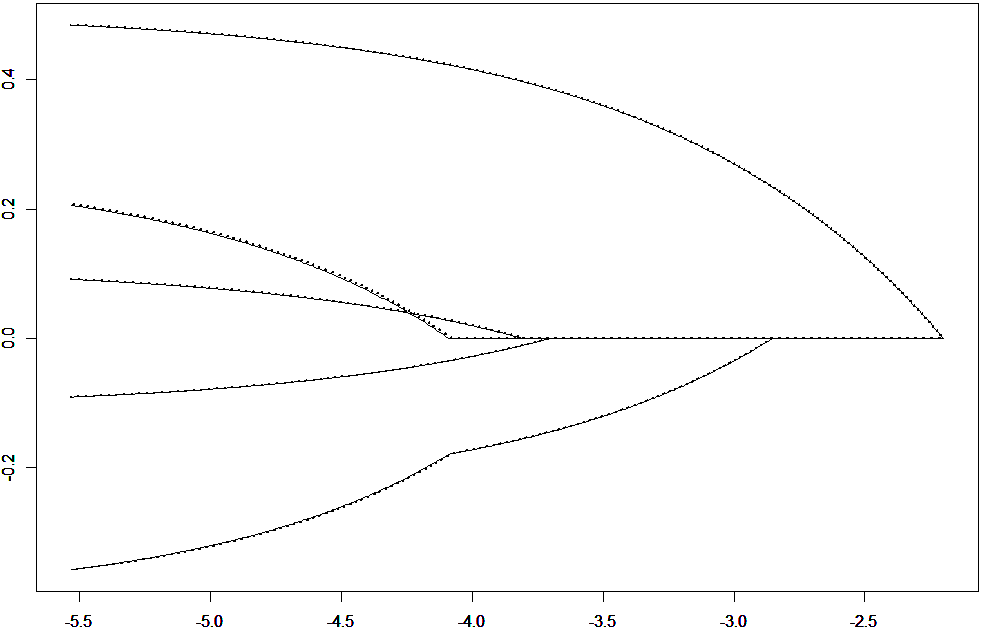} 
\caption{Example 5. Inclusion and class imbalance. \(\beta_0=-3\) (\(|n^0|/|n^1| = 15\)).}
\label{F1}
\end{figure}

\begin{figure}[!ht]
\center
\includegraphics[totalheight=0.3\textheight]{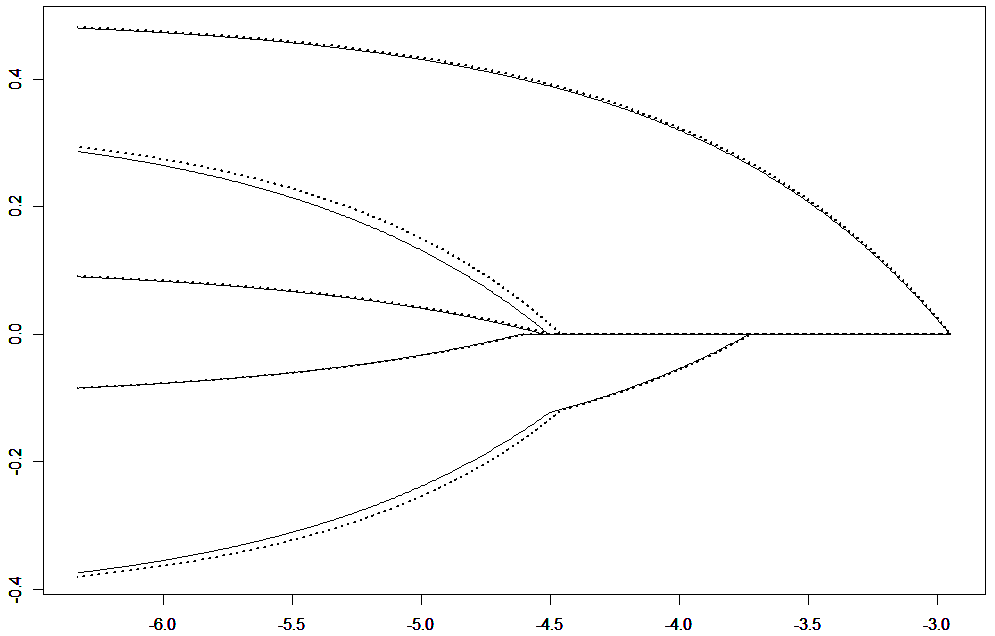} 
\caption{Example 6. Inclusion and no class imbalance. \(\beta_0=0\) (\(|n^0|/|n^1| = 0.8\)).}
\label{F1}
\end{figure}

\newpage

\end{document}